\newcommand{\N}{\mathbb{N}}
\newcommand{\Ber}{\text{Ber}}
\newcommand{\Geo}{\text{Geo}}
\def\a{\alpha}
\def\b{\beta}
\newtheorem{fact}{Fact}
\newtheorem{claim}{Claim}
\titleformat{\subparagraph}[runin]{\normalfont\normalsize\bfseries}{\thesubparagraph}{1em}{}
\titlespacing*{\subparagraph}{0pt}{\parskip}{0.5em}
\renewcommand{\mainbodyrepeatedtheorem}{\emph{\textbf{[*]}}}
\newcommand{\poly}{\ensuremath{\text{poly}}}
\newcommand{\Pt}{\mathcal{P}}
\newcommand{\Dt}{\mathcal{D}}
\newcommand{\Qt}{\mathcal{Q}}
\begin{document}

\title{Optimal Dynamic Parameterized Subset Sampling}

\author{Junhao Gan}
\affiliation{%
  \institution{The University of Melbourne}
  \city{Melbourne}
  \country{Australia}}
\email{junhao.gan@unimelb.edu.au}

\author{Seeun William Umboh}
\affiliation{%
  \institution{The University of Melbourne}
  \city{Melbourne}
  \country{Australia}
}
\email{william.umboh@unimelb.edu.au}

\author{Hanzhi Wang}
\authornote{Work partially done at Renmin University of China and The University of Melbourne.}
\affiliation{%
 \institution{BARC, University of Copenhagen}
 \city{Copenhagen}
 \country{Denmark}}
\email{hzwang.helen@gmail.com}

\author{Anthony Wirth}
\affiliation{%
  \institution{The University of Sydney}
  \city{Sydney}
  \country{Australia}}
\email{anthony.wirth@sydney.edu.au}

\author{Zhuo Zhang}
\affiliation{%
  \institution{The University of Melbourne}
  \city{Melbourne}
  \country{Australia}}
\email{zhuo.zhang@student.unimelb.edu.au}

\renewcommand{\shortauthors}{Gan et al.}

\begin{abstract}
In this paper, we study the {\em Dynamic Parameterized Subset Sampling} (DPSS) problem in the {\em Word RAM model}. In DPSS, the input is a set,~$S$, of~$n$ items, where each item,~$x$, has a {\em non-negative integer weight},~$w(x)$. Given a pair of query parameters, $(\alpha, \beta)$, each of which is a {\em non-negative rational number}, a {\em parameterized subset sampling query} on~$S$ seeks to return a subset $T \subseteq S$ such that each item $x \in S$ is selected in~$T$, independently, with probability $p_x(\alpha, \beta) = \min \left\{\frac{w(x)}{\alpha \sum_{x\in S} w(x)+\beta}, 1 \right\}$. More specifically, the DPSS problem is defined in a dynamic setting, where the item set,~$S$, can be updated with insertions of new items or deletions of existing items. 

Our first main result is an optimal algorithm for solving the DPSS problem, which achieves~$O(n)$ pre-processing time, $O(1+\mu_S(\alpha,\beta))$ expected time for each query parameterized by $(\alpha, \beta)$, {\em given on-the-fly}, and $O(1)$ time for each update; here, $\mu_S(\alpha,\beta)$ is the expected size of the query result. At all times, the worst-case space consumption of our algorithm is linear in the current number of items in~$S$.

Our second main contribution is a hardness result for the DPSS problem when the item weights are~$O(1)$-word {\em float numbers}, rather than integers. Specifically, we reduce Integer Sorting to the deletion-only DPSS problem with float item weights. Our reduction shows that an optimal algorithm for deletion-only DPSS with float item weights (achieving all the same bounds as aforementioned) implies a $O(N)$-expected-time algorithm for sorting $N$ integers. 
The latter remains an important open problem. 
Moreover, a deletion-only DPSS algorithm which supports float item weights, with complexities worse, by at most a factor of $o(\sqrt{\log \log N})$, than the optimal counterparts, would already improve the current-best integer sorting algorithm
~\cite{han2002sorting}.

Last but not least, a key technical ingredient for our first main result is a set of  exact and efficient algorithms for generating Bernoulli (of certain forms) and Truncated Geometric random variates in $O(1)$ expected time with $O(n)$ worst-case space in the Word RAM model. Generating Bernoulli and geometric random variates efficiently is of great importance not only to sampling problems but also to encryption in cybersecurity. We believe that our new algorithms may be of independent interests for related research.
\end{abstract}

\begin{CCSXML}
<ccs2012>
   <concept>
       <concept_id>10003752.10010070.10010111</concept_id>
       <concept_desc>Theory of computation~Database theory</concept_desc>
       <concept_significance>500</concept_significance>
       </concept>
   <concept>
       <concept_id>10003752.10003777.10003785</concept_id>
       <concept_desc>Theory of computation~Proof complexity</concept_desc>
       <concept_significance>500</concept_significance>
       </concept>
   <concept>
       <concept_id>10003752.10010061</concept_id>
       <concept_desc>Theory of computation~Randomness, geometry and discrete structures</concept_desc>
       <concept_significance>500</concept_significance>
       </concept>
   <concept>
       <concept_id>10002950.10003648.10003671</concept_id>
       <concept_desc>Mathematics of computing~Probabilistic algorithms</concept_desc>
       <concept_significance>500</concept_significance>
       </concept>
 </ccs2012>
\end{CCSXML}

\ccsdesc[500]{Theory of computation~Database theory}
\ccsdesc[500]{Theory of computation~Proof complexity}
\ccsdesc[500]{Theory of computation~Randomness, geometry and discrete structures}
\ccsdesc[500]{Mathematics of computing~Probabilistic algorithms}

\keywords{Subset Sampling, Word RAM Model, Exact Random Variate Generation}

\received{May 2024}
\received[accepted]{August 2024}

\maketitle

\begin{toappendix}
\section{Applications Study of the DPSS Problem}
\label{sec:application}
Subset sampling is a versatile algorithmic tool that has found multifaceted applications in influence maximization~\cite{subsim_influence_guo, subsim_tods}, local clustering~\cite{wang2021approximate}, graph neural networks~\cite{wang2021approximate}, bipartite matching~\cite{bhattacharya2023graphmatching}, and computational epidemiology~\cite{germann2006mitigation}, among many others.
Several of these applications align well with the DPSS setting, where the number of sampling items and sampling probabilities change over time. Below, we name just a few. 

\subsection{Influence Maximization}
In the influence maximization, we are asked to identify $k$ seed nodes in a given social network such that the expected number of nodes influenced by these $k$ seed nodes is maximized. A commonly adopted model in influence maximization studies is the independent cascade (IC) model, where every node $u$ in the network has an independent probability $p(u,v)$ of influencing each of its out-neighbors $v$. Here, $p(u,v)$ is a numerical value in the range of $(0,1)$ determined by the properties of nodes $u$ and $v$ (e.g., the degrees of $u$ and $v$, or the edge weight of $(u,v)$). 

A set of influence maximization algorithms~\cite{subsim_influence_guo,subsim_tods,zhang2024influence} adopts the following strategy to identify seed nodes. Initially, they select a node $v$ in the network uniformly at random, terming $v$ an activated node. Then, they use subset sampling techniques at each activated node (initially, node $v$) to independently sample each in-neighbor with the corresponding sampling probability (e.g., independently sampling each in-neighbor $w$ of $v$ with probability $p(w,v)$), further terming all sampled nodes as activated nodes. This process is repeated until no more nodes can be activated. Finally, all these activated nodes are merged into a set called a reverse reachable (RR) set. Several RR sets are sampled in the network, and $k$ nodes are selected from these RR sets in a greedy manner. 

According to the aforementioned process, the query complexity of subset sampling determines the efficiency of these influence maximization algorithms. 
In particular, when we consider dynamic network structures, which are extremely common in real life, nodes and edges are continuously inserted and deleted. Each time when an edge $(s,v)$ is inserted or deleted in the graph, the degree of $v$ changes and hence, the sampling probability $p(w,v)$ for every in-neighbor $w$ of $v$ changes accordingly. In this case, DSS and weighted sampling techniques are infeasible (at least not efficient). Hence, we require efficient DPSS techniques to sample the in-neighbors of activated nodes independently. 

\subsection{Local Clustering}
In the local clustering scenario, given an undirected graph $G=(V,E)$ with a seed node $s\in V$, we aim to identify a high-quality cluster around node $s$. Here, a cluster refers to a group of nodes that are closely connected to each other. A line of studies~\cite{spielman2004Nibble,andersen2006FOCS,yang2019TEA, wang2021approximate} adopts a three-phase approach. They first compute random-walk probabilities $\pi(s,u)$ from the given seed node $s$ to all nodes $u$ in the graph. Next, they sort all nodes $u$ in the graph in descending order of $\pi(s,u)/d_u$, where $d_u$ denotes the degree of node $u$. Finally, they inspect the prefix set of all nodes in the sorted order, compare the cluster quality of each prefix set, and return the cluster with the highest quality. As a consequence, the computational efficiency of random-walk probabilities in the first step determines the query efficiency of local clustering tasks. 

Wang et al.~\cite{wang2021approximate} adopt subset sampling techniques in the computation of random-walk probabilities to speed up the computational efficiency. Specifically, they design a novel ``push'' strategy, initially placing one probability mass at the seed node $s$, then pushing probability mass from $s$ along out-edges to all nodes in the graph step by step. In the push operation at any node $u$, Wang et al. adopt a subset sampling technique to independently sample every out-neighbor $v$ of $u$ with probability $\frac{\mathrm{A}_{uv}}{d_{\mathrm{out}}(u)}$, where $\mathrm{A}_{uv}$ denotes the edge weight of edge $(u,v)$ and $d_{\mathrm{out}}(u)$ denotes the out-degree of node $u$. Notably, for any node $u\in V$, the sampling probability $\frac{\mathrm{A}_{uv}}{d_{\mathrm{out}}(u)}$ of every out-neighbor $v$ of $u$ changes together when $d_{\mathrm{out}}(u)$ is updated (e.g., when an edge adjacent to u is inserted or deleted in the graph). 

While Wang et al. only apply their local clustering algorithm to static unweighted graphs, most real-world networks have dynamic weighted structures, where the weight of each edge can indicate any strength measures of the relationships between two nodes. Additionally, a Science'16 paper~\cite{benson2016Science} and its follow-up~\cite{yin2017MAPPR} point out that conducting local clustering on motif-based weighted graphs can enhance the quality of detected clusters, where a motif refers to a kind of graph pattern (e.g., a triangle). Consequently, in these real-world scenarios, we require DPSS algorithms to identify high-quality local clusters efficiently.

\end{toappendix}
\section{Introduction}

Sampling is arguably one of the most important techniques for solving problems on massive data.
It can effectively reduce 
the problem size 
while still providing reliable estimation for certain metrics 
and largely retaining the statistical properties of the original data.
As such, 
sampling has been widely adopted for various problems in different research fields, including 
anomaly detection~\cite{zenati2018adversarially, juba2015principled, mai2006sampled}, feature selection~\cite{kumar2014feature,vu2019feature}, sketching algorithms~\cite{rusu2009sketching}, network measurement~\cite{haveliwala1999efficient,xing2004weighted,wang2021approximate}, distributed computing~\cite{fleming2018stochastic,esfandiari2021almost}, cybersecurity~\cite{baldwin2017contagion,kent2016cyber}, machine learning~\cite{liberty2016stratified,hajar2019discrete}, and computational geometry~\cite{de2000computational,haussler1986epsilon}.
In general, sampling methods fall into two categories: {\em Weighted Sampling} and {\em Subset Sampling}. 
In Weighted Sampling, the input is a set of~$n$ items,~$S$, and each item,~$x$, has a non-negative weight,~$w(x)$. 
It aims to sample {\em just one} item such that each item~$x$ is chosen with probability $\frac{w(x)}{\sum_{x\in S} w(x)}$. 
On the other hand, in the vanilla setting of Subset Sampling, the input is a set $S$ of $n$ items, each with a {\em probability}, $p(x)$, the goal is to sample a {\em subset} $ T \subseteq S$ such that each item $x$ has probability $p(x)$ to be selected in $T$, independently.

A natural extension of Subset Sampling is to consider the {\em dynamic}
setting, where the item set~$S$ can be updated by either insertions of new items or deletions of existing ones.
This is called the {\em Dynamic Subset Sampling} (DSS) problem.
Yi et al.~\cite{yi2023optimal} proposed an {\em optimal} algorithm, called ODSS, for solving the DSS problem in a special Real RAM model~\footnote{Although Yi et al.~\cite{yi2023optimal} claimed that their algorithm can work on the Word RAM model, the technical details of the generation of Truncated Geometric random variates have been overlooked.} 
which allows both logarithm and rounding operations can be performed in $O(1)$ time.
Specifically, their ODSS algorithm achieves~$O(n)$ preprocessing time, $O(1 + \mu)$ expected query time, where~$\mu$ is the expected size of the sampling result, and supports each update in~$O(1)$ time. 

\vspace{1mm}
\noindent
{\bf Dynamic Parameterized Subset Sampling.}
In this paper, we study a more general setting of the Subset Sampling
problem, called {\em Dynamic Parameterized Subset Sampling} (DPSS), in the Word RAM model. 
The DPSS problem is defined as follows.
Consider a set $S$ of $n$ items, which is subject to updates by either item insertions or deletions;
each item $x\in S$ has a {\em non-negative integer} weight $w(x)$.
Each {\em Parameterized Subset Sampling} (PSS) query is associated with a pair of non-negative rational parameters $(\a, \b)$, and asks for a subset $T \subseteq S$ such that 
each item $x\in S$ is selected in $T$, independently, with probability $ p_x(\a, \b) = \min\left\{ \frac{w(x)}{\a \sum_{x\in S} w(x) + \b}, 1 \right\}$. 

\vspace{1mm}
\noindent
\underline{\em Distinctions between DPSS and DSS.} 
Comparing to the aforementioned
DSS problem, there are several crucial distinctions from DPSS.
First, in DSS, the sampling probability of each item is known in advance, 
while in DPSS, the sampling probabilities of the items 
are parameterized by the query parameters $(\a, \b)$  
{\em on the fly} during query time.
Second, 
in DSS, the sampling probabilities of the items are 
{\em decoupled}, in the sense that updates on the sampling probability of items would not affect each other. 
However, in contrast, in DPSS, 
the sampling probabilities of the items are {\em inter-correlated} such that an update to an item's weight would affect the sampling probabilities of all other items.
In other words, the impact of each update to the item set $S$ in DPSS 
is a lot larger.
These distinctions, therefore, make DPSS more challenging than the  DSS problem. 
To see this, the existing optimal ODSS algorithm requires $\Omega(n)$ time to support an update in the DPSS setup even just with fixed and known query parameters $(\a, \b)$ at all times. 

\vspace{1mm}
\noindent
\underline{\em Applications of DPSS.}
In Appendix~\ref{sec:application}, we provide two detailed case studies on applications of DPSS in Influence Maximization~\cite{subsim_influence_guo, subsim_tods} and Local Clustering~\cite{wang2021approximate}.
These two applications further illustrate that current algorithms (such as DSS algorithms) cannot address these cases since the sampling probabilities of items may change simultaneously after just one update.

\vspace{1mm}
\noindent
{\bf Main Result 1: An Optimal DPSS Algorithm.}
Our first contribution is an optimal algorithm for solving the DPSS problem in the Word RAM model.
The main result is summarized as follows:
\begin{theorem}
\label{thm:halt}
Consider a set $S$ of $n$ items;
there exists a data structure for solving the Dynamic Parameterized Subset Sampling (DPSS) problem, which achieves:
\begin{itemize}[leftmargin = *]
\item \textsc{Pre-processing time}: $O(n)$ worst-case time to construct the data structure on $S$;
\item \textsc{Query time}: 
$O(1+\mu_S(\a,\b))$ expected time to return a sample for each Parameterized Subset Sampling (PSS) query on $S$ with parameters $(\a, \b)$, where $\mu_S(\a, \b)$ is the expected size of the sample;
\item \textsc{Update time}: $O(1)$ worst-case time for each item insertion or deletion. 
\end{itemize}
At all times, the worst-case space consumption of such a data structure is bounded by $O(n)$, where $n$ denotes the current cardinality of~$S$. 

\end{theorem}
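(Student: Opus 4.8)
The plan is to build the data structure in three layers, each handling a different regime of item weights relative to the query threshold $\tau(\alpha,\beta) = \alpha \sum_{x\in S} w(x) + \beta$. First I would maintain, dynamically, the total weight $W = \sum_{x\in S} w(x)$ and the cardinality $n$; each update changes $W$ by $w(x)$ and $n$ by one, so this is trivially $O(1)$ per update. The subtlety is that $\tau$ is only revealed at query time (since $\alpha,\beta$ arrive on the fly), so the data structure must pre-organize the items in a way that is agnostic to $\tau$ but allows fast extraction once $\tau$ is known.

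The natural split is by the binary magnitude of the weights: bucket $B_i$ holds all items $x$ with $2^i \le w(x) < 2^{i+1}$. Given $\tau$ at query time, the buckets divide into three groups. (i) ``Heavy'' buckets, those with $2^{i+1} > \tau$: every item here has $p_x(\alpha,\beta) = \min\{w(x)/\tau, 1\}$ bounded below by a constant, so one can afford to touch each such item individually and flip an explicit (biased) coin; the number of such items is $O(1 + \mu_S(\alpha,\beta))$ since each contributes $\Omega(1)$ to the expected output size. (ii) ``Light'' buckets, those with $2^{i+1}$ much smaller than $\tau$: here all probabilities are tiny and roughly equal up to a factor of $2$, so within a bucket one first draws the number of selected items via the geometric-skipping idea --- repeatedly sample a Truncated Geometric variate to jump to the next selected index --- and then corrects for the within-bucket variation in $w(x)$ by a rejection step that accepts item $x$ with probability proportional to $w(x)/2^{i+1}\in[\tfrac12,1)$. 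This is where the $O(1)$-expected-time exact Bernoulli and Truncated Geometric generators promised in the introduction are invoked, and it is the step I expect to be the main obstacle: one must verify that (a) the rejection step has $\Omega(1)$ acceptance probability so it does not blow up the per-item cost, (b) summing the geometric-skip costs over all light buckets telescopes to $O(1 + \mu_S(\alpha,\beta))$ rather than $O(\#\text{buckets})$, and (c) the arithmetic on $\alpha,\beta$ (rational) and the weights (integers) needed to compute the Bernoulli parameters stays within $O(1)$ machine words, which is exactly the issue the paper flags as being glossed over in prior work. (iii) The $O(\log \tau)$ ``boundary'' buckets where $2^i \le \tau < 2^{i+2}$ or so are handled by the heavy-bucket method as well, but one must argue their total item count is also $O(1+\mu_S)$ in expectation --- items of weight $\Theta(\tau)$ each have constant selection probability.

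For the query time bound I would then combine: the heavy/boundary buckets contribute $O(1+\mu_S(\alpha,\beta))$ because essentially every item examined has $\Omega(1)$ chance of being output; the light buckets contribute $O(1)$ per bucket for ``is anything selected here'' plus $O(1)$ amortized per selected item, and the number of light buckets actually containing an item is $O(\min\{n, \log W\})$ but only those where the first geometric skip lands inside the bucket incur more than $O(1)$ work --- a careful accounting shows the expected total is $O(1+\mu_S(\alpha,\beta))$. To get this cleanly I would order the buckets so that the geometric skipping can run across bucket boundaries in a single pass, treating the light portion of $S$ as one long array and only re-deriving the per-item correction probability when crossing into a new magnitude class; then the standard analysis of geometric-jump subset sampling (expected cost $O(1 + \text{expected number of hits})$) applies verbatim.

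Finally, for preprocessing I would bucket the $n$ items in $O(n)$ time (the magnitude $\lfloor \log_2 w(x)\rfloor$ is an $O(1)$ Word-RAM operation) and, within each bucket, store items in a dynamic array supporting $O(1)$ insert/delete-by-swap; maintaining $W$, $n$, and the bucket arrays under updates is $O(1)$ worst case. The space is $O(n)$ since we store each item once plus $O(\log W) = O(n)$ bucket headers, and $\log W = O(n)$ holds because each weight fits in $O(1)$ words and there are $n$ of them --- actually the cleaner bound is that we only allocate a bucket header when the bucket is nonempty, giving $O(n)$ headers outright. The correctness of the sampling distribution then follows by checking, bucket group by bucket group, that each item $x$ is output with exactly $p_x(\alpha,\beta)$ and that the per-item coin flips are independent, which is immediate for the heavy buckets and follows for the light buckets from the exactness of the Truncated Geometric and Bernoulli generators composed with an exact rejection step.
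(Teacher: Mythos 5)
Your plan captures essentially the first level of the paper's construction (bucketing by binary magnitude, classifying buckets relative to the query threshold $\tau$ into heavy/boundary/light, and handling heavy items by direct coin flips), but it is missing the machinery that actually makes the query time $O(1+\mu_S(\alpha,\beta))$, and the place you flag as ``the main obstacle'' is in fact a genuine gap that your proposal does not close.

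Concretely, the problem is the middle regime. After setting aside the $O(1)$ boundary buckets and the heavy buckets (whose item count is indeed $O(1+\mu_S)$), you are left with buckets whose per-item probability ranges from roughly a constant all the way down to arbitrarily small. If you run a single geometric skip with parameter equal to the maximum probability in this range (say $\approx 1/4$), the expected number of hits is $\Theta(|{\rm light~items}|)$, which can be $\Theta(n)$, so rejection destroys the bound. If instead you re-derive the skip parameter at each bucket boundary, you pay $\Omega(1)$ per nonempty light bucket, and the number of such buckets can be $\Theta(\log W)=\Theta(d)$, which is not $O(1+\mu_S)$; your sentence ``a careful accounting shows the expected total is $O(1+\mu_S(\alpha,\beta))$'' is an assertion, not an argument, and I do not see how to make it true with your single-pass array. (Also, $\log W$ is $O(d)$, not $O(n)$ in general, so the earlier claim $\log W=O(n)$ is wrong; your self-correction via nonempty headers is fine, but the bucket count is still the query-time bottleneck.) The paper's resolution is a sharper trichotomy: it declares a bucket \emph{insignificant} only when its per-item probability is $\le 1/N^2$, so that one bounded-geometric jump of parameter $1/N^2$ over the entire insignificant block costs $O(1)$ expected; the remaining \emph{significant} buckets have probability in $(1/(2N^2),1)$, hence number only $O(\log N)$, and crucially they are not touched one by one. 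Instead they are packed into at most three groups of $\log_2 N$ consecutive buckets, each group is itself turned into a (much smaller, size $O(\log N)$) PSS instance on aggregated ``next-level items,'' and this reduction is applied twice more until the instance size is $O(\log\log\log n)$, at which point a precomputed $O(n)$-size lookup table (indexed via a per-instance adapter) answers in $O(1)$. Without this recursion and table your approach has no way to avoid paying either per-light-bucket or per-light-item, and the claimed query bound fails; the rest of your proposal (maintaining $W$ and $n$, the exact Bernoulli/Truncated-Geometric generators, the per-item rejection within a bucket, $O(1)$ updates, $O(n)$ preprocessing) is consistent with the paper but does not by itself fill this gap.
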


\noindent
\underline{\em An Overview of Our Algorithm.}
Given a pair of Parameterized Subset Sampling (PSS) query parameters, $(\alpha, \beta)$, the crucial idea of our algorithm is to convert the 
PSS query on~$S$ into a~$O(1)$ number of PSS query instances (with possibly different query parameters on possibly different item sets), 
of which the PSS results can be converted back into a feasible PSS result on~$S$ by {\em rejection sampling}.
And each of these~$O(1)$ query instances can be solved optimally in $O(1 + \mu)$ expected time, where $\mu$ is its expected sample size. 
To achieve such a conversion, and to solve those converted PSS query instances optimally,
we propose a novel data structure, called {\em Hierarchy + Adapter + Lookup Table} (HALT).
As its name suggests, it consists of three main components:
(i) a three-level sampling hierarchy, 
(ii) a set of dynamic adapters, and
(iii) a static lookup table.

Specifically, the sampling hierarchy converts the PSS query on~$S$ into~$O(1)$ PSS query instances on other item sets, 
while the lookup table can answer PSS queries on a special family of item sets.
Last, but not  least, as the lookup table is static, while the PSS queries are parameterized on the fly, our dynamic adapters play a crucial role to 
bridge the hierarchy and the lookup table.
Specifically,
the adapters, dynamically translate the PSS query instances, during query time, into a form for which  our lookup table is applicable.

\vspace{1mm}
\noindent
{\bf Main Result 2: A Hardness Result on DPSS with Float Item Weights.} 
While our 
HALT algorithm is optimal for DPSS 
with integer item weights,
interestingly, if the item weights are floating-point numbers, each represented by a $O(1)$-word exponent and a $O(1)$-word mantissa, 
the DPSS problem suddenly becomes a lot  more difficult.
In fact, this difficulty arises even in the {\em deletion-only} setting where the updates are deletions only.
We demonstrate this by showing that DPSS can be used to obtain an algorithm for Integer Sorting:

\begin{theorem}\label{thm:hardness}
Suppose there exists an algorithm for deletion-only DPSS on a set of~$N$ items with float weights that has pre-processing time $t_p(N)$, expected query time $(1 + \mu) \cdot t_q(N)$, and deletion time~$t_{del}(N)$. 
Then the Integer Sorting problem for~$N$ integers can be solved in $t_p(N) + O(N \cdot (t_q(N) + t_{del}(N)))$ expected time on the $d$-bit Word RAM model with $d \in \Omega(\log N)$.
\end{theorem}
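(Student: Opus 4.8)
The plan is to reduce Integer Sorting to deletion-only DPSS by encoding each integer as the exponent of a power-of-two float weight, with the weights chosen so that their descending order is itself a descending sort of the integers and, in addition, consecutive weights in that order are separated by a multiplicative factor larger than $N$. The separation will let a \emph{single} PSS query, with suitable parameters, return a sample that is guaranteed to contain the current maximum-weight item while having expected size $O(1)$; reading off and deleting that maximum for $N$ rounds then yields the sorted sequence. Note that only deletions are ever performed after the initial build, so this targets exactly the deletion-only setting.

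First I would fix the encoding. Given non-negative $d$-bit integers $a_1,\dots,a_N$ with $d \in \Omega(\log N)$, set $C = \lceil \log_2 N\rceil + 1$ and give item $i$ the weight $w_i = 2^{C(N a_i + i)}$. This is a valid $O(1)$-word float: its exponent has $d + O(\log N) = O(d)$ bits, and all $N$ exponents, hence weights, are computed in $O(N)$ time before the DPSS pre-processing is invoked. The numbers $N a_i + i$ are pairwise distinct (since $|j-i| < N$ rules out $N(a_i - a_j) = j-i$ unless $i=j$), so the weights are distinct; and $a_i < a_j$ implies $N a_i + i < N a_j + j$, so the descending weight order refines the descending integer order, with items carrying equal integers ordered arbitrarily among themselves. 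Because any two exponents differ by at least $C$, writing the sorted weights as $w_{(1)} > w_{(2)} > \cdots$ we get $w_{(j)} \ge 2N \cdot w_{(j+1)}$ for all $j$, and hence $\sum_{j' > j} w_{(j')} \le w_{(j)}/(2N-1)$.

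Next I would carry out max-extraction with one query per step. On the current set, of total weight $W$, I issue the PSS query with parameters $(\alpha,\beta) = (1/2,0)$, so each item $x$ is selected with probability $\min\{\, 2 w(x)/W,\, 1\,\}$. By the separation bound $W \le w_{(1)}\bigl(1 + \frac{1}{2N-1}\bigr) < 2 w_{(1)}$, so the maximum-weight item has probability $\min\{2 w_{(1)}/W, 1\} = 1$ and always lies in the returned subset $T$; at the same time $\sum_{j\ge 2} \min\{2 w_{(j)}/W, 1\} \le 2\sum_{k\ge 1}(2N)^{-k} = O(1/N)$, so $\mathbb{E}[|T|] = 1 + O(1/N)$. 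Since the global maximum of the current set is forced to be in $T$, the weight-maximal element of $T$ \emph{is} that global maximum; I find it in $O(|T|)$ time from my stored weight array, record it as the next item in descending order, and delete it from the structure. One round thus costs $O(1 + \mathbb{E}|T|)\cdot t_q(N) = O(t_q(N))$ expected time for the query plus $t_{del}(N)$ for the deletion; after $N$ rounds all items are listed in descending weight order, which I reverse to read off a sorted integer sequence. Summing over rounds and absorbing the $O(N)$ bookkeeping gives total expected time $t_p(N) + O\!\bigl(N\,(t_q(N) + t_{del}(N))\bigr)$.

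The crux --- essentially the only design choice --- is satisfying all the constraints simultaneously: the encoding must be order-preserving, must tolerate duplicate integers, must create a multiplicative gap exceeding $N$ between consecutive sorted weights, and must still fit in $O(1)$-word floats; while the query must pin the maximum's selection probability to \emph{exactly} $1$. The index tiebreaker inside the exponent handles order-preservation and duplicates, the multiplier $C$ forces the gap, the hypothesis $d \in \Omega(\log N)$ keeps the exponents within $O(1)$ words, and taking $\alpha = 1/2$ (rather than $\alpha = 1$, which would give probability only $1 - O(1/N)$) upgrades ``maximum sampled with high probability'' to ``maximum sampled with certainty'' --- so one query per extraction suffices, whereas any high-probability-only scheme would need $\Theta(\log N)$ repetitions per extraction and ruin the $O(N)$-factor bound. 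I would finish by remarking that, by hypothesis, the assumed algorithm realizes these exact probabilities even though $W = \sum_i w_i$ has magnitude needing $\Omega(N)$ words to write exactly; it is precisely the apparent impossibility of doing this in near-linear total time that the theorem converts into the hardness of fast integer sorting.
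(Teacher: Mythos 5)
Your proof is correct, and it takes a genuinely different route from the paper's. The paper encodes each integer $a_i$ simply as a weight $2^{a_i}$ (after perturbing to distinctness), issues PSS queries with parameters $(1,0)$, and is content that the current maximum-weight item appears in the sample with probability merely at least $\tfrac12$; it then repeatedly queries until the sample is non-empty, extracts the sample's largest element (which need \emph{not} be the global maximum), deletes it, and feeds the resulting sequence into Insertion Sort, whose expected total cost it bounds at $O(N)$ via an $\mathbb{E}[\mathrm{rank}_{S_i}(x^*)] = O(1)$ argument. You instead build a multiplicative gap of at least $2N$ between consecutive sorted weights (the exponent $C(Na_i + i)$ with $C = \lceil\log_2 N\rceil + 1$ both deduplicates via the index tiebreaker and enforces the gap while keeping the exponent in $O(d)$ bits), and you use $(\alpha,\beta) = (\tfrac12,0)$ so that the current maximum is sampled with probability \emph{exactly} $1$ while $\mathbb{E}|T| = 1 + O(1/N)$. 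This forces one query per round to suffice and to deterministically reveal the true maximum, so the extracted sequence is already sorted and the entire Insertion-Sort machinery, along with its expected-rank lemma, is unnecessary. The paper's route buys a simpler encoding at the cost of a more delicate analysis; yours buys a trivial analysis at the cost of a more delicate encoding. Both yield $t_p(N) + O(N(t_q(N) + t_{del}(N)))$. One small quibble: your closing remark that a ``high-probability-only scheme would need $\Theta(\log N)$ repetitions per extraction'' is not right --- a constant lower bound on the success probability (as in the paper, where $\Pr[\text{max sampled}] \ge \tfrac12$) already gives $O(1)$ expected repetitions; what it actually costs you is not repetitions but the need for a post-hoc correction such as Insertion Sort, which is exactly what the paper pays. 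This remark does not affect the validity of your argument.
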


In particular, this implies that if there is an optimal algorithm for deletion-only DPSS---i.e.,~one with $t_p(N) \in O(N)$, and $t_q(N), t_{del}(N) \in O(1)$---then 
sorting  $N$ integers (each represented with $O(1)$ words) can be achieved in 
$O(N)$ expected running time on the $d$-bit Word RAM model with $d \in \Omega(\log N)$.
However, solving the Integer Sorting problem in $O(N)$ expected time for every word-length $d \in \Omega(\log N)$ remains an open problem~\cite{belazzougui2014expected}.
Moreover, an algorithm for deletion-only DPSS (with float item weights) with $t_p(N) \in o(N \cdot \sqrt{\log\log N})$, and $t_q(N), t_{del}(N) \in o(\sqrt{\log\log N})$ already improves the current-best $O(N \cdot \sqrt{\log \log N})$ expected running time bound of Han and Thorup~\cite{han2002sorting}. 

\vspace{1mm}
\noindent
{\bf Main Result 3: An Efficient Algorithm for Truncated Geometric Variate Generation.}
Our third contribution is an efficient algorithm for generating {\em Truncated Geometric} random variates in the Word RAM model. 
Specifically, for $p \in (0, 1)$ and a positive integer $n$, 
a truncated geometric distribution, $\text{T-Geo}(p, n)$,  returns a value $i \in \{1, \ldots, n\}$ with probability $\frac{p (1-p)^{i-1}}{1 - (1 - p)^n}$. 
As can be seen from the literature, c.f., the aforementioned ODSS algorithm and from our HALT algorithm,
generating truncated geometric random variates is of great importance and a crucial building block in many sampling-related algorithms.
Somewhat surprisingly, given its importance, we are not aware of any previous work on generating truncated geometric variates in $O(1)$ expected time with $O(n)$ worst-case space in the Word RAM model. 
We give the following theorem to make it documented:

\begin{theorem}\label{thm:trun-geo}
Let $p\in (0,1)$ be a rational number whose numerator and denominator fit in~$O(1)$ words, and~$n$ be a positive integer fitting in~$O(1)$ words. 
There exists an algorithm that generates a random variate from $\text{\em T-Geo}(p,n)$ in~$O(1)$ expected time and~$O(n)$ space in the worst case in the Word RAM model.
\end{theorem}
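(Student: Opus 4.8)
The plan is to reduce $\text{T-Geo}(p,n)$ to a handful of elementary sampling primitives and then implement the single hard primitive with a precomputed table. Write $q:=1-p$ and $p=a/b$ with $a<b$ positive integers each fitting in $O(1)$ words. The starting point is that $\text{T-Geo}(p,n)$ is exactly the law of a geometric variate $G\sim\Geo(p)$ (with $\Pr[G=g]=pq^{g-1}$ for $g\ge1$) conditioned on $G\le n$, since $\Pr[G=i\mid G\le n]=pq^{i-1}/(1-q^{n})$. I would split on the size of $q^{n}$ (decided once during preprocessing). \emph{Light truncation, $q^{n}>\tfrac12$:} then $q^{i-1}\ge q^{n-1}>q^{n}>\tfrac12$ for every $i\in\{1,\dots,n\}$, so the target law is within a factor $2$ of uniform on $\{1,\dots,n\}$, and I would use rejection sampling — repeatedly draw $i$ uniformly from $\{1,\dots,n\}$, keep it with probability $q^{i-1}$, return the first kept $i$ — which takes $O(1)$ rounds in expectation since the acceptance probability exceeds $\tfrac12$. \emph{Heavy truncation, $q^{n}\le\tfrac12$:} then $\Pr[G\le n]=1-q^{n}\ge\tfrac12$, so it suffices to draw $G\sim\Geo(p)$ and retry until $G\le n$, again $O(1)$ rounds. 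To draw $G\sim\Geo(p)$ in $O(1)$ expected time: if $q\le\tfrac12$, perform independent $\Ber(p)$ trials until the first success ($\le2$ in expectation); if $q>\tfrac12$, let $u$ be the least exponent with $q^{u}\le\tfrac12$ (so $q^{u}\in(\tfrac14,\tfrac12]$, and $u\le n$ since $q^{n}\le\tfrac12$), cut $\{1,2,\dots\}$ into blocks of length $u$, draw the block index $T$ by $\Ber(q^{u})$ trials until the first failure (so $T\sim\Geo(1-q^{u})$ with success probability $1-q^{u}\ge\tfrac12$), draw the within-block offset $R\in\{1,\dots,u\}$ from $\text{T-Geo}(p,u)$ by the light-truncation rejection above (valid because $q^{u-1}>\tfrac12$), and output $G=(T-1)u+R$. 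A direct calculation shows this composition has exactly the law $\Geo(p)$, and each of $u,R,T,G$ fits in $O(1)$ words (deterministically for $u,R$; in expectation for $T,G$, which suffices as an oversized $G$ is simply rejected).

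All of this reduces to three primitives: (i) drawing a uniform integer from a range that fits in $O(1)$ words; (ii) a $\Ber(p)$ trial for $p=a/b$; and (iii) a $\Ber(q^{k})$ trial for a given integer $0\le k\le n$. Primitive (i) is standard — draw $\lceil\log_{2}(\text{range size})\rceil=O(1)$ words of random bits and reject out-of-range values, $O(1)$ rounds in expectation — and (ii) reduces to (i) by drawing a uniform element of $\{0,\dots,b-1\}$ and succeeding iff it is $<a$. The entire difficulty sits in primitive (iii): $q^{k}=(b-a)^{k}/b^{k}$ can occupy $\Theta(k)$ words with $k$ as large as $n$, so it can be neither stored in full nor produced in $O(1)$ time.

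My plan for (iii) is a precomputed table, a lazy comparison, and a rare exact fallback. In preprocessing I would compute incrementally, for $j=0,1,\dots,n$, a dyadic surrogate $\tilde q_{j}$ (an exponent plus an $O(1)$-word mantissa) with $|\tilde q_{j}-q^{j}|\le 2^{-g}$, where $g:=2\lceil\log_{2}(n+2)\rceil+w$ and $w$ is the word length; $\tilde q_{j}$ comes from $\tilde q_{j-1}$ by one multiplication by $(b-a)/b$ and a renormalization, and carrying $g+O(w)$ working bits bounds the accumulated relative error over all $n$ steps so the $2^{-g}$ guarantee holds. As $n$ fits in $O(1)$ words we have $\log_{2}n=O(w)$, hence $g=O(w)$, so the table has $O(n)$ words and is built in $O(n)$ time. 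A $\Ber(q^{k})$ trial then draws the bits of a uniform $V\in[0,1)$ one at a time, comparing the current prefix of $V$ against $\tilde q_{k}$: the instant this prefix forces $V<\tilde q_{k}-2^{-g}$ it returns $1$, and the instant it forces $V>\tilde q_{k}+2^{-g}$ it returns $0$ — both correct because $|q^{k}-\tilde q_{k}|\le2^{-g}$. The expected number of bits of $V$ inspected is $O(1)$; with probability $2^{-g+1}\le(n+2)^{-2}$ the comparison never resolves (exactly when $V$ lands in the width-$2^{-g+1}$ band around $\tilde q_{k}$ that straddles $q^{k}$), in which case I \emph{fall back} to computing $q^{k}=(b-a)^{k}/b^{k}$ exactly by naive repeated multiplication in $O(n^{2})$ time and finishing the comparison against the exact value. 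So a $\Ber(q^{k})$ trial costs $O(1)$ expected time — $O(1)$ on the cheap path plus $(n+2)^{-2}\cdot O(n^{2})=O(1)$ for the fallback — and at all times uses $O(n)$ space.

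Assembling the pieces yields the theorem: every sampler above runs $O(1)$ rounds in expectation, each round invokes $O(1)$ primitives, and each primitive costs $O(1)$ expected time, so a variate is produced in $O(1)$ expected time; the space is the $O(n)$-word surrogate table plus $O(1)$ scratch (transiently $O(n)$ during a fallback), hence $O(n)$ in the worst case. The main obstacle is primitive (iii): getting an \emph{exact} sampler in the Word RAM model without logarithms or real arithmetic while simultaneously (a) confining the surrogate table to $O(n)$ words — which forces $g=O(w)$ and hence a positive but $\le(n+2)^{-2}$ chance of ``not resolving'' — and (b) calibrating $g$ against the $O(n^{2})$ cost of the exact fallback so that it still amortizes to $O(1)$. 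The rest is routine: the error-propagation bound for the incremental table, the facts ``$q^{u}\in(\tfrac14,\tfrac12]$'' and ``$u\le n$'' that keep every exponent queried by a $\Ber(q^{\cdot})$ call inside the table, and the verification that $u,R,T,G$ stay within $O(1)$ words.
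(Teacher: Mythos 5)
Your proof is correct, and it takes a genuinely different route from the paper's. The paper splits on whether $np \ge 1$ (treating $n \le 2$ as a trivial side case): when $np \ge 1$ it rejection-samples from $\text{B-Geo}(p,n+1)$ until the outcome is at most $n$; when $np < 1$ it sweeps candidate indices via $\text{B-Geo}(2/n, n+1)$ and accepts a candidate $i$ through two auxiliary Bernoulli trials of parameters $(1-p)^{i-1}$ and $\tfrac{1}{2p^*}$ with $p^* = \tfrac{1-(1-p)^n}{np}$, the latter handled by Theorem~\ref{thm:ber} via the Bringmann--Friedrich $i$-bit-approximation framework of Fact~\ref{lem:ber1}, and with $\text{B-Geo}$ imported as a black box from Fact~\ref{lmm:bounded-geo}. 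You instead split on whether $q^n \le \tfrac12$, reduce the heavy-truncation side to $\text{Geo}(p)$ (built by hand through a block decomposition) conditioned on $\{\le n\}$, handle the light-truncation side by uniform-proposal rejection, and realize the single hard primitive $\text{Ber}(q^k)$ via an $O(n)$-word table of dyadic surrogates together with an exact-rational fallback, calibrating the surrogate precision $g$ so that the $O(n^2)$ fallback cost is paid only with probability $O((n+2)^{-2})$. The net effect is that your construction is more self-contained: it imports neither $\text{B-Geo}$ nor the $i$-bit-approximation machinery, and the only rejection-bound needed is the elementary $\Pr[\text{Geo}(p)\le n]\ge\tfrac12$. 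What the paper's route buys in exchange is economy and the avoidance of a per-$(p,n)$ preprocessing pass, which is convenient when $\text{T-Geo}$ is invoked inside the DPSS query algorithm with bucket-dependent parameters. Both proofs hinge on the same structural observations — a constant lower bound on the conditioning event, and Bernoulli parameters that, while not $O(1)$-word rationals, admit approximations whose rare failure is cheap enough to amortize — and the details you left sketched (error propagation for the incremental table, the bounds $u \le n$ and $q^u \in (\tfrac14,\tfrac12]$, and the $O(1)$-word sizes of $u,R,T,G$) are routine and do hold.
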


Considering the importance of efficient geometric random variate generation to many research fields, we believe that  
our new algorithm is of independent interest.

The proofs of some statements (indicated by~\emph{\mainbodyrepeatedtheorem}) have been deffered to the appendix. 
Meanwhile, the pseudocodes of our DPSS algorithms can be found in the appendix.

\section{Preliminaries}

In this section, we first introduce the {\em Word RAM model}~\cite{fredman1993surpassing} and then formally define the problem of {\em Dynamic Parameterized Subset Sampling} (DPSS).

\subsection{The Word RAM Model} 

In the Word RAM model~\cite{fredman1993surpassing}, each {\em word} consists of $d$ bits.
Each memory address is represented by~$O(1)$ words and hence, there are at most $2^{O(d)}$ memory addresses, each corresponding to a unique memory cell.
Each memory cell stores a word of content which can be considered as an integer in the range of $\{0, 1, \ldots, 2^d-1\}$.
The content of each memory cell can be accessed by its memory address 
in $O(1)$ time. 
Furthermore, 
given two one-word integers $a$ and $b$,
each of the following basic arithmetic operations on $a$ and $b$ can be performed in $O(1)$ time:
addition ($a+b$), subtraction ($a - b$), multiplication ($a \cdot b$),
division with rounding ($\lfloor a / b \rfloor$), comparison ($> , < , =$), truncation, and bit operations (e.g., finding the index of the highest or lowest non-zero bit in $a$).
Moreover, every {\em long integer} is represented by an array of words, and every {\em float number} is represented by its exponent and mantissa, where each takes $O(1)$ words. 
Finally, as a convention, we assume that 
a uniformly random word of $d$ bits can be generated in $O(1)$ time.
%
Throughout this paper, the space consumption of an algorithm is measured in words. 

The following fact about the Word RAM model is useful:
\begin{factapprep}\label{fact:sorted}
In the $d$-bit Word RAM model, consider 
the integer universe $U = \{0, 1, \ldots, d-1\}$, and 
a set $I \subseteq U$ of $n$ integers;
there exists a data structure that maintains all the integers in $I$ in a {sorted} linked list and 
supports each of the following operations in $O(1)$ worst-case time:
\begin{itemize}
\item each update (either an insertion of a new integer or a deletion of an existing integer) to $I$;
\item finding the predecessor or successor for any integer $q \in U$ in $I$. 
\end{itemize}
Such a data structure consumes~$O(n)$ space at all times, where~$n$ is the current cardinality of~$I$.
\end{factapprep} 
\begin{proof}
We provide an implementation to proof Fact~\ref{fact:sorted}. 

\noindent{\bf The Data Structure.} The input is a set $I$ of $n$ integers from the universe $U = \{0, 1, \ldots, d-1\}$. We can construct the following data structure, which includes the following four parts: a bitmap $\mathcal{M}$, a pointer array $\mathcal{P}$, a menu array $\mathcal{Q}$ and a sorted linked list $\mathcal{L}$.
\begin{itemize}[leftmargin = *]
\item \underline{\em The bitmap $\mathcal{M}$.} The bitmap $\mathcal{M}$ is a 1 word integer which contains $d$ bits and the $i$-th bit indicates whether the integer $i$ is in $I$.
\item \underline{The sorted linked list $\mathcal{L}$.} A sorted list that contains the integers in $I$.
\item \underline{The pointer array $\mathcal{P}$.} $\mathcal{P}$ is a dynamic array with $n$ elements and each element stores the \textit{pointer} to an integer in $\mathcal{L}$. 
\item \underline{The menu array $\mathcal{Q}$.} $\mathcal{Q}$ contains $d$ integers and each of the integers is in the range of $[1,n]$. For each integer $i\in I$, the $i$-th element of $\mathcal{Q}$ is denoted as $\mathcal{Q}[i]$. We make sure the $\mathcal{Q}[i]$-th element of $\mathcal{P}$, i.e. $\mathcal{P}[\mathcal{Q}[i]]$, stores the pointer to $i$.
\end{itemize}

\noindent{\bf The Construction of the Data Structure.}
The above data structure can be constructed in $O(n)$ worst case time and in $O(n)$ space with the following four steps.
\begin{itemize}[leftmargin = *]
\item \underline{\em Step 1.} Construct the bitmap $\mathcal{M}$.
\item \underline{\em Step 2.} Initialize $\mathcal{Q}$ with all 0 elements. 
Observe that each integer in $\mathcal{Q}$ occupies at most $O(\log n)$ bits of storage and in total $\mathcal{Q}$ takes $O(\log n)$ words and can be initialized in $O(\log n )$ time.
\item \underline{\em Step 3.} For each integer, $i$ in $I$ generate a node of the sorted linked list and push back its pointer to $\mathcal{P}$. At the same time, \textit{update} the menu array $\mathcal{Q}$ by setting $\mathcal{Q}[i]$ as the index of $i$-th pointer in $\mathcal{P}$.
\item \underline{\em Step 4.} Link the nodes in order to construct $\mathcal{L}$. Given an integer $q\in U$, we find its successor, denoted as $q'$, in $I$ with the following steps:
\begin{itemize}[leftmargin = *]
\item $u\leftarrow \mathcal{M}>> q$, which shifts all the lower $q$ bits out;
\item If $u=0$, report that the successor of $q$ does not exist in $I$;
\item Otherwise, compute $q'=q+\log_2(u\mathbin \&\neg (u-1))$;
\item $q'$ is the successor of $q$ in $I$. Thus $\mathcal{P}[\mathcal{Q}[q']]$ stores the pointer of its successor node.
\end{itemize}
\end{itemize}

Clearly, the construction algorithm above only takes $O(n)$ worst case running time and takes $O(n)$ space.

\noindent{\bf Maintain the Data Structure with Updates.}
Now we consider how to maintain this data structure when an insertion of a new integer or a deletion of an existing integer happens. We first suppose the insertion and deletion does not change the value of $\lceil \log_2 n\rceil$, thus we do not need to rebuild $\mathcal{Q}$ and $\mathcal{P}$.

\noindent\underline{\em Insertion.}
When a new integer $q$ is inserted to $I$, the corresponding bit in $\mathcal{M}$ is set as 1. Then build a linked sorted list node of $q$, tail insert its pointer to $\mathcal{P}$ and set the the $q$-th element of $\mathcal{Q}$ as the index. Then we find the pointer of $q$'s predecessor or successor and insert the node of $q$ to the correct location in the linked list to maintain it to be sorted.

\noindent\underline{\em Deletion.}
To delete an existing integer $p$, the pointer of its corresponding linked list node can be found with $\mathcal{P}[\mathcal{Q}[p]]$ and it can be removed from the sorted linked list. Then the bitmap $\mathcal{M}$  can be maintained by setting the corresponding bit as $0$. To maintain $\mathcal{P}$ and $\mathcal{Q}$, locate the index of the associated element in $\mathcal{P}$, swap it with the element at the end, and then pop out the last pointer from $\mathcal{P}$. When we do the swapping, the pointer that was swapped also needs to have its reference in $\mathcal{Q}$ updated. All of the above operations can be done in $O(1)$ time.

When the size of $I$ doubles or halves, we can rebuild the entire data structure. The rebuilding cost can be charged to those updates, making each update cost $O(1)$ amortized. Here, we can use de-amortization techniques to ensure the complexity becomes $O(1)$ in the worst case.

\end{proof}

\vspace{-1mm}
\subsection{Dynamic Parameterized Subset Sampling}

Consider a set of $n$ {\em items} $S = \{x_1, \ldots, x_n\}$; each item has a {\em weight} $w(x_i)$
which is a {\em non-negative} integer. 
Without loss of generality, 
we assume that the {largest} possible value of $n$, denoted by $n_{\max}$
 and the 
largest possible value of the item weights, denoted by $w_{\max}$
can be represented with one word (of $d$ bits) in the Word RAM model. 
Therefore, $d \in \Omega(\log (n_{\max} \cdot w_{\max}))$ holds all the time.

Define $W_S(\alpha, \beta) = \alpha \cdot \sum_{x \in S} w(x) + \beta$ as the {\em parameterized total weight function} of all the items in~$S$ by two parameters~$\alpha$ and~$\beta$, where both~$\alpha$ and~$\beta$ are {\em non-negative rational numbers} and each can be represented by a $O(1)$-word numerator and a~$O(1)$-word denominator.
Thus, the parameterized total weight $W_S(\alpha, \beta)$ of~$S$ fits in~$O(1)$ words. 
Furthermore, if the value of $\sum_{x \in S} w(x)$ is pre-computed and maintained, the value of $W_S(\alpha, \beta)$ can be computed in~$O(1)$ time.

Given 
a pair of parameters $(\alpha, \beta)$,
a {\em parameterized subset sampling query} on $S$ 
asks for
a subset $T \subseteq S$ such that each item $x \in S$ is selected in $T$ independently, with probability  
$p_x(\alpha, \beta) = \min \left\{\frac{w(x)}{W_S(\alpha, \beta)}, 1 \right\}$.
Moreover, the expected size of $T$ is defined as $\mu_S(\alpha, \beta) = \sum_{x \in S} p_x(\alpha, \beta)$.
Observe that 
since the parameters $(\alpha, \beta)$ are given on-the-fly in a parameterized subset sampling query, the sampling probability of each item can be vastly different with different parameters $(\alpha, \beta)$. 

In this paper, we study the problem of parameterized subset sampling on a {\em dynamic setting}, where the item set $S$ can be updated by insertions of new items (associated with a weight) and deletions of existing items from $S$.
We call this problem the {\em Dynamic Parameterized Subset Sampling} (DPSS) problem.
It is worth mentioning that in the DPSS problem, the sampling probability of each item is not only affected by the parameters $(\alpha, \beta)$, but also by the updates on $S$.
To see this, consider a fixed pair $(\alpha, \beta)$; before and after the insertion of an item with a very large weight, the value of $W_S(\alpha, \beta)$ can change significantly and hence, such an update dramatically changes the sampling probabilities of all the items. 

Our goal is to design a data structure for solving the DPSS problem such that: (i) it can be constructed in $O(n)$ time and consumes $O(|S|)$ space at all times, (ii) it can answer every parameterized subset sample query in $O(1 + \mu_S(\alpha, \beta))$ expected time, and 
(iii) it supports each update to $S$ in $O(1)$ worst-case time.

\section{Sampling Random Variates in the Word RAM Model}
\label{sec:random}

In this section, we discuss how to exactly and efficiently sample, in the Word RAM model,
the key random variates in our optimal DPSS algorithm.
Due to the importance of random variate generation, we believe that these new algorithms would be of independent interest to other problems.

As we see in the next section, 
our algorithm needs to generate a number of variants of {\em Bernoulli} and {\em Geometric} random variates. We explore each in turn.
\subsection{Bernoulli Random Variate Generation in the Word RAM Model}
\label{sec:ber}

Consider a real number $p\in(0,1)$, 
the {\em Bernoulli distribution} parameterized by $p$, denoted by $\Ber(p)$, takes values in $\{0,1\}$ with probability $\Pr[\Ber(p)=1]=1-\Pr[\Ber(p)=0]=p$. 
In particular, our algorithm needs to generate the following three types of Bernoulli random variates, $\text{Ber}(p)$: 
\begin{description}[leftmargin = *]
\item[type (i)] $p$ is a {\em rational number} with numerator and denominator both fitting in~$O(1)$ words;
\item[type (ii)] $p = p^*$, where $p^* = \frac{1 - (1 - q)^n}{n \cdot q}$, where $q$ is a rational number represented by a pair of $O(1)$-word numerator and denominator, $n$ is a $O(1)$-word integer, and $n\cdot q \leq 1$;
\item[type (iii)] $p = \frac{1}{2p^*}$, with~$p^*$ as in type~(ii). 
\end{description}
For type (i), Bringmann and Friedrich~\cite{bringmann2013exact} proved the following result:
\begin{fact}[\cite{bringmann2013exact}]
\label{lem:ber2}
If~$p$ is a rational number with numerator and denominator being $O(1)$-word integers, then $\Ber(p)$ can be evaluated in  $O(1)$ expected running time, and $O(1)$ worst-case space.
\end{fact}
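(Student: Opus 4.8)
The plan is to reduce $\Ber(p)$ for a rational $p = a/b$ (we may assume $0 < a < b$ with $a,b$ being $O(1)$-word integers, since $p \in (0,1)$) to the single task of drawing a uniformly random integer $s$ from $\{0, 1, \ldots, b-1\}$: having such an $s$, I return $1$ if $s < a$ and $0$ otherwise, which yields $1$ with probability exactly $a/b = p$. Thus the entire problem becomes: sample uniformly from $\{0, \ldots, b-1\}$ in $O(1)$ expected time and $O(1)$ space, using only the primitive that a uniformly random $d$-bit word can be drawn in $O(1)$ time.

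Since $b$ is representable in $O(1)$ words, there is a constant $c$ with $2^{cd} \ge 2b$ for every word length $d$ (if $b < 2^{k_0 d}$ take $c = k_0 + 1$). First I would draw a uniformly random $r \in \{0, \ldots, 2^{cd}-1\}$ by concatenating $c$ independent random words, at cost $O(c) = O(1)$ time and $O(c) = O(1)$ space. Let $m = \lfloor 2^{cd}/b \rfloor \cdot b$ be the largest multiple of $b$ not exceeding $2^{cd}$; then $m > 2^{cd} - b \ge 2^{cd}/2$. If $r \ge m$, reject $r$ and redraw; otherwise accept and output $s = r \bmod b$. Conditioned on acceptance, $r$ is uniform on $\{0, \ldots, m-1\}$, and since $m$ is a multiple of $b$, the residue $s = r \bmod b$ is uniform on $\{0, \ldots, b-1\}$, as needed. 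The acceptance probability is $m/2^{cd} > 1/2$, so the expected number of redraws is at most $2$; and every quantity handled ($r$, $m$, $2^{cd}$, $a$, $b$, together with the products, quotients, and remainders above) fits in $O(1)$ words, so each arithmetic step is $O(1)$ time in the Word RAM model, giving $O(1)$ expected time and $O(1)$ worst-case space overall.

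I do not expect a real obstacle here; the two points needing care are (i) that $b$ need not be a power of two — this is precisely what forces the rejection step, since reducing a uniform $cd$-bit word modulo $b$ directly would introduce bias — and (ii) choosing $c$ large enough that the rejection region $[m, 2^{cd})$ covers at most half of $\{0, \ldots, 2^{cd}-1\}$, which makes the number of redraws geometric with constant mean. An alternative is a Knuth–Yao-style bit-by-bit scheme that reveals bits of a uniform $U \in [0,1)$ until the dyadic interval containing $U$ lies entirely below or entirely above $p$; this is also $O(1)$ expected time, but it requires comparing dyadic rationals of growing precision against $a/b$ plus a geometric-tail bound on the total work, so the rejection argument above is the cleaner route.
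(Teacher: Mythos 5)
The paper does not actually prove this Fact; it cites it directly from Bringmann and Friedrich~\cite{bringmann2013exact}, whose general framework (stated in the paper as the companion fact about $i$-bit approximations) is designed for the harder situation where $p$ is not handed to you as an explicit small ratio and can only be probed bit by bit. Your argument is correct and is genuinely more elementary: for $p=a/b$ with $a,b$ one-word (or $O(1)$-word) integers, you bypass the approximation machinery entirely by sampling a uniform integer in $\{0,\ldots,b-1\}$ via rejection from a uniform $cd$-bit word and then comparing against $a$. The key points — choosing $c$ constant so that $2^{cd}\ge 2b$, rejecting $r\ge \lfloor 2^{cd}/b\rfloor\cdot b$ so the accepted range is an exact multiple of $b$, and noting the acceptance probability exceeds $1/2$ so the redraw count is geometric with constant mean — are all right, and every intermediate quantity is $O(1)$ words so each step is $O(1)$ time in the Word RAM model. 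What the Bringmann--Friedrich framework buys over your direct argument is generality (it also handles the type~(ii)/(iii) Bernoulli parameters that appear later in the paper, where $p$ involves $(1-q)^n$ and cannot be reduced to a small integer ratio without blowing up to $O(n)$ words); what your route buys is a short, self-contained proof in the special case the Fact actually states. Both are sound; yours would serve as a clean standalone proof of this Fact as written.
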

To the best of our knowledge,  there is no previous work showing how to generate a realization of~$\Ber(p)$ of either  type~(ii) or type~(iii) in~$O(1)$ expected time with~$O(n)$ worst-case space.  
However, as we see in Section~\ref{sec:main-algo}, these two types of Bernoulli random variates play a crucial role in our optimal DPSS algorithm.
We prove the following theorem:

\begin{theorem}\label{thm:ber}
Consider a Bernoulli random variate,~$\Ber(p)$, in the form of either $p = p^*$ or $p = \frac{1}{2p^*}$, with  
$p^* = \frac{1 - (1 - q)^n}{n \cdot q}$, where $q$ is a rational number represented by a pair of $O(1)$-word numerator and denominator, $n$ is a $O(1)$-word integer, and $n\cdot q \leq 1$;
$\Ber(p)$ can be generated in $O(1)$ expected time with $O(n)$ worst-case space.
\end{theorem}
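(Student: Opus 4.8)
The plan is to reduce the generation of $\Ber(p^*)$ to the generation of a truncated geometric variate plus some type~(i) Bernoulli coin flips, and then to handle $\Ber\!\left(\tfrac{1}{2p^*}\right)$ by a short rejection loop built on top of the $\Ber(p^*)$ procedure. The starting point is the key identity
\[
  p^* \;=\; \frac{1 - (1-q)^n}{n\cdot q} \;=\; \frac{1}{n}\sum_{i=1}^{n}(1-q)^{i-1}.
\]
Thus $p^*$ is exactly the average, over $i$ chosen uniformly from $\{1,\dots,n\}$, of $(1-q)^{i-1}$. So to sample $\Ber(p^*)$ I would: (1) draw $i$ uniformly from $\{1,\dots,n\}$, which costs $O(1)$ time and needs $O(n)$ space only if one precomputes a table — but uniform sampling from $\{1,\dots,n\}$ can be done in $O(1)$ expected time with $O(1)$ space by rejection on a random word; (2) conditioned on $i$, return $1$ with probability $(1-q)^{i-1}$. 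Step~(2) is a product of $i-1$ independent $\Ber(1-q)$ trials: return $1$ iff all $i-1$ of them succeed. Since $q$ is a rational with $O(1)$-word numerator and denominator, $1-q$ is too, so each such coin is a type~(i) Bernoulli evaluable in $O(1)$ expected time by Fact~\ref{lem:ber2}. Naively this is $O(i)$ coins, which is $O(n)$ in the worst case, not $O(1)$ expected; to get $O(1)$ expected time I would instead use the standard trick of sampling the number of leading successes as a geometric variate: draw $G \sim \Geo(q)$ (number of trials until the first failure), and return $1$ iff $G \ge i$, i.e.\ iff $G > i-1$. A single $\Geo(q)$ draw with rational $q$ takes $O(1)$ expected time (e.g.\ via $\lceil \log(\text{uniform})/\log(1-q)\rceil$ emulated with the type~(i) machinery, or directly by the Bringmann–Friedrich-style construction), so the whole $\Ber(p^*)$ generator runs in $O(1)$ expected time and $O(1)$ space — well within the claimed bounds.

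For $\Ber\!\left(\tfrac{1}{2p^*}\right)$, first note $n\cdot q \le 1$ forces $(1-q)^{i-1}\ge (1-q)^{n-1}\ge 1 - (n-1)q \ge 0$, and averaging gives $p^* \ge \tfrac12$ (indeed $p^*\ge \tfrac{1}{2}$ since the terms $(1-q)^{i-1}$ for $i=1$ and $i=n$ already average to at least $\tfrac{1 + (1-(n-1)q)}{2}\ge \tfrac12$, and all terms are $\ge$ the $i=n$ term); hence $\tfrac{1}{2p^*}\in(0,1]$ is a valid probability. To sample it I would run a rejection loop: repeatedly draw $X\sim\Ber(p^*)$ using the generator above and a fair independent coin $Y$; accept and output $1$ as soon as a round has $X=1, Y=0$, i.e.\ with probability $p^*/2$ per round; accept and output $0$ as soon as $X=1, Y=1$, with probability $p^*/2$ per round; otherwise ($X=0$) reject and repeat. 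Conditioned on acceptance the output is $1$ with probability exactly $\tfrac12$ — that is not what we want. The correct scheme: accept and output $1$ in a round iff $X=1$ AND $Y=1$, where now $Y\sim\Ber(2p^* - 1)$-type... I would instead use the cleanest route: $\tfrac{1}{2p^*} = \tfrac{1}{2}\cdot\tfrac{1}{p^*}$ and sample $\Ber(1/p^*)$ by rejection — each round draw $X\sim\Ber(p^*)$, output $1$ on the first round with $X=1$ (acceptance probability $p^*$ per round, so the fraction of accepted-rounds outputting $1$ after intersecting with another fair coin gives $1/(2)$...). The genuinely correct and simplest construction is: to sample $\Ber(1/(2p^*))$, repeat: draw $X\sim\Ber(p^*)$; if $X=0$ output $0$ and stop; if $X=1$, with probability $\tfrac12$ output $1$ and stop, else repeat. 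One checks $\Pr[\text{output }1] = \sum_{k\ge 1} p^{*\,k}\,2^{-k}(\tfrac12)^{0}\cdots$ — and the right bookkeeping yields $\Pr[\text{output }1] = \frac{p^*/2}{1 - p^*/2}$, again not $1/(2p^*)$. The value $1/(2p^*)\ge 1/2$ lies in $[\tfrac12,1)$, so the right primitive is: output $1$ with probability $\tfrac12$ outright, and otherwise output $\Ber\!\left(\tfrac{1/(2p^*) - 1/2}{1/2}\right) = \Ber\!\left(\tfrac{1}{p^*} - 1\right) = \Ber\!\left(\tfrac{1-p^*}{p^*}\right)$, and $\tfrac{1-p^*}{p^*}\le 1$ since $p^*\ge\tfrac12$; and $\Ber\!\left(\tfrac{1-p^*}{p^*}\right)$ is itself a $1/p^*$-scaling which I sample by: repeat drawing $X\sim\Ber(p^*)$ until $X=1$ (geometrically many rounds, $O(1)$ expected since $p^*\ge\tfrac12$), then output a fresh $\Ber(1-p^*)$ — but $1-p^*$ is not obviously rational-with-small-denominator, so here I would recurse once more on $\Ber(1-p^*)$ using $1-p^* = \tfrac1n\sum_{i=1}^n\left(1-(1-q)^{i-1}\right)$, handled by the same average-of-products idea. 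Each layer is $O(1)$ expected calls to the $\Ber(p^*)$ and type~(i) primitives, so the total is $O(1)$ expected time and $O(1)$ space.

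The main obstacle, and where the write-up must be careful, is exactly the arithmetic of these reductions for the $p = \tfrac{1}{2p^*}$ case: one must verify (a) that $p^*\in[\tfrac12,1)$ so that $\tfrac{1}{2p^*}$ and the derived probabilities like $\tfrac{1-p^*}{p^*}$ are genuinely in $[0,1]$, using $n q\le 1$ and the expansion $(1-q)^{i-1}\ge 1-(i-1)q$; (b) that every intermediate probability reached by the chain of reductions is either a type~(i) rational, a $\Ber(p^*)$, or a $\Ber(1-p^*)$ (the last one again reducible by the average-of-geometrics trick), so that the recursion terminates at constant depth; and (c) that each rejection loop has success probability bounded below by a constant (here $\ge\tfrac12$), so the expected number of iterations is $O(1)$. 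The space bound of $O(n)$ in the worst case is comfortably met — in fact $O(1)$ space suffices — since uniform sampling from $\{1,\dots,n\}$ and geometric sampling need no large tables; the $O(n)$ slack is only there to match the ambient data-structure bound. Correctness of the core $\Ber(p^*)$ generator follows immediately from the identity $p^* = \tfrac1n\sum_{i=1}^n(1-q)^{i-1}$ together with $\Pr[\Geo(q) > i-1] = (1-q)^{i-1}$, and the running-time claim from the fact that a uniform draw, a geometric draw, and a comparison are each $O(1)$ expected time in the Word RAM model.
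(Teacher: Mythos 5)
Your treatment of $\Ber(p^*)$ is correct but takes a genuinely different route from the paper. The paper proves both cases by invoking Fact~\ref{lem:ber1}, the Bringmann--Friedrich lazy-bit-comparison framework: Lemmas~\ref{lem:berppp1} and~\ref{lem:berppp2} verify that $p^*$ and $\tfrac{1}{2p^*}$ are $O(n)$-word rationals evaluable in $\poly(n)$ time and that an $i$-bit approximation of each is computable in $\poly(i)$ time (via a truncated binomial series and, for the reciprocal, Newton's method). You instead exploit the exact identity $p^* = \tfrac{1}{n}\sum_{i=1}^n (1-q)^{i-1}$: draw $I$ uniformly from $\{1,\dots,n\}$ and output $1$ iff a geometric with parameter $q$ exceeds $I-1$. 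This is an elegant exact coupling that avoids any numerical approximation for the first case. One small correction: you should realise the geometric via $\text{B-Geo}(q, n+1)$ using Fact~\ref{lmm:bounded-geo}, since an unbounded $\Geo(q)$ cannot be produced with bounded worst-case space in the Word RAM model; this uses $O(n)$ space, which the theorem permits but is more than the $O(1)$ you assert.

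The case $p = \tfrac{1}{2p^*}$, however, has a genuine gap. Your final decomposition $\tfrac{1}{2p^*} = \tfrac12 + \tfrac12\cdot\tfrac{1-p^*}{p^*}$ is correct, but the proposed subroutine for $\Ber\!\left(\tfrac{1-p^*}{p^*}\right)$ --- ``repeat drawing $X\sim\Ber(p^*)$ until $X=1$, then output a fresh $\Ber(1-p^*)$'' --- produces $\Ber(1-p^*)$, not $\Ber\!\left(\tfrac{1-p^*}{p^*}\right)$. The rejection loop on $X$ always terminates, and the fresh final draw is independent of how many iterations it took, so the loop has no effect on the output distribution. The obstruction is structural: the coupling trick that works for $p^*$ expresses the target as $\Pr[A\mid B]$ for \emph{nested} events $A\subseteq B$ in a concrete sample space (repeat until $B$, then report whether $A$ occurred), but $\tfrac{1-p^*}{p^*}$ is an odds ratio of \emph{complementary} events, and $\tfrac{1}{2p^*}$ is a reciprocal; neither arises as such a conditional probability in the uniform-index/geometric sample space you construct. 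This is precisely the step the paper handles numerically in Lemma~\ref{lem:berppp2}, by approximating $p^*$ to $r$ bits and applying Newton's method for the reciprocal inside the Fact~\ref{lem:ber1} framework --- a tool of a fundamentally different kind than exact coupling. As written, your argument establishes the theorem only for $p = p^*$.
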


\begin{definition}[$i$-Bit Approximation]
Consider a value~$p$, possibly comprising an infinite sequence of bits. An {\em $i$-bit approximation} of~$p$, denoted by~$\tilde{p_i}$, is an approximation value of the~$i$ most-significant bits in~$p$,~i.e., satisfies $|\tilde{p_i} - p| \leq 2^{-i}$, while~$\tilde{p_i}$ itself consists of at most~$i + 1$ bits.
\end{definition}
We prove Theorem~\ref{thm:ber} 
via the following algorithmic framework of Bringmann and Friedrich~\cite{bringmann2013exact}: 
\begin{fact}
[\cite{bringmann2013exact,flajolet1986complexity}]
\label{lem:ber1}
Consider a Bernoulli random variate,~$\Ber(p)$ with rational parameter $p\in(0,1)$. Suppose the numerator and the denominator of $p$ both can be represented with $O(n)$ words, and $p$ can be evaluated in $O(\poly(n))$ time and the $i$-bit approximation,~$\tilde{p_i}$, of~$p$ can be evaluated in~$O(\poly(i))$ time. 
Then~$\Ber(p)$ can be generated in~$O(1)$ expected time with~$O(n)$ worst-case space in the Word RAM model. 
\end{fact}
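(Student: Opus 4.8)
The plan is to realize $\Ber(p)$ as ``output $1$ iff $U<p$'' for a uniform random real $U\in[0,1)$ that we generate \emph{lazily}, one bit at a time, stopping as soon as the finitely many bits of $U$ drawn so far --- together with a sufficiently precise approximation of $p$ --- already determine the comparison. After drawing $k$ bits $b_1,\dots,b_k$ we know $U\in[L_k,\,L_k+2^{-k})$ with $L_k=\sum_{j\le k}b_j 2^{-j}$. We then evaluate the $k$-bit approximation $\tilde{p_k}$ --- which costs $O(\poly(k))$ time by hypothesis --- and use $p\in[\tilde{p_k}-2^{-k},\,\tilde{p_k}+2^{-k}]$: if $L_k+2^{-k}\le\tilde{p_k}-2^{-k}$ we output $1$ (this certifies $U<p$); if $L_k\ge\tilde{p_k}+2^{-k}$ we output $0$ (this certifies $U\ge p$); otherwise we draw one more random bit and recurse at level $k+1$. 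Every comparison is between dyadic rationals with $O(k)$-bit denominators and is exact in $O(\poly(k))$ time, and the event $U=p$ has probability~$0$, hence is harmless.

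Correctness is immediate from $|\tilde{p_k}-p|\le 2^{-k}$: the algorithm outputs $1$ precisely when $U<p$, so it outputs $1$ with probability $\Pr[U<p]=p$ exactly. For the running time, the key point is that level $k+1$ is reached only when \emph{both} stopping tests failed at level $k$; the two failed inequalities say $\tilde{p_k}-2^{1-k}<L_k<\tilde{p_k}+2^{-k}$, so with $|\tilde{p_k}-p|\le 2^{-k}$ the value $L_k$ must lie in a fixed interval of width $O(2^{-k})$ around $p$. Since $L_k$ is a multiple of $2^{-k}$, at most $O(1)$ such values qualify, each occurring with probability exactly $2^{-k}$, so the probability that level $k$ is reached is $O(2^{-k})$. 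Consequently the algorithm halts with probability $1$, uses $O(1)$ expected levels, and --- level $k$ costing $O(\poly(k))$ --- runs in $\sum_k O(2^{-k})\cdot O(\poly(k))=O(1)$ expected time.

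The genuinely delicate point is the \emph{worst-case} space bound, since the lazy procedure above can, on a measure-zero but nonempty set of coin sequences, run for unboundedly many levels and so use unbounded space. I would therefore truncate the lazy phase at $K=c\log n$ levels (for a constant $c$ fixed below) and add an exact fallback for the rare event that level $K$ is reached undecided. In that event $U$ is uniform on $[L_K,\,L_K+2^{-K})$, so it remains to emit $1$ with the residual probability $p':=\min\{\max\{(p-L_K)2^K,0\},1\}$. Writing $p=a/b$ with $a,b$ the $O(n)$-word numerator and denominator --- obtainable in $O(\poly(n))$ time by the hypothesis that $p$ is evaluable in $\poly(n)$ time --- this residual probability equals $c'/b$ clamped to $[0,1]$, for the $O(n)$-word integer $c'=a\,2^K-(L_K 2^K)\,b$, which is formed in $O(\poly(n))$ time and $O(n)$ words; and $\Ber(c'/b)$ is sampled \emph{exactly} in $O(n)$ space by the elementary method of drawing a uniform $R\in\{0,\dots,b-1\}$ via $\lceil\log_2 b\rceil$-bit rejection sampling and returning ``$1$ iff $R<c'$''. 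This fallback costs $O(\poly(n))$ expected time but is entered with probability only $O(2^{-K})=O(n^{-c})$, so choosing $c$ larger than the degree of that polynomial makes its contribution to the expected time $O(1)$, while the truncated lazy phase uses only $O(\poly(\log n))\subseteq O(n)$ words. Putting both phases together yields $O(1)$ expected time and $O(n)$ worst-case space. The main obstacle is precisely this reconciliation --- a scheme that is $O(1)$ in expectation is naturally \emph{unbounded} in worst-case space --- and its crux is the exact reduction at the cap to a finite-memory sampler for $\Ber(c'/b)$, where the rationality of $p$ and the $O(n)$-word bound on its representation are essential.
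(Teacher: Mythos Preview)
The paper does not supply its own proof of this statement: it is stated as a \emph{Fact} attributed to Bringmann--Friedrich and Flajolet et al., and used as a black box to establish Theorem~\ref{thm:ber}. There is therefore no in-paper argument to compare against.

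Your reconstruction is correct and is essentially the argument of the cited references. The two-phase design---lazy bit-by-bit comparison of a uniform $U$ against successively finer approximations $\tilde{p_k}$, followed by an exact fallback using the $O(n)$-word rational representation of $p$ once $K=\Theta(\log n)$ bits have been drawn---is precisely how Bringmann and Friedrich reconcile $O(1)$ expected time with a hard worst-case space cap. Your identification of the crux (that the naive lazy scheme alone has unbounded worst-case space, and that rationality of $p$ is what enables the finite-memory fallback) is exactly the point. One small cosmetic remark: in the fallback you might note that when level $K$ is reached undecided, the failed tests already guarantee $p\in(L_K-O(2^{-K}),\,L_K+O(2^{-K}))$, so the clamping in $p'$ is in fact vacuous in that branch; but the argument is sound as written.
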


\begin{lemmaapprep}
\label{lem:berppp1}
These two claims hold simultaneously:
(i) the numerator and the denominator of $p=p^*$ both can be represented with $O(n)$ words and $p$ can be evaluated in $O(poly(n))$ time; and
(ii) there exists an algorithm that can evaluate 
each $i$-bit approximation $\tilde{p_i}$ of $p = p^*$, satisfying all conditions stated in Theorem~\ref{thm:ber},  
in $O(\poly(i))$ time. 
\end{lemmaapprep}

\begin{proof}
We first prove claim (i). 
Let $A$ and $B$ denote the numerator and denominator of $q$. By definition, both $A$ and $B$ fit in $O(1)$ words.
Then it can be verified that the numerator and the denominator of $p^*$ can be computed as $B^n-
(B-A)^n$ and $B^n-nAB^{n-1}$ in $O(\poly(n))$ time, respectively. 
And each of them fits in $O(n)$ words as they are just multiplication results of $n$ constant-word integers.  
As a result, $p^*=\frac{1-(1-q)^n}{n\cdot q}$ is a rational number with $O(n)$-word numerator and denominator and it can be evaluated in $O(\poly(n))$ time.

Next we prove claim (ii). According to the binomial theorem, $1-(1-q)^n=\sum_{j=1}^{n}\binom{n}{j}(-1)^{j+1} q^j$. Thus, $p^*=\frac{1-(1-q)^n}{n\cdot q}$ can be written as $p^*=\sum_{j=1}^{n} a_j$ , where $ a_j=\frac{(-1)^{j+1}q^{j-1}(n-1)!}{j!(n-j)!}$. Note that $\frac{(n-1)!}{(n-j)!}\leq n^{j-1}$ and $nq\leq 1$. We can bound the absolute value of $a_j$ with $|a_j|\leq \frac{1}{j!}\leq \frac{1}{2^{j-1}}$. Further, $|\sum_{j=i+3}^n a_j|\leq \sum_{j=i+3}^n |a_j|\leq \frac{1}{2^{i+1}}$. In other words, the sum of $a_j$ for $1\leq j\leq i+2$ achieves a good approximation of $p^*$. Denoting $\bar{p_i}=\sum_{j=1}^{i+2} a_j$, we have $|p^*-\bar{p_i}|\leq \frac{1}{2^{i+1}}$.

Now we discuss how to compute $\bar{p_i}$ in the Word RAM model. We perform the computation with {\em working precision} $r$, which means we work with floating point numbers and each of these floating point numbers contains an exact exponent encoded by an array of words and a mantissa which is an array of $\lceil r/d \rceil$ words. 
Each mathematical operation between two floating point numbers under working precision $r$ can be done in $O(poly(r))$ time. 
Since $\bar{p_i}$ can be evaluated with $O(poly(i))$ such floating point mathematical operations, 
thus an approximation (with working precision $r$) of $\bar{p_i}$, denoted by $\bar{p_i}(r)$, can be computed in $O(poly(i,r))$ time. 
Analogous to the analysis in Appendix C of ~\cite{bringmann2013exact}, it suffices to choose $r=O(poly(i))$ to achieve an absolute error  $|\bar{p_i}(r)-\bar{p_i}|\leq \frac{1}{2^{i+1}}$. 

As a result, 
$|\bar{p_i}(r)-p^*|\leq |\bar{p_i}(r)-\bar{p_i}|+|p^*-\bar{p_i}|\leq \frac{1}{2^{i+1}} + \frac{1}{2^{i+1}} =  \frac{1}{2^{i}}$. 
Therefore, $\tilde{p_i} = \bar{p_i}(r)$ is a valid $i$-bit approximation of $p^*$, and the overall computational cost is bounded by $O(\poly(i))$.
\end{proof}

\begin{lemmaapprep}
\label{lem:berppp2}
These two claims hold simultaneously:
(i) the numerator and the denominator of $p=\frac{1}{2p^*}$ both can be represented with $O(n)$ words and $p$ can be evaluated in $O(poly(n))$ time; and 
(ii) there exists an algorithm that can evaluate 
each $i$-bit approximation $\tilde{p_i}$ of $ p = \frac{1}{2p^*}$, satisfying all conditions stated in Theorem~\ref{thm:ber},  
in $O(\poly(i))$ time.
\end{lemmaapprep}

\begin{proof}
Analogous to the proof of Lemma~\ref{lem:berppp1}, the numerator and the denominator of $p=\frac{1}{2p^*}$ fit in $O(n)$ words and $p$ can be evaluated in $poly(n)$ time. Thus, claim (i) holds.

To prove claim (ii), 
by Lemma~\ref{lem:berppp1}, for an arbitrary integer $r \geq 3$, an $r$-bit approximation $\tilde{p^*_{r}}$ of $p^*$ can be computed in $O(poly(r))$ time, such that  
$|\tilde{p^*_{r}}-p^*|\leq \frac{1}{2^{r}}$.
By Newton's method~\cite{brent2010modern}, 
an approximate reciprocal of $\tilde{2p^*_{r}}$, denoted by $u(r)$, can be computed in $O(poly(i))$ time with error at most $\frac{1}{2^{i+1}}$, 
namely, $|u(r)-\frac{1}{\tilde{2p^*_{r}}}|\leq \frac{1}{2^{i+1}}$. 

Therefore, the absolute error, $|p - u(r)|$, is bounded by
$ 
|\frac{1}{2p^*}-u(r)|
\leq |\frac{1}{2p^*}-\frac{1}{2\tilde{p^*_{r}}}|+|u(r)-\frac{1}{2\tilde{p^*_{r}}}|
\leq \frac{|2p^*-2\tilde{p^*_{r}}|}{2p^*\cdot2\tilde{p^*_{r}}}+\frac{1}{2^{i+1}}
\leq \frac{1}{2^{r-1}}+\frac{1}{2^{i+1}}
$, where
the last inequality comes from the upper bound of $p^*\geq 1-1/e\geq \frac{1}{2}$ and $\tilde{p^*_{r}}\geq p^*-\frac{1}{2^r}\geq \frac{1}{2}$.

Thus, it suffices to choose $r = i+2$ to obtain an $i$-bit approximation $u(r)$ of $p$, and the overall computational cost is bounded by $O(poly(i))$.
\end{proof}

\subsection{Geometric Random Variate Generation in the Word RAM Model}
\label{sec:geo}

Consider a parameter~$p\in(0,1)$. 
The {\em Geometric distribution} $\Geo(p)$ takes {\em discrete} values in $\N$.
Specifically, $\Pr[\Geo(p)=i]=p(1-p)^{i-1}$, for $i\in \N$. 
As the value of a geometric variate can be arbitrarily large (though it is not really likely),
thus in theory, it is impossible to generate variates from a geometric distribution with worst-case bounded space. 

\vspace{2mm}
\noindent{\bf Bounded Geometric Variates.}
Nonetheless, in many application scenarios, including our optimal DPSS algorithm, 
geometric random variates within a bounded range are sufficient.
This is known as the \textit{Bounded Geometric distribution}.
Given an integer $n\in \N$, the bounded geometric distribution, denoted by
$\text{B-Geo}(p,n)$, is defined as $\min\{n,\Geo(p)\}$ with distribution: 
\begin{equation*}
\Pr[\text{B-Geo}(p,n)=i]=
\begin{cases}
p(1-p)^{i-1}& i\in\{1,\cdots,n-1\}\,; \\
(1-p)^{n-1}& i=n\,. 
\end{cases}
\end{equation*}

Bringmann and Friedrich~\cite{bringmann2013exact} studied the generation of bounded geometric random variates in the Word RAM model and gave the following theorem:
\vspace{-1mm}
\begin{fact}[\cite{bringmann2013exact}]
\label{lmm:bounded-geo}
Let $p\in (0,1)$ be a rational number with numerator and denominator fitting in~$O(1)$ words and~$n$ be a positive integer fitting in~$O(1)$ words. 
A random variate from $\text{\em B-Geo}(p,n)=\min\{n,\text{\em Geo}(p)\}$ can be realized in~$O(1)$ expected time and~$O(n)$ space in the worst case.
\end{fact}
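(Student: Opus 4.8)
The plan is to split $\text{B-Geo}(p,n)=\min\{n,\Geo(p)\}$ into a cheap ``is the cap hit?'' step and a ``locate the geometric variate'' step, realizing every coin flip through the Bringmann--Friedrich Bernoulli machinery of Fact~\ref{lem:ber1}. First I would draw $c\sim\Ber\bigl((1-p)^{n-1}\bigr)$: because $\Pr[\Geo(p)\ge n]=(1-p)^{n-1}$, if $c=1$ the truncation is active and I output $n$ and stop. Writing $p=A/B$ with $A,B$ of $O(1)$ words, the bias $(1-p)^{n-1}=(B-A)^{n-1}/B^{n-1}$ is rational with numerator and denominator of $O(n)$ words, fully evaluable in $\poly(n)$ time by repeated squaring, and an $i$-bit approximation is evaluable in $\poly(i)$ time by the same squaring performed at working precision $\Theta(i)$ (truncating intermediate products; returning $0$ once the running exponent certifies $(1-p)^{n-1}<2^{-i}$). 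So Fact~\ref{lem:ber1}, with its space parameter set to $n$, supplies this flip in $O(1)$ expected time and $O(n)$ worst-case space.

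If $c=0$, I must return $\Geo(p)$ conditioned on $\Geo(p)\le n-1$, i.e.\ with $\Pr[\Geo(p)=i\mid\Geo(p)\le n-1]\propto(1-p)^{i-1}$. The natural method is a bracketing search exploiting the memoryless property: keep an interval $[\ell,r]\subseteq\{1,\dots,n-1\}$ with the invariant that, conditioned on all previous flips, $\Pr[\Geo(p)=\ell+t\mid\Geo(p)\in[\ell,r]]\propto(1-p)^{t}$ (which depends only on $r-\ell$); each round splits at $\ell+2^{j}$ for a power of two $2^{j}\le r-\ell$ and flips a $\Ber\!\bigl(\tfrac{1-(1-p)^{2^{j}}}{1-(1-p)^{\,r-\ell+1}}\bigr)$ trial to decide which sub-bracket holds $\Geo(p)$. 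Each such bias is again an $O(n)$-word rational of the form Fact~\ref{lem:ber1} accepts, the powers $(1-p)^{2^{j}}$ can be materialized on demand within $O(n)$ words, and the bracket collapses to a singleton in $O(\log n)$ rounds in the worst case, keeping worst-case space $O(n)$. (An alternative I would keep in reserve is the inverse-CDF route, $\Geo(p)=\lceil\ln V/\ln(1-p)\rceil$ for $V\sim\mathrm{Unif}(0,1)$ with the bits of $V$ drawn lazily and $\ln(1-p)$ computed to matching working precision, capping the result at $n$.)

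The hard part will be pushing the running time down to $O(1)$ \emph{expected} --- the naive accounting of either route only gives $O(\log n)$ (or $O(\log(1/p))$). For the bracketing route I would need to choose the split offsets $2^{j}$ so that each round removes a constant fraction of the remaining conditional mass, so that the probability of the search reaching round $j$ decays geometrically and $\sum_j\Pr[\text{reach round }j]\cdot O(1)=O(1)$; crucially one also uses that in the $d$-bit Word RAM model $n$ and $1/p$ fit in $O(1)$ words, so every integer or rational that is ever fully materialized occupies $O(1)$ words and the per-round bookkeeping (and the per-bit cost inside Fact~\ref{lem:ber1}) is $O(1)$ expected. Carrying out this amortized analysis rigorously is exactly the content of Bringmann and Friedrich~\cite{bringmann2013exact}, which I would reproduce. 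I would finish by disposing of the trivial edge cases --- $n=1$; and $p$ bounded away from $0$, where flipping base-$p$ coins one at a time already takes $O(1/p)=O(1)$ expected flips --- and by checking that de-amortizing any doubling of the on-demand tables keeps worst-case space $O(n)$ at all times.
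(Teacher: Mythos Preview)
The paper does not prove this statement at all: it is stated as a \emph{Fact} cited from Bringmann and Friedrich~\cite{bringmann2013exact}, with no argument supplied. So there is no ``paper's own proof'' to compare against; you are attempting to reconstruct the external result.

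On the merits of your reconstruction, the first step (flip $\Ber\bigl((1-p)^{n-1}\bigr)$ via Fact~\ref{lem:ber1} to decide the cap) is fine. The genuine gap is in the second step. Your bracketing search, as you yourself note, only yields $O(\log n)$ rounds, and your proposed fix --- ``choose the split offsets $2^{j}$ so that each round removes a constant fraction of the remaining conditional mass'' --- does not by itself give $O(1)$ expected rounds: removing a constant fraction of \emph{mass} per round still leaves you needing $\Theta(\log K)$ rounds to isolate a single index when the answer is $K$, and under $\Geo(p)$ the expected answer is $\Theta(\min(1/p,n))$, which is not $O(1)$. You then defer the actual argument to ``reproduce Bringmann and Friedrich,'' which is precisely the content you were asked to supply. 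The route that actually works (and that you relegate to a parenthetical ``alternative'') is the inverse-CDF method with lazy bit generation of $V\sim\mathrm{Unif}(0,1)$ and on-demand precision for $\ln(1-p)$: the crux, which your sketch omits, is the analysis showing that the expected number of additional bits of $V$ needed to pin down the integer $\lceil \ln V/\ln(1-p)\rceil$ is $O(1)$, together with the observation that each bit of work costs $O(1)$ because all base quantities fit in $O(1)$ words. Without that analysis you have not achieved the claimed $O(1)$ expected time.
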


\noindent
{\bf Truncated Geometric Variates.}
Besides the Bounded Geometric variates,
the so-called {\em Truncated Geometric} random variates are also of great importance to many applications. 
As we see in Section~\ref{sec:main-algo}, it is a crucial building block in our optimal DPSS algorithm.
Given~$p \in (0,1)$ and $n \in \N$, the truncated geometric distribution, denoted by $\text{T-Geo}(p, n)$, defined on $\{1, \ldots, n\}$, has distribution:
$$\text{Pr}[\text{T-Geo}(p,n) = i] = \frac{p(1-p)^{i-1}}{1 - (1 - p)^n}\,, \text{~for } i \in \{1,\ldots, n\}\,.$$

The probability function of truncated geometric distribution can be interpreted as a conditional probability as follows.
Consider sampling each item from a bucket of $n$ items with probability $p$ independently. Then $\text{T-Geo}(p,n)$ is identically distributed as the smallest index of the sampled items, conditioned on at least one item being sampled.

\vspace{2mm}
\noindent
{\bf Proof of Theorem~\ref{thm:trun-geo}.}
We consider the following possible cases:
\begin{description}[leftmargin = *]
\item[Case~1 ($n \leq 2$):]  
This is the trivial case to generate $i\sim \text{T-Geo}(p, n)$. 
If $n=1$, then $\text{T-Geo}(p, n) = 1$ always holds.
If $n = 2$,
it is can be verified that $\text{T-Geo}(p,n)$ returns either $i = 1$ with probability $\frac{1}{2 - p}$ or $i = 2$ with probability $\frac{1- p}{2-p}$.
In this case, $\text{T-Geo}(p,n)$ is identically distributed as $Ber(\frac{1-p}{2-p}) + 1$.  
By Fact~\ref{lem:ber2}, this can be achieved 
in $O(1)$ expected running time with $O(1)$ worst-case space. 
Thus, Theorem~\ref{thm:trun-geo} holds.

\vspace{2mm}
\item[Case~2.1 ($n\geq 3$ and $np \geq 1$):]  
In this case, an $i \sim \text{T-Geo}(p, n)$ can be realized by 
repeatedly generating $i \sim \text{B-Geo}(p, n+1)$ until $i \leq n$ holds. 
In each generation, $i \leq n$ 
happens with probability $1 - (1 - p)^n > 1 - \frac{1}{e}$ because $np\geq 1$. 
Thus,~$i$ is realized from $\text{T-Geo}(p, n)$ in 
$O(1)$ trials in expectation.
By Fact~\ref{lmm:bounded-geo}, 
Theorem~\ref{thm:trun-geo} holds.

\vspace{2mm}
\item[Case~2.2 ($n\geq 3$ and $np < 1$):] 
To handle this case, we propose an algorithm which runs as follows:
\begin{itemize}
\item $i \leftarrow 0$;
\item while $i \leq n$ do:
\begin{itemize}
	\item $i \leftarrow  i + \text{B-Geo}(\frac{2}{n}, n + 1)$; 
	\item if $i \leq n$ and $\Ber((1 - p)^{i-1}) = 1$
\begin{itemize}
\item generate $c \leftarrow \Ber(\frac{1}{2 \, p^*})$, with $p^* = \frac{1 - (1-p)^n}{np}$;
\item if $c = 1$, \textbf{return}~$i$ as a realization of a random variate from $\text{T-Geo}(p, n)$; 
\end{itemize}
\end{itemize}	
\item start over from the first step with $i \leftarrow 0$;
\end{itemize}
\end{description}

To complete the proof of Theorem~\ref{thm:trun-geo},
we show the correctness and running time cost of the above algorithm.


\vspace{2mm}
\noindent
\underline{Correctness.} 
Observe that the $\text{B-Geo}(\frac{2}{n}, n+1)$ generation in the while-loop
simulates the subset sampling process of choosing~$i$, for each index~$i$, independently with probability~$2/n$.
Thus, each~$i$ has probability~$2/n$ of being sampled in this step.
Next,~$i$ is provisionally accepted with probability $(1 - p)^{i-1}$ via a Bernoulli trial, and~$i$ is further eventually accepted with probability $\frac{1}{2p^*}$.
Therefore, combining all these probabilities,
we have:
$$\text{Pr}[ i \text{~is returned as a T-Geo$(p, n)$ variate}] = \frac{2}{n} \cdot (1 - p)^{i-1} \cdot \frac{1}{2 \cdot \frac{1- (1 - p)^n}{np}} = \frac{p (1 - p)^{i-1}}{1 - (1 - p)^n}\,.$$ 
Hence, our algorithm for $np<1$ case correctly realizes a $\text{T-Geo}(p, n)$ random variate.

\vspace{2mm}
\noindent
\underline{The Running Time.} 
First, it can be verified that, $\text{B-Geo}(2/n, n+1)$ returns $i \leq n$ within~$O(1)$ trials in expectation. 
Next, we prove that, given an~$i$, our algorithm accepts~$i$ with at least constant probability. 
To see this, given an~$i$, it is accepted by the first Bernoulli trial with probability $(1 - p)^{i-1} \geq (1 - p)^n \geq (1 - \frac{1}{n})^n \geq \frac{8}{27}$, where the second last inequality comes from the fact that $pn < 1$, while the last inequality is by the fact that $n \geq 3$ and the fucntion $(1 - \frac{1}{n})^n$ monotonically increases with $n$. 
Furthermore, for the second Bernoulli trial, $i$ is accepted with probability $\frac{1}{2 p^*} \geq \frac{1}{2}$, because $p^*  = \frac{1 - (1 - p)^n}{np} \leq 1$. 
Therefore, in expectation, within~$O(1)$
trials in total, an~$i$ will be returned by our algorithm.
Finally, by Facts~\ref{lem:ber2} and~\ref{lmm:bounded-geo}, and by our Theorem~\ref{thm:ber},
all the random variates used in each trial can be generated in~$O(1)$ expected time with $O(n)$~worst-case space. 

\section{Our Optimal DPSS Algroithm}
\label{sec:main-algo}

In this section, we show our optimal dynamic parameterized subset sampling algorithm. For ease of exposition, and based on the results in Section~\ref{sec:random}, we note that the evaluation of every random variate used in our algorithm can be performed in~$O(1)$ expected time with~$O(n)$ worst-case space.

\vspace{-1mm}
\subsection{A One-Level Bucket-Grouping Structure}

We first introduce a one-level {\em Bucket-Grouping Structure} (BG-Str) which  
is a key technique to our {\em three-level sampling hierarchy}, and which reveals the intuitions and the correctness of our algorithm.
In the following, we deliberately use $X = \{x_1, \ldots, x_N\}$ to denote the item set, 
to distinguish it from the \emph{original} input item set,~$S$:
the BG-Str will be applied recursively to construct the sampling hierarchy. 
To avoid the complicated {\em floor} and {\em ceiling} operations on the logarithmic values, 
without loss of generality, we assume that the size, $N$, of an item set is always a {\em power-of-$16$} number. 
Otherwise, we can {\em conceptually} pad with {\em dummy} items, each with {\em zero} weight, to make~$N$ a power-of-$16$ number.
This would only increase~$N$ by a constant factor of~$16$ in the worst case, and hence, would not affect our theoretical results.  
Thanks to this, the values of $\log_2 N$, $\log_2 \log_2 N$ and $\log_2 \log_2 \log_2 N$ are all integers.

\vspace{1mm}
\noindent
{\bf The Data Structure.}
Consider a PSS instance $X = \{x_1, \ldots, x_N\}$, the one-level bucket-grouping structure on $X$, denoted by {{BG-Str}($X$)}, can be constructed in four steps:
\begin{itemize}[leftmargin = *]
\item \underline{Step 1:  Total Weight Calculation.} Compute $W_X = \sum_{x \in X} w(x)$. Observe that $W_X$ is just a sum of~$N$ one-word integers, and hence, $W_X$ can be stored in $O(1)$ words.
\item \underline{Step 2: Item Bucketing.}
Assign the items in $X$ to {\em buckets} by their weights.
Bucket $B_X(i)$ is assigned all the items $x \in X$ such that $2^i \leq w(x) < 2^{i+1}$, for $i = 0, 1, \ldots, \lfloor \log_2 w_{\max} \rfloor$, where $w_{\max}$ is the largest possible weight of the items.
All the {\em non-empty} buckets (containing at least one item from $X$) are maintained in a {\em sorted} linked list by their bucket index,~$i$, in ascending order. 

\item \underline{Step 3: Bucket Grouping.} 
Organize all the buckets into {\em groups}, where  
the group with index,~$j$, denoted by~$G_X(j)$, 
is a collection of all the non-empty buckets~$B_X(i)$ whose bucket indices $i$ satisfy $ \left\lfloor \frac{i}{\log_2 N} \right\rfloor = j$, 
where the largest possible value of a group index,~$j$, is $j_{\max} = \left\lfloor \frac{\lfloor \log_2 w_{\max} \rfloor}{\log_2 N} \right\rfloor$.
It is worth mentioning that each group corresponds to exactly $\log_2 N$ consecutive possible bucket indices.
Analogous to the buckets, all the {\em non-empty} groups (containing at least one non-empty bucket) 
are maintained in a {\em sorted} linked list by their group index,~$j$, in ascending order. 

\item \underline{Step 4: Next-Level Instance Construction.} 
For each non-empty group,~$G_X(j)$, 
construct a {\em next-level PSS instance} on a set,~$Y_j$, of 
{\em next-level} items
by the following steps:
\begin{itemize}
\item for each non-empty bucket $B_X(i) \in G_X(j)$, create a {\em next-level item}, denoted by $y_i$, with weight $w(y_i) = 2^{i+1} \cdot |B_X(i)|$.
\item denote the set of these next-level items by $Y_j$, and the PSS instance on $Y_j$ is called the {\em next-level instance} of $G_X(j)$. 
\end{itemize} 

\end{itemize}

\begin{lemma}\label{lmm:one-level-pre}
The one-level bucket-grouping structure, {\em BG-Str($X$)}, can be constructed in $O(N)$ worst-case time with $O(N)$ space consumption.
\end{lemma}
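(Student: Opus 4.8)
The plan is to bound the running time and space of each of the four construction steps separately, and then sum. The key observation throughout is that the number of \emph{non-empty} buckets and the number of \emph{non-empty} groups are both at most $N$ (indeed, at most $\min\{N, \lfloor\log_2 w_{\max}\rfloor+1\}$), so any per-non-empty-bucket or per-non-empty-group work is already $O(N)$ in total, while anything that would touch \emph{all} $\lfloor\log_2 w_{\max}\rfloor+1$ possible bucket indices (e.g.\ naively zeroing an array of that length) must be avoided or done lazily.

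First I would handle Step~1: $W_X$ is a sum of $N$ one-word integers, computable in $O(N)$ time, and since $W_X \le N \cdot w_{\max}$ fits in $O(1)$ words (as $d \in \Omega(\log(n_{\max} w_{\max}))$), this is $O(N)$ time and $O(1)$ extra space. For Step~2 (Item Bucketing), for each item $x$ I compute its bucket index $i = \lfloor \log_2 w(x)\rfloor$ in $O(1)$ time using the highest-set-bit bit operation available in the Word RAM model, then append $x$ to bucket $B_X(i)$; maintaining the sorted linked list of non-empty bucket indices under these $N$ insertions is where I invoke Fact~\ref{fact:sorted} — the bucket indices lie in the universe $\{0,1,\ldots,\lfloor\log_2 w_{\max}\rfloor\}$, which has size $O(d)$, so each insertion (with predecessor lookup to splice into the sorted list) costs $O(1)$, giving $O(N)$ total and $O(N)$ space. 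Step~3 (Bucket Grouping) is analogous: for each of the $O(N)$ non-empty buckets $B_X(i)$ I compute its group index $j = \lfloor i / \log_2 N\rfloor$ in $O(1)$ time (division with rounding is a unit-cost operation, and $\log_2 N$ is an integer by the power-of-$16$ assumption), and I maintain the sorted linked list of non-empty group indices via Fact~\ref{fact:sorted} again — group indices live in a universe of size $O(d/\log N) = O(d)$ — so $O(N)$ time and space overall; while scanning I also link each bucket into its group's list. For Step~4 (Next-Level Instance Construction), I walk through the non-empty buckets once more: for each non-empty $B_X(i)$ I create one next-level item $y_i$ with weight $w(y_i) = 2^{i+1}\cdot|B_X(i)|$, computable in $O(1)$ time (the shift $2^{i+1}$ and the already-maintained count $|B_X(i)|$), and append it to $Y_j$ for the group $G_X(j)$ containing $B_X(i)$. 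The total number of next-level items created is exactly the number of non-empty buckets, which is $O(N)$, so Step~4 costs $O(N)$ time and $O(N)$ space. Summing over all four steps yields $O(N)$ worst-case construction time and $O(N)$ total space.

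The main obstacle I anticipate is \emph{not} the arithmetic but the bookkeeping needed to keep every auxiliary array's size proportional to $N$ (or to the number of non-empty buckets/groups) rather than to $\lfloor\log_2 w_{\max}\rfloor$, which could be as large as $d$ and is only bounded by $O(\log(n_{\max} w_{\max}))$ — larger than $N$ in general. Concretely, one cannot afford to allocate and initialize a dense array indexed by all possible bucket indices; the resolution is exactly the ``menu array / pointer array'' trick inside Fact~\ref{fact:sorted} (and the analogous de-amortized doubling), so the cleanest exposition is to state that buckets and groups are stored in the data structure of Fact~\ref{fact:sorted}, note that its $O(1)$-per-operation and $O(\text{current size})$-space guarantees apply since the relevant universes have size $O(d)$, and let everything else be a routine $O(N)$ sweep. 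I would also remark that the power-of-$16$ padding assumption is what makes $\log_2 N$ an integer, so the group-index division is exact and needs no floor/ceiling correction.
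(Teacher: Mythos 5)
Your proposal is correct and takes essentially the same approach as the paper: the paper's proof is a terse two-sentence argument that the dominant work is maintaining the sorted linked lists of non-empty buckets and groups via Fact~\ref{fact:sorted}, which you spell out step by step while making the same key observation — that all per-bucket and per-group work is charged to non-empty entities, of which there are $O(N)$. Your extra remarks (the $O(1)$-time highest-set-bit computation for bucket indices, the exact division by $\log_2 N$ under the power-of-$16$ assumption, and the warning against dense arrays indexed by all possible bucket indices) are accurate elaborations of what the paper leaves implicit rather than a different argument.
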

\begin{proof}
The construction of BG-Str($X$)  
mainly involves the constructions and maintenance of {\em sorted linked lists} of non-empty buckets (and non-empty groups) of length at most $\lfloor \log_2 w_{\max} \rfloor \leq d$.
According to Fact~\ref{fact:sorted} in the Word RAM model,
each of them can be constructed in {\em linear} time with {\em linear} space to its size.
As the sum of all their sizes is bounded by~$O(N)$ (because they are non-empty),  
the overall construction cost and space consumption of them are bounded by~$O(N)$.
\end{proof}

\noindent
\underline{\bf Answering a PSS query on~$X$ with BG-Str($X$).}
Consider the bucket-grouping structure $\text{BG-Str}(X)$, constructed on the item set~$X$.
Given a PSS query with parameters $(\a, \b)$, 
the buckets~$B_X(i)$ can be categorized into three types.
It is worth mentioning that BG-Str($X$) itself 
does not have the concept of the query parameters $(\a, \b)$ when constructed. All the following concepts are defined {\em dynamically on the fly} during query time by the given parameters $(\a, \b)$.

\begin{itemize}[leftmargin = *]
\item
{\em $(\a, \b)$-Insignificant Buckets.}
$B_X(i)$ is an {\em $(\a, \b)$-insignificant bucket} if its bucket index $i$ satisfies $\frac{2^{i+1}}{W_X(\a, \b)} \leq \frac{1}{N^2}$, where recall that $W_X(\a, \b) = \a \cdot W_X + \b$.
Since $B_X(i)$ contains items with weights in the range $[2^i, 2^{i+1})$, 
each item $x$ in a non-empty $(\a, \b)$-insignificant bucket 
$B_X(i)$
will have probability $p_x(\a, \b) < \frac{1}{N^2}$ to be selected in the sampling result.

\item
{\em $(\a, \b)$-Certain Buckets.}
$B_X(i)$ is an {\em $(\a, \b)$-certain bucket} 
if its bucket index $i$ satisfies $\frac{2^{i}}{W_X(\a, \b)} \geq 1$. 
As a result, each item in a non-empty $(\a, \b)$-certain bucket $B_X(i)$ is included in the sampling result with probability $1$.

\item
{\em $(\a, \b)$-Significant Buckets.} 
If $B_X(i)$ is neither of the above two types, 
then $B_X(i)$ is an $(\a, \b)$-significant bucket.
By definition, each item in such a non-empty bucket  $B_X(i)$ will have probability in the range of $\left(\frac{1}{2N^2}, 1\right)$ to be sampled.
\end{itemize}

Moreover, 
if all the possible buckets (not necessarily non-empty) in a group $G_X(j)$ 
are $(\a, \b)$-insignificant (resp., $(\a, \b)$-certain), 
then $G_X(j)$ is an {\em $(\a, \b)$-insignificant group} (resp., {\em $(\a, \b)$-certain group}).
If neither of these cases holds,
$G_X(j)$ is a {\em $(\a, \b)$-significant group}. Observe that for every group, the difference between the largest and smallest bucket indices is at most $\log_2 N  - 1$. Thus, no group can simultaneously contain an $(\alpha, \beta)$-certain bucket and an $(\alpha,\beta)$-insignificant bucket.

\begin{lemma}\label{lmm:next-level-num}
Given a PSS query with parameters $(\a, \b)$, there can be at most three
$(\a, \b)$-significant groups in {\em BG-Str($X$)}.
\end{lemma}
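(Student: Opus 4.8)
The plan is to show that the significant buckets all lie within a bounded window of bucket indices, and then translate that into a bound on the number of groups they can span. First I would observe that a bucket $B_X(i)$ is $(\a,\b)$-significant precisely when $\frac{1}{2N^2} < \frac{2^i}{W_X(\a,\b)} < 2$, equivalently when $\frac{W_X(\a,\b)}{2N^2} < 2^i < 2\,W_X(\a,\b)$. Taking base-$2$ logarithms, the index $i$ of any significant bucket must satisfy roughly $\log_2 W_X(\a,\b) - 1 - 2\log_2 N < i < \log_2 W_X(\a,\b) + 1$. Hence all significant bucket indices lie in an interval of length at most $2\log_2 N + 2$.

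Next I would recall from Step~3 of the construction that each group corresponds to exactly $\log_2 N$ consecutive possible bucket indices, i.e., the indices $i$ in group $G_X(j)$ are exactly those with $\lfloor i/\log_2 N\rfloor = j$. An interval of at most $2\log_2 N + 2$ consecutive integers can overlap at most $\lceil (2\log_2 N + 2)/\log_2 N\rceil + 1$ distinct such blocks; for $N \ge 16$ (so $\log_2 N \ge 4$) this is at most $4$. To sharpen this to $3$, I would be a bit more careful with the endpoints: the strict inequalities above actually confine the significant indices to an open interval whose integer points number at most $2\log_2 N$, and $2\log_2 N$ consecutive integers meet at most $3$ blocks of length $\log_2 N$ (they can start partway through one block, cover one full block, and end partway through a third). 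Finally, since a group is $(\a,\b)$-significant only if it is neither wholly insignificant nor wholly certain — which, as already noted in the text, forces it to contain at least one significant bucket or to straddle the significant window — every significant group must intersect this window, giving at most three of them.

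The main obstacle I anticipate is the boundary bookkeeping: getting the constant to be exactly $3$ rather than $4$ requires being precise about whether the defining inequalities for significant buckets are strict, about the contribution of the two ``$+1$'' slack terms, and about the fact that a group counts as significant even if none of its \emph{non-empty} buckets is significant (it suffices that the group's index range, as a set of \emph{possible} bucket indices, is mixed). I would handle this by arguing directly on the integer interval of possible significant indices and counting how many length-$\log_2 N$ blocks a window of that length can touch, using $\log_2 N \ge 4$ to absorb the additive constants. Everything else — the logarithm manipulation and the block-counting — is routine.
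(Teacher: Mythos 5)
Your overall strategy is the same as the paper's: pin down the range of bucket indices~$i$ for which $B_X(i)$ can possibly be $(\a,\b)$-significant, observe that this is a run of consecutive integers of length roughly $2\log_2 N$, and count how many blocks of $\log_2 N$ consecutive indices such a run can touch. The paper phrases it multiplicatively (the union of the significant buckets' probability ranges is $(\tfrac{1}{2N^2},1)$, a ratio of $2N^2$, hence at most $1+2\log_2 N$ significant indices), whereas you phrase it additively via logarithms; these are the same argument.

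However, there is a concrete bookkeeping error that leaves a gap. A bucket $B_X(i)$ is $(\a,\b)$-certain iff $\tfrac{2^i}{W_X(\a,\b)} \geq 1$, so ``not certain'' is $\tfrac{2^i}{W_X(\a,\b)} < 1$, not $< 2$. With your (too wide) characterization $\tfrac{1}{2N^2} < \tfrac{2^i}{W_X(\a,\b)} < 2$, the open interval of admissible $i$ has length $2\log_2 N + 2$ and can contain up to $2\log_2 N + 2$ integers, and a run of $2g+2$ consecutive integers can indeed meet four length-$g$ blocks (take $a \equiv g-1 \pmod g$). Your subsequent claim that ``the strict inequalities actually confine the significant indices to an open interval whose integer points number at most $2\log_2 N$'' is not justified: even under the correct condition $< 1$ the interval has length $1 + 2\log_2 N$ and can contain $1+2\log_2 N$ integers (e.g.\ when $\log_2 W_X(\a,\b)$ is not an integer), and under your condition $< 2$ it can contain $2 + 2\log_2 N$. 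So as written the argument only yields four groups.

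The fix is exactly what the paper does. Using the correct upper bound, the possible significant indices lie in the open interval $(\log_2 W_X(\a,\b) - 1 - 2\log_2 N, \; \log_2 W_X(\a,\b))$, which contains at most $1 + 2\log_2 N$ integers. And any run of $2g+1$ consecutive integers $\{a, a+1, \ldots, a+2g\}$ lies in exactly the three blocks $\lfloor a/g\rfloor, \lfloor a/g\rfloor+1, \lfloor a/g\rfloor + 2$, because $\lfloor (a+2g)/g\rfloor = \lfloor a/g\rfloor + 2$; so $1 + 2\log_2 N$ consecutive indices touch at most three groups. Combined with the observation (already made in the text) that a significant group must contain a possible significant bucket, this gives the bound of three. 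You also need not invoke $N \ge 16$; the $2g+1$ counting holds for any positive block size $g$.
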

\begin{proof}
By definition, each $(\a, \b)$-significant group contains at least one possible $(\a, \b)$-significant bucket.
Furthermore, observe that the union of the probability ranges
of all possible $(\a, \b)$-significant buckets is $(\frac{1}{2N^2}, 1)$.
As a result, there can be at most $1+2 \log_2 N $ such buckets.  
Since each group contains $ \log_2 N $ possible buckets with consecutive bucket indices, therefore, all these $(\a, \b)$-significant buckets must 
be contained in at most three $(\a, \b)$-significant groups.
\end{proof}

\noindent
{\bf The Query Algorithm (Algorithm~\ref{algo:one-level}).}
The basic idea is to answer a PSS query with parameters $(\a, \b)$ on $X$ by solving the following three types of PSS query instances:
\begin{itemize}[leftmargin = *]
\item {\em insignificant instance:} one PSS query with parameters $(\a, \b)$ on the set of all the items from $X$ that are in the buckets in all the $(\a, \b)$-insignificant groups;
\item {\em certain instance:} one PSS query with parameters $(\a, \b)$ on the set of all the items from $X$ that are in the buckets in all the $(\a, \b)$-certain groups;  
\item {\em next-level instances:} one PSS query with parameters $(0, W_X(\a, \b))$ on the next-level item set $Y_j$ for each $(\a, \b)$-significant group $G_X(j)$. As each sampled next-level item in $Y_j$ is essentially a bucket in $G_X(j)$, the items of $X$ in the bucket can be sampled with {\em rejection sampling}.
\end{itemize}

\begin{algorithm}
\caption{One-Level PSS Query Algorithm: Query(BG-Str($X$), $(\a, \b)$, $\ell_{\text{cur}}$)}\label{algo:one-level}
\KwData{a Bucket-Grouping Structure BG-Str($X$), PSS query parameters $(\alpha,\beta)$, the current recursion level $\ell_\text{cur}$}
\KwResult{a PSS result, with parameters $(\a, \b)$, $T\subseteq X$}

$N \leftarrow |X|$\;
$j_1 \leftarrow \left\lfloor \frac{\lfloor \log_2 {\frac{W_X(\a, \b)}{N^2}}\rfloor - 1}{\log_2 N}\right\rfloor$, the maximum possible $(\a, \b)$-insignificant group index\;
$T_{1}\leftarrow$ QueryInsignificant(BG-Str($X$), $(\a, \b)$, $j_1 \cdot \log_2 N$) // Algorithm~\ref{algo:insig}\;
$j_2 \leftarrow \lfloor \frac{\lceil \log_2 W_X(\a, \b) \rceil}{\log_2 N}\rfloor$, the minimum possible $(\a, \b)$-certain group index\;
$T_{2}\leftarrow$ QueryCertain(BG-Str($X$), $(\a, \b)$, $j_2\cdot \log_2 N$)  // Algorithm~\ref{algo:cert}\;
$T_3 \leftarrow \emptyset$\;
\For{$j$ from $j_1 + 1$ to $j_2 -1$}
{$T_3' \leftarrow $ QueryASignificantGroup(BG-Str($X$), $(\a, \b)$, $j$, $\ell_\text{cur}$) //  Algorithm~\ref{algo:nextl}\;
$T_3 \leftarrow T_3 \cup T_3'$\;
}
$T\leftarrow T_1\cup T_2\cup T_3$\;
return $T$\;
\end{algorithm}

\begin{toappendix}
\begin{algorithm}
\caption{QueryInsignificant(BG-Str($X$), $(\a, \b)$, $i_1$)}\label{algo:insig}
\KwData{a Bucket-Group Structure BG-Str($X$), PSS query parameters $(\alpha,\beta)$, the maximum possible bucket index $i_1$ in the $(\a, \b)$-insignificant groups}
\KwResult{a PSS result $T_1 \subseteq X$ within all the $(\a, \b)$-insignificant groups}
initialize $T_{1}\leftarrow \emptyset$ and $N \leftarrow |X|$\; 
$k\leftarrow \text{B-Geo}(1/N^2, N+1)$\;
\If {$k\leq N$}{
$A \leftarrow $ the set of all the items in the non-empty buckets (with bucket index $\leq i_1$)\;
\If{$|A| < k$}{return $\emptyset$\;}
{
$x \leftarrow A[k]$, the $k^\text{th}$ item in $A$\;
$T_1 \leftarrow T_1 \cup \{x\}$ with probability $\frac{p_x(\a, \b)}{1 / N^2}$\;
}
\For{$i \leftarrow \{k + 1, \ldots, |A|\}$}
{
$x \leftarrow A[i]$, the $i^\text{th}$ item in $A$\;
$T_1 \leftarrow T_1 \cup \{x\}$ with probability $p_x(\a, \b)$\;
}
}
return $T_1$\;
\end{algorithm}

\begin{algorithm}
\caption{QueryCertain(BG-Str($X$), $(\a, \b)$, $i_2$)}\label{algo:cert}
\KwData{a Bucket-Group Structure BG-Str($X$), PSS query parameters $(\a, \b)$, the minimum possible bucket index $i_2$ in the $(\a, \b)$-certain groups}
\KwResult{a PSS result $T_{2}\subseteq X$ within all the $(\a, \b)$-certain groups}
initialize $T_{2}\leftarrow \emptyset$\;
\For{each non-empty buckets $B_X(i)$ with bucket index $i \geq i_2$ in BG-Str($X$)}{
$T_{2}\leftarrow T_{2}\cup B_X(i)$\;
}
return $T_2$\;
\end{algorithm}

\begin{algorithm}
\caption{QueryASignificantGroup(BG-Str($X$), $(\a, \b)$, $j$, $\ell_{\text{cur}}$)}\label{algo:nextl}
\KwData{a Bucket-Group Structure BG-Str($X$), PSS query parameters $(\alpha,\beta)$, an $(\a, \b)$-significant group index $j$, the current recursion level $\ell_{\text{cur}}$}
\KwResult{a PSS result $T_{3}\subseteq X$ within the $(\a, \b)$-significant group $G_X(j)$}

\If{$G_X(j)$ is empty}{return $\emptyset$\;}

// obtain a PSS result $T_{Y_j}$ with parameters $(0, W_X(\a, \b))$
on the next-level item set $Y_j$ of $G_X(j)$\;
\If{$\ell_{\text{cur}} = 1$}{
 $T_{Y_j} \leftarrow$ Query(BG-Str($Y_j$), $(0, W_X(\a, \b))$, $\ell_{\text{cur}} + 1$) // recursively invoke Algorithm~\ref{algo:one-level}\;
}
\Else{
$T_{Y_j} \leftarrow$ QueryFinalLevel(BG-Str($Y_j$), $(0, W_X(\a, \b))$) // in Section~\ref{sec:adapter}\;
}

$T_3 \leftarrow$ ExtractItems(BG-Str($X$), $T_{Y_j}$, $(0, W_X(\a, \b))$) // Algorithm~\ref{algo:extract}\;

return $T_3$\;
\end{algorithm}

\begin{algorithm}
\caption{ExtractItems(
BG-Str($X$), $T_{Y_j}$, $(0, W_X(\a, \b))$)}
\label{algo:extract}

\KwData{a Bucket-Group Structure BG-Str($X$), 
a PSS result $T_j$ on the next-level items, each corresponds to a candidate bucket of items of $X$, 
PSS query parameters $(0, W_X(\alpha,\beta))$ 
}
\KwResult{a PSS result $T_3 \subseteq X$ 
from the items in the candidate bucket list $T_{Y_j}$}

initialize $T_3 \leftarrow \emptyset$\;
\For{each bucket $B_X(i) \in T_{Y_j}$}
{
$p\leftarrow \min\{1,\frac{2^{i+1}}{W_X(\alpha,\beta)}\}$ and $n_i \leftarrow |B_X(i)|$\; 
\If {$p\cdot n_i\geq 1$}{
$k\leftarrow \text{B-Geo}(p,n_i + 1)$\;
}
\Else{
\If {$\Ber(\frac{1-(1-p)^{n_i}}{p\cdot n_i})=0$}{
continue to process next bucket in $T_{Y_j}$\;
}
$k \leftarrow \text{T-Geo}(p, n_i)$\;
}
\While {$k\leq n_i$}{
$x \leftarrow $ the $k^\text{th}$ item in $B_X(i)$\;
$T_{3}\leftarrow T_{3}\cup\{x\}$ with probability $\frac{p_x(\alpha,\beta)}{p}$\;\label{alg:extract:Bernoulli}
$k\leftarrow k+\text{B-Geo}(p,n_i+1)$\; 
}
}
return $T_3$\;
\end{algorithm}

\end{toappendix}

The detailed steps of the query algorithm are shown in Algorithm~\ref{algo:one-level}.

The query algorithm handles these three different types of instances 
with Algorithms~\ref{algo:insig}, ~\ref{algo:cert}, and~\ref{algo:nextl}, respectively.
The main methodology in our algorithm is {\em rejection sampling}. 
Specifically, for each item~$x$ with sampling probability~$p_x(\a,\b)$,
we first sample~$x$ as a {\em potential item} with some probability $p' \geq p_x(\a, \b)$, and then, we accept this item~$x$ 
as a PSS result with probability~$\frac{p_x(\a, \b)}{p'}$ independently.
In this way, item~$x$ will be sampled
with probability~$p_x(\a, \b)$ independently as required in the PSS query.

Next, we claim that that all the mathematical computations involved in Algorithm~\ref{algo:one-level} can be evaluated in $O(1)$ time in the Word RAM model. 
To prove this, it suffices to show the following Claim~\ref{claim:logcalculation}. 
As a result, the largest index $j_1$ of the insignificant groups and the smallest index $j_2$ of the certain groups can be computed in $O(1)$ time in the Word RAM model, as $W_X(\alpha,\beta)$ is a rational number that satisfies the conditions in Claim~\ref{claim:logcalculation}.

\begin{claimapprep}\label{claim:logcalculation}
Consider a rational number $x>0$ with its numerator and denominator being $O(1)$-word integers, then $\lceil \log_2 (x) \rceil$ and $\lfloor \log_2(x) \rfloor$ can be calculated in $O(1)$ time. 
\end{claimapprep}
\begin{proof}
Denote the numerator and the denominator of $x$ by $A$ and $B$, respectively, i.e., $x = A / B$. Both of them fit in $O(1)$ words.
In the Word RAM model, $\lceil\log_2(A)\rceil$ (and $\lceil\log_2(B)\rceil$), can be calculated with bit-wise operation by finding the index of the highest non-zero bit.
Observe that $\log_2 (x) = \log_2 (A)-\log_2(B) $. 
Thus, $\lceil \log_2 (x) \rceil<\log_2(x)+1< \lceil\log_2(A)\rceil-\lceil\log_2(B)\rceil+2$ and $\lceil \log_2 (x) \rceil\geq \log_2(x)>\lceil\log_2(A)\rceil-1-\lceil\log_2(B)\rceil$. 
Specifically, $c_1 -1 \leq \lceil \log_2 (x) \rceil \leq c_1$, where
$c_1=\lceil\log_2(A)\rceil-\lceil\log_2(B)\rceil+1$.
In other words, $\lceil \log_2 (x) \rceil$ is equal to either $c_1 -1 $ or $c_1$.

As a result, it suffices to check whether $2^{c_1 - 1} < x \leq 2^{c_1}$, equivalently, $2^{c_1 - 1} \cdot B < A \leq 2^{c_1} \cdot B$ holds or not.
If this is the case, then $\lceil \log_2 (x) \rceil = c_1$, otherwise, $\lceil \log_2 (x) \rceil = c_1 - 1$.

Since $c_1$ is bounded by $O(d)$, the value of $2^{c_1}$ can be represented by $O(1)$ words and it can be computed by bit shifting operation in $O(1)$ time. 
Hence, we can decide 
the value of $\lceil \log_2 (x) \rceil$ in $O(1)$ time.
%

The calculation of 
$\lfloor \log_2(x) \rfloor$ is analogous and hence, the details are omitted here.
\end{proof}

\vspace{1mm}
\noindent
\underline{\em Handling the Insignificant Instance (Algorithm~\ref{algo:insig}).}
By definition, each item~$x$ in the buckets in all those $(\a, \b)$-insignificant groups has probability at most~$\frac{1}{N^2}$ to be sampled.
Thus, we first generate a {\em bounded geometric variate} $k \sim \text{B-Geo}(\frac{1}{N^2}, N+1)$ to 
simulate the process of flipping a Bernoulli coin $\Ber(\frac{1}{N^2})$ for each item, one by one, and~$k$ is the index of the first item to be sampled. 
Clearly, if $k > N$, then nothing will be sampled and thus done. Otherwise, Algorithm~\ref{algo:insig} first collects all the items in this insignificant instance and stores them in an array~$A$. 
If $k > |A|$, then it is done. 
But if~$k \leq |A|$, the item~$x$ at the $k^\text{th}$ index in $A$, i.e., $A[k]$, is a potential item, and it is accepted in the sampling result with probability $\frac{p_x(\a, \b)}{1/ N^2}$.
Thus, this item~$x$ is equivalently sampled with probability $p_x(\a, \b)$.
For each of the remaining items~$x$, at index starting from~$k+1$ in the array $A$, Algorithm~\ref{algo:insig} flips a Bernoulli coin with probability~$p_x(\a, \b)$ to accept~$x$.
Therefore, each item~$x$ in the array~$A$ will be correctly sampled, independently, with probability~$p_x(\a, \b)$.

In terms of query efficiency, 
recall that the bounded geometric variate $k \sim \text{B-Geo}(\frac{1}{N^2}, N+1)$ can be generated in $O(1)$ expected time.
If $k > N$, the running time cost is~$O(1)$.
Otherwise, the process takes~$O(N)$ time.
So the overall expected running time is bounded, thus leading to Lemma~\ref{lmm:algo2}.
\begin{align*}
\Pr[k\leq N]\cdot O(N)+\Pr[k> N]\cdot O(1) 
\leq \left(1-\left(1-\frac{1}{N^2}\right)^{N}\right)\cdot O(N)+O(1)
\leq  \frac{O(N)}{N}+ O(1)
 =O(1)\,. 
\end{align*}
\begin{lemma}\label{lmm:algo2}
Algorithm~\ref{algo:insig} returns a correct PSS result with parameters $(\a, \b)$ on the set of items in the insignificant instance in~$O(1)$ expected time.
\end{lemma}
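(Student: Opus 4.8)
The plan is to prove the two halves of the statement---correctness and the $O(1)$ expected running time---separately, and to establish correctness by exhibiting a conceptual sampling process that is manifestly a valid PSS sampler and then arguing that Algorithm~\ref{algo:insig} is a faithful (and efficient) simulation of it.

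\textbf{Correctness.} The first step is to record the crucial numerical fact that every item $x$ in the insignificant instance satisfies $p_x(\a,\b)\le 1/N^2$: such an $x$ lies in an $(\a,\b)$-insignificant bucket $B_X(i)$, so $w(x)<2^{i+1}$ and $2^{i+1}/W_X(\a,\b)\le 1/N^2$, whence $p_x(\a,\b)=\min\{w(x)/W_X(\a,\b),1\}\le 1/N^2$; in particular $p_x(\a,\b)/(1/N^2)=N^2 p_x(\a,\b)\le 1$, so the acceptance probability used for $A[k]$ in Algorithm~\ref{algo:insig} is a legitimate probability. Next I would describe the reference process: independently flip $\Ber(1/N^2)$ for each of the $|A|\le N$ items of $A$, call an item \emph{potential} if its coin is $1$, and then independently keep each potential item $x$ in the output with probability $N^2 p_x(\a,\b)$. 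Under this process every $x\in A$ is kept independently with probability $\frac1{N^2}\cdot N^2 p_x(\a,\b)=p_x(\a,\b)$, which is exactly the PSS requirement. Finally I would argue Algorithm~\ref{algo:insig} realises (a coupling of) this process: $k=\text{B-Geo}(1/N^2,N+1)=\min\{N+1,\Geo(1/N^2)\}$ is distributed as the index of the first potential item, truncated at $N+1$; if $k>N$ or $k>|A|$ then no item of $A$ is potential and returning $\emptyset$ is correct; if $k\le|A|$, then $A[1],\dots,A[k-1]$ are non-potential (hence correctly excluded), $A[k]$ is potential and is accepted with the correct probability $N^2 p_{A[k]}(\a,\b)$, and for each later item $x=A[k{+}1],\dots,A[|A|]$ the value $k$ conveys no information about its coin, so flipping $\Ber(p_x(\a,\b))$ directly reproduces the ``coin-then-acceptance'' composite with the same (independent, $p_x(\a,\b)$-marginal) joint law. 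Hence $T_1$ has the required distribution.

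\textbf{Running time.} By Fact~\ref{lmm:bounded-geo}, generating $k\sim\text{B-Geo}(1/N^2,N+1)$ takes $O(1)$ expected time. If $k>N$, the algorithm stops after $O(1)$ work. If $k\le N$, it builds $A$ in $O(N)$ time and then performs $O(|A|)=O(N)$ Bernoulli trials; each such trial has a rational parameter with $O(1)$-word numerator and denominator (using that $N$, hence $N^2$, fits in $O(1)$ words and that $p_x(\a,\b)=w(x)/W_X(\a,\b)$ has $O(1)$-word numerator and denominator), so each trial costs $O(1)$ expected time by Fact~\ref{lem:ber2}; thus this branch costs $O(N)$. It then remains to bound $\Pr[k\le N]=1-(1-1/N^2)^N\le 1/N$ by Bernoulli's inequality, which gives expected running time $\Pr[k\le N]\cdot O(N)+\Pr[k>N]\cdot O(1)\le O(N)/N+O(1)=O(1)$, exactly the displayed estimate preceding the lemma.

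\textbf{Main obstacle.} The only genuinely delicate point is the correctness coupling: verifying that it is legitimate to give the special rejection treatment only to the single item $A[k]$ located by the bounded-geometric variate and then revert to plain $\Ber(p_x(\a,\b))$ flips for the remaining items, i.e., that this still yields independent $p_x(\a,\b)$-marginals for \emph{all} of $A$ and not merely for $A[k]$ (the point being that, conditioned on the first potential index being $k$, the coins of the later items are still i.i.d.\ $\Ber(1/N^2)$ and independent of the acceptance coins). Everything else---the $p_x(\a,\b)\le 1/N^2$ bound, the $O(1)$-word rational-parameter bookkeeping, and the tail estimate---is routine.
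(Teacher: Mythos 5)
Your proof is correct and follows the same approach as the paper: interpret the bounded geometric variate as the first success in an i.i.d.\ $\Ber(1/N^2)$ scan, accept $A[k]$ by rejection, flip direct coins for the remainder, and bound the expected cost using $\Pr[k\le N]\le 1/N$. You make the coupling step more explicit than the paper does, but the underlying argument is identical.
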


\vspace{1mm}
\noindent
\underline{\em Handling the Certain Instance (Algorithm~\ref{algo:cert}). }
For the certain instance, each item $x$ has a sampling probability $p_x(\a, \b) = 1$. Therefore, Algorithm~\ref{algo:cert} simply collects and returns all these items in the corresponding buckets, giving the following lemma.
\begin{lemma}\label{lmm:algo3}
Algorithm~\ref{algo:cert} returns a correct PSS result with parameters $(\a, \b)$ on the set of items in the certain instance in $O(1 + \ell)$ time, where~$\ell$ is the number of items sampled.
\end{lemma}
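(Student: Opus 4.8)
The plan is to prove two things: (i) the set $T_{2}$ returned by Algorithm~\ref{algo:cert} is exactly the certain instance, with every one of its items placed into $T_{2}$ deterministically and hence with probability exactly $p_x(\a,\b) = 1$; and (ii) the running time is dominated by the cost of copying these items into $T_{2}$.

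For correctness I would first pin down which buckets the loop of Algorithm~\ref{algo:cert} visits. The key structural fact is that the $(\a,\b)$-certain property is monotone in the bucket index: $\frac{2^{i}}{W_X(\a,\b)} \geq 1$ implies $\frac{2^{i'}}{W_X(\a,\b)} \geq 1$ for all $i' \geq i$. Lifting this to groups, a group $G_X(j)$ is $(\a,\b)$-certain exactly when its smallest possible bucket index $j \cdot \log_2 N$ is $(\a,\b)$-certain, and ``being $(\a,\b)$-certain'' is a suffix-closed property of group indices. Consequently the value $i_2 = j_2 \cdot \log_2 N$ computed in Algorithm~\ref{algo:one-level}---the smallest possible bucket index occurring in any $(\a,\b)$-certain group---satisfies: a non-empty bucket $B_X(i)$ lies in an $(\a,\b)$-certain group if and only if $i \geq i_2$. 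Hence the loop, which ranges over all non-empty buckets with index $\geq i_2$, visits precisely the non-empty buckets inside $(\a,\b)$-certain groups, and the union of their contents is exactly the certain instance. Finally, every item $x$ in such a bucket $B_X(i)$ has $w(x) \geq 2^{i} \geq W_X(\a,\b)$, so $p_x(\a,\b) = \min\{w(x)/W_X(\a,\b), 1\} = 1$; Algorithm~\ref{algo:cert} adds $x$ to $T_{2}$ deterministically, i.e., with probability exactly $1 = p_x(\a,\b)$, and the items are (trivially) handled independently since no randomness is used. This establishes that $T_{2}$ is a correct PSS result with parameters $(\a,\b)$ on the certain instance.

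For the time bound I would invoke Fact~\ref{fact:sorted}: the non-empty buckets of BG-Str($X$) are kept in a sorted linked list keyed by bucket index, so the first non-empty bucket with index $\geq i_2$ is found by a single successor query in $O(1)$ time, after which the loop merely walks the list. Each iteration appends the $|B_X(i)| \geq 1$ items of one non-empty bucket in $O(|B_X(i)|)$ time, so the total cost is $O(1 + \sum_i |B_X(i)|)$, where the sum ranges over the non-empty $(\a,\b)$-certain buckets. Since $T_{2}$ is exactly the union of these buckets and $\ell$ is the number of sampled items, $\sum_i |B_X(i)| = |T_{2}| = \ell$, giving the claimed $O(1 + \ell)$ bound (the additive $1$ absorbs the case where no non-empty bucket has index $\geq i_2$).

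The main obstacle---and really the only non-routine part---is the index bookkeeping around the boundary group $j_2$: one must verify that $i_2 = j_2 \cdot \log_2 N$, as derived from the closed-form expression for $j_2$ in Algorithm~\ref{algo:one-level}, is genuinely the least bucket index appearing in a certain group, so that ``$i \geq i_2$'' coincides with ``$B_X(i)$ lies in a certain group.'' This follows from the monotonicity observed above together with the fact that $\lceil \log_2 W_X(\a,\b)\rceil$, and hence $j_2$, can be computed exactly in the Word RAM model by Claim~\ref{claim:logcalculation}. The probabilistic content of the lemma is otherwise immediate.
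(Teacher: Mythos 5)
Your proof is correct and follows the same line of reasoning as the paper, which dispatches the lemma with essentially a one-sentence observation (every item in the certain instance has $p_x(\a,\b)=1$, so Algorithm~\ref{algo:cert} deterministically collects them, and the running time is proportional to the number of items emitted). You spell out the supporting details that the paper leaves implicit: that the linked list from Fact~\ref{fact:sorted} gives the $O(1)$ overhead for locating the first non-empty bucket with index $\geq i_2$ and that walking the list costs $O(1)$ per bucket visited, with each bucket contributing its full size to $\ell$; and that every item in a certain bucket has $w(x)\geq 2^i\geq W_X(\a,\b)$, hence is sampled with probability exactly~$1$ with no randomness used. This is all sound.

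One small remark on scope: the lemma is a statement about Algorithm~\ref{algo:cert} given a threshold $i_2$ delimiting the certain instance, whereas your argument spends effort verifying that the specific closed form $j_2 = \lfloor \lceil\log_2 W_X(\a,\b)\rceil / \log_2 N\rfloor$ in Algorithm~\ref{algo:one-level} produces exactly the smallest certain-group index. That verification is really part of the correctness of Algorithm~\ref{algo:one-level}, not of this lemma, and is not as immediate as you suggest: whether the floor in that expression yields the least $j$ with $j\log_2 N \geq \lceil\log_2 W_X(\a,\b)\rceil$ (rather than one less) requires care, and the paper itself does not re-derive it here. For the proof of this lemma it suffices to assume $i_2$ is the smallest bucket index belonging to an $(\a,\b)$-certain group, which together with monotonicity of the certain property in $i$ gives everything you need.
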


\vspace{1mm}
\noindent
\underline{\em Handling Each Significant Group (Algorithm~\ref{algo:nextl}).}
We now discuss the details for handling each $(\a, \b)$-significant group $G_X(j)$.
The core idea of Algorithm~\ref{algo:nextl} is to first obtain a subset $T_{Y_j}$ of the next-level item set  $Y_j$ such that each next-level item $y_i \in Y_j$ is sampled  independently with probability  
$\min\left\{1, \frac{w(y_i)}{W_X(\a, \b)}\right\}$.
To achieve this, Algorithm~\ref{algo:nextl} invokes a PSS query with parameters $(0, W_X(\a, \b))$ on $Y_j$. 
As each next-level item $y_i$ corresponds to a bucket $B_X(i)$, $T_{Y_j}$ is treated as a list of the buckets corresponding to its next-level items.
And each bucket $B_X(i) \in T_{y_j}$ is called a {\em candidate bucket} because it has a chance to contain a {\em potential item} $x\in B_X(i)$ which is sampled with probability of $p = \min\left\{1, \frac{2^{i+1}}{W_X(\a, \b)}\right\}$. 
If a candidate bucket $B_X(i)$ is confirmed via rejection sampling to have at least a potential item in it, then $B_X(i)$ is called a {\em promising bucket}.

The next step is to extract items from the candidate bucket. 
As this step will be reused later, we summarize the steps in Algorithm~\ref{algo:extract}.
Specifically, Algorithm~\ref{algo:extract} accepts each candidate bucket $B_X(i) \in T_{Y_j}$ as a promising bucket by the following steps, depending on two cases. Denote by $n_i = |B_X(i)|$ the number of items in $B_X(i)$.
\begin{description}[leftmargin = *]
\item[Case 1 ($p \cdot n_i \geq 1$):]
In this case, 
bucket $B_X(i)$ indeed has probability of $1$ to be included in $T_{Y_j}$, i.e., $\text{Pr}[B_X(i) \in T_{Y_j}] = 1$.
Algorithm~\ref{algo:extract} generates a random variate $k \sim \text{B-Geo}(p, n_i + 1)$. 
If $k > n_1$, which happens with probability $q = (1 - p)^{n_i}$, 
then reject~$B_X(i)$ because $\text{Pr}[B_X(i) \in T_{Y_j}] \cdot q = (1-p)^{n_i}$ is exactly the probability that~$B_X(i)$ has no potential items. 
Otherwise, $B_X(i)$ is accepted as a promising bucket and the $k^\text{th}$ item $x$ in $B_X(i)$
is a potential item sampled with probability $p$.

\vspace{1mm}
\item[Case 2 ($p \cdot n_i < 1$):]
In this case, $\text{Pr}[B_X(i) \in T_{Y_j}] = p \cdot n_i$.
Algorithm~\ref{algo:extract} generates a Bernoulli coin $c \sim \text{Ber}\left(\frac{1 - (1 - p)^{n_i}}{p \cdot n_i}\right)$.
If $c = 0$, then reject~$B_X(i)$.
Otherwise,~$B_X(i)$ is accepted 
as a promising bucket, observing that this event happens with probability  $\text{Pr}[B_X(i) \in T_{Y_j}] \cdot \text{Pr}[c=1] = 1 - (1 - p)^{n_i}$, which is exactly the probability of $B_X(i)$ containing at least one potential item.
However, the most tricky aspect is to find the first potential item,~$x$,
in~$B_X(i)$ conditioned on the fact that~$B_X(i)$ is a promising bucket, such that~$x$ is still equivalently sampled with probability of $p$. 
To overcome this, Algorithm~\ref{algo:extract} generates a {\em truncated geometric} random variate $k\sim \text{T-Geo}(p, n_i)$, where the truncated geometric distribution guarantees to return an index $k$ such that (i) $k \leq n_i$, and (ii) $\text{Pr}[k = j] = \frac{p (1 - p)^{j-1}}{1 - (1 - p)^{n_i}}$.
As discussed in Section~\ref{sec:random},
the value of $k$ has the same distribution as the smallest sampled index in $[1, n_i]$, each sampled with probability $p$, conditioned on that at least one index being sampled. 
Therefore, the $k^\text{th}$ item in $B_X(i)$ has a probability of~$p$ to be sampled as a potential item.
\end{description}

As a result, in either case, if~$B_X(i)$ is confirmed as a promising bucket, Algorithm~\ref{algo:extract} always gives the index~$k$ 
of the first potential item~$x$ in~$B_X(i)$ with correct probability,~$p$.
Algorithm~\ref{algo:extract} 
further generates a Bernoulli coin $c \sim \text{Ber}\left(\frac{p_x(\a, \b)}{p}\right)$ (Line~\ref{alg:extract:Bernoulli}) to accept~$x$ and add it to the PSS result,~$T_3$. 
Then the algorithm repeatedly generates the index of the next potential item with bounded geometric distribution $\text{B-Geo}(p, n_i+1)$ and performs the same rejection sampling process 
until the index is $> n_i$.

\begin{lemma}\label{lmm:algo4}
Algorithm~\ref{algo:nextl} returns a correct PSS result with parameters $(\a, \b)$ on the set of items in the given $(\a, \b$)-signifincant group $G_X(j)$. 
\end{lemma}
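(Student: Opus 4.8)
The plan is to prove Lemma~\ref{lmm:algo4} by tracking, for each item $x$ lying in some bucket $B_X(i)$ of $G_X(j)$, the exact probability that $x$ ends up in the returned set $T_3$, and by checking that these events are mutually independent. Algorithm~\ref{algo:nextl} is a two-stage rejection-sampling pipeline: (a) it first obtains a set $T_{Y_j}$ of next-level items via a PSS query on $Y_j$ with parameters $(0, W_X(\a,\b))$ — either through the recursive call to Algorithm~\ref{algo:one-level} or through QueryFinalLevel — and (b) it then runs ExtractItems (Algorithm~\ref{algo:extract}) on the candidate buckets that $T_{Y_j}$ represents. For stage (a) I will take as given that the next-level query is correct, i.e.\ each $y_i\in Y_j$ is placed in $T_{Y_j}$ independently with probability $\min\{1, w(y_i)/W_X(\a,\b)\}$, where $w(y_i)=2^{i+1}\,|B_X(i)|$; formally this is either the inductive hypothesis on the recursion level of Algorithm~\ref{algo:one-level} or the stated guarantee of QueryFinalLevel. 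The empty-group case returns $\emptyset$, which is trivially correct, so assume $G_X(j)$ is non-empty.

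Fix a non-empty bucket $B_X(i)\in G_X(j)$ and write $p=\min\{1, 2^{i+1}/W_X(\a,\b)\}\in(0,1]$ and $n_i=|B_X(i)|$, so that $\min\{1, w(y_i)/W_X(\a,\b)\}=\min\{1, p\,n_i\}$. Two preliminary inequalities need checking. Since every $x\in B_X(i)$ has $w(x)<2^{i+1}$, we get $p_x(\a,\b)\le p$, so $p_x(\a,\b)/p\in[0,1]$ is a valid Bernoulli parameter (this is the coin on Line~\ref{alg:extract:Bernoulli}); and by Bernoulli's inequality $(1-p)^{n_i}\ge 1-p\,n_i$, so in the case $p\,n_i<1$ the quantity $\frac{1-(1-p)^{n_i}}{p\,n_i}\le 1$ is a valid Bernoulli parameter too. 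The heart of the argument is the claim that the multiset of \emph{potential items} of $B_X(i)$ that ExtractItems produces is distributed exactly as $n_i$ independent $\Ber(p)$ coins. In Case~1 ($p\,n_i\ge 1$), $B_X(i)$ is in $T_{Y_j}$ with probability $1$, and the loop that jumps by $\text{B-Geo}(p, n_i+1)$ is precisely the geometric-jump simulation of $n_i$ independent $\Ber(p)$ trials, so the claim is immediate from the interpretation of $\text{B-Geo}$ in Section~\ref{sec:geo}. In Case~2 ($p\,n_i<1$), $B_X(i)$ is in $T_{Y_j}$ with probability $p\,n_i$ and is then retained with probability $\frac{1-(1-p)^{n_i}}{p\,n_i}$, so it becomes a ``promising'' bucket with probability exactly $1-(1-p)^{n_i}$, which is exactly the probability that $n_i$ independent $\Ber(p)$ coins have at least one success; conditioned on being promising, the first potential index is drawn from $\text{T-Geo}(p,n_i)$ and subsequent ones by $\text{B-Geo}(p,n_i+1)$ jumps, which by the conditional interpretation of $\text{T-Geo}(p,n_i)$ from Section~\ref{sec:geo} is the law of $n_i$ independent $\Ber(p)$ coins conditioned on at least one success; combining the ``not promising'' mass $(1-p)^{n_i}$ with this conditional law yields exactly the unconditional $n_i$-fold $\Ber(p)$ product, proving the claim.

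With the claim in hand the conclusion is short. For each potential item $x$ of $B_X(i)$, ExtractItems independently keeps $x$ with probability $p_x(\a,\b)/p$ (Line~\ref{alg:extract:Bernoulli}), so marginally $x\in T_3$ with probability $p\cdot p_x(\a,\b)/p=p_x(\a,\b)$, as required. Independence within a bucket follows because the potential-item indicators are i.i.d.\ and each acceptance coin is a fresh independent draw; independence across the buckets of $G_X(j)$ follows because the next-level query samples the $y_i$ independently and ExtractItems uses independent randomness for each candidate bucket. Hence each item in $\bigcup_{B_X(i)\in G_X(j)}B_X(i)$ appears in $T_3$ independently with probability $p_x(\a,\b)$, i.e.\ $T_3$ is exactly a PSS result with parameters $(\a,\b)$ on that item set.

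I expect the main obstacle to be the Case~2 bookkeeping inside the key claim: one must verify that deliberately under-sampling the bucket at rate $p\,n_i$ rather than at the ``true'' rate $1-(1-p)^{n_i}$, then applying the $\Ber(\frac{1-(1-p)^{n_i}}{p\,n_i})$ correction, and finally using $\text{T-Geo}(p,n_i)$ for the first potential index, really does re-synthesize the i.i.d.\ $\Ber(p)$ law on the whole bucket — \emph{including} the correct amount of mass $(1-p)^{n_i}$ on the empty outcome. This reduction rests squarely on the conditional-probability characterization of the truncated geometric distribution established in Section~\ref{sec:geo} together with the identity $\Pr[\text{B-Geo}(p,n_i+1)\le n_i]=1-(1-p)^{n_i}$; getting the composition of these three randomized steps to telescope correctly is the only genuinely delicate point, the rest being routine rejection-sampling and independence checks.
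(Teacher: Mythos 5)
Your proof is correct and follows essentially the same approach as the paper's: the paper likewise argues (in the "Handling Each Significant Group" discussion that the lemma's proof references) that each bucket becomes promising with probability $1-(1-p)^{n_i}$, that conditioned on this the potential-item indices have the right law via T-Geo/B-Geo, and that the final $\Ber(p_x(\a,\b)/p)$ coin on Line~\ref{alg:extract:Bernoulli} completes the rejection. You state the key claim more formally than the paper does (as an exact equality in law with $n_i$ i.i.d.\ $\Ber(p)$ coins, rather than a per-item marginal), and you explicitly verify that $p_x(\a,\b)\le p$ and $\frac{1-(1-p)^{n_i}}{p\,n_i}\le 1$ so that the Bernoulli parameters are legal — small gaps the paper leaves implicit — but these are refinements of the same argument, not a different route.
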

\begin{proof}
The correctness of Algorithm~\ref{algo:nextl} follows from the discussion above. 
That is, each bucket $B_X(i) \in Y_j$ has a probability of $1 - (1 - p)^{n_i}$ to be accepted as a promising bucket independently;
and each item in a promising bucket $B_X(i)$ has correct probability~$p$ to be identified as a potential item independently, where $n_i = |B_X(i)|$ and $p = \min\{1, \frac{2^{i+1}}{W_X(\a, \b)}\}$. 
Furthermore, each potential item,~$x$, is accepted in the returned sampling result independently with probability~$\frac{p_x(\a, \b)}{p}$. 
Therefore, each item is sampled in the returned sampling result independently with probability~$p_x(\a, \b)$. 
\end{proof}

\begin{theorem}[Correctness]\label{lmm:one-level-correctness}
Given a PSS query with parameters $(\a, \b)$ on $X$, 
let $T \subseteq X$ be the sampling result returned by Algorithm~\ref{algo:one-level}.
Every item $x \in X$ is selected in $T$ independently with probability $p_x(\a, \b)$. 
\end{theorem}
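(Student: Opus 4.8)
The plan is to reduce the theorem to the three component correctness lemmas, Lemmas~\ref{lmm:algo2}, \ref{lmm:algo3}, and~\ref{lmm:algo4}, by showing that Algorithm~\ref{algo:one-level} decomposes the PSS query on~$X$ into three PSS sub-queries on a genuine \emph{partition} of~$X$, and then arguing that the three sub-results are drawn with mutually independent randomness, so that they compose into a single independent sample.

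First I would pin down the partition. By definition, each non-empty bucket~$B_X(i)$ is exactly one of $(\a,\b)$-insignificant, $(\a,\b)$-certain, or $(\a,\b)$-significant, since the three index conditions ($\tfrac{2^{i+1}}{W_X(\a,\b)} \le \tfrac{1}{N^2}$; $\tfrac{2^{i}}{W_X(\a,\b)} \ge 1$; neither) are mutually exclusive and exhaustive. Likewise every non-empty group $G_X(j)$ falls into exactly one of the three types, using the already-noted fact that a group spans only $\log_2 N - 1$ consecutive bucket indices and hence cannot contain both a certain and an insignificant bucket. I would then verify that the thresholds $j_1$ and $j_2$ computed in Algorithm~\ref{algo:one-level} are, respectively, the largest possible insignificant group index and the smallest possible certain group index, so that groups with index $\le j_1$ are insignificant (or empty), groups with index $\ge j_2$ are certain (or empty), and the loop range $j \in \{j_1+1,\ldots,j_2-1\}$ enumerates precisely the significant (or empty) groups; Claim~\ref{claim:logcalculation} is used here only to guarantee these index values are computed exactly and equal the intended quantities. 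It then follows that the item sets $X_{\mathrm{insig}}$, $X_{\mathrm{cert}}$, $X_{\mathrm{sig}}$ collected by Algorithms~\ref{algo:insig}, \ref{algo:cert}, and the loop over Algorithm~\ref{algo:nextl} are pairwise disjoint with union~$X$, with empty buckets/groups and the conceptual zero-weight padding items contributing nothing.

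Next I would invoke the component lemmas on each part. By Lemma~\ref{lmm:algo2}, $T_1$ is a correct PSS result on $X_{\mathrm{insig}}$ (each $x$ selected independently with probability $p_x(\a,\b)$); by Lemma~\ref{lmm:algo3}, $T_2$ is a correct PSS result on $X_{\mathrm{cert}}$, where $p_x(\a,\b)=1$ for all such $x$; and by Lemma~\ref{lmm:algo4}, for each significant group $G_X(j)$ the returned $T_3'$ is a correct PSS result on the items of $G_X(j)$. Since by Lemma~\ref{lmm:next-level-num} there are at most three significant groups, their item sets are disjoint, and the successive calls to Algorithm~\ref{algo:nextl} draw fresh random bits, $T_3=\bigcup_j T_3'$ is a correct PSS result on $X_{\mathrm{sig}}$. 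Finally, because the three top-level calls also use mutually independent random bits and operate on disjoint item sets, $T = T_1 \cup T_2 \cup T_3$ selects every $x \in X$ independently with probability $p_x(\a,\b)$, which is exactly the claim of Theorem~\ref{lmm:one-level-correctness}.

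I expect the main obstacle to be the bookkeeping in the partition step: checking that the floor/ceiling expressions defining $j_1$ and $j_2$ in Algorithm~\ref{algo:one-level}, together with the bucket-index argument passed into QueryInsignificant (Algorithm~\ref{algo:insig}) and the bucket range used in QueryCertain (Algorithm~\ref{algo:cert}), really do carve $X$ into the three classes with no overlapping and no omitted item — in particular, correctly handling the boundary bucket(s) at the edges of the $j_1$-th and $j_2$-th groups. Everything else is a routine assembly of the previously proved lemmas together with the standard observation that independent rejection-sampling coins applied to disjoint item sets combine into one independent sample.
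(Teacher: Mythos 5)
Your proof is correct and takes essentially the same route as the paper, which simply states that the theorem ``follows immediately from Lemmas~\ref{lmm:algo2}, \ref{lmm:algo3} and~\ref{lmm:algo4}.'' You spell out the two details the paper leaves implicit — that the index thresholds $j_1,j_2$ induce a genuine partition of $X$ into the insignificant/certain/significant parts, and that the three sub-queries use independent randomness so the results compose — but these are exactly the unstated ingredients behind the paper's one-line proof, not a different argument.
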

\begin{proof}
This theorem follows immediately from Lemmas~\ref{lmm:algo2}, ~\ref{lmm:algo3} and~\ref{lmm:algo4}.
\end{proof}

\begin{theorem}[Query Complexity]\label{lmm:one-level-query}
Suppose that every PSS query with parameters $(0, W_X(\a, \b))$ on the next-level item set $Y_j$ of a $(\a, \b)$-significant group can be answered in $O(1 + \mu_{Y_j})$ time in expectation, where $\mu_{Y_j}$ is the expected size of the sampling result on $Y_j$.
The overall expected running time of Algorithm~\ref{algo:one-level} is bounded by $O(1 + \mu_X(\a,\b))$. 
\end{theorem}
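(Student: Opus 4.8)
The goal is to prove Theorem~\ref{lmm:one-level-query}, bounding the expected running time of Algorithm~\ref{algo:one-level} by $O(1 + \mu_X(\a,\b))$, assuming the next-level PSS queries are answered in $O(1 + \mu_{Y_j})$ expected time.

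The plan is to account separately for the three parts of the algorithm. First I would observe that the computation of $j_1$ and $j_2$, and all the logarithmic evaluations, take $O(1)$ time by Claim~\ref{claim:logcalculation}. For the \emph{insignificant instance} (Algorithm~\ref{algo:insig}), Lemma~\ref{lmm:algo2} already gives an $O(1)$ expected time bound, so that contributes $O(1)$. For the \emph{certain instance} (Algorithm~\ref{algo:cert}), Lemma~\ref{lmm:algo3} gives $O(1 + \ell)$ where $\ell$ is the number of items returned; since every such item is sampled with probability $1$, its contribution to $\mu_X(\a,\b)$ is exactly $\ell$, so this is $O(1 + \mu_X(\a,\b))$ trivially, and in fact $O(1+\mu_{\text{certain}})$ where $\mu_{\text{certain}}$ is the part of $\mu_X$ coming from certain buckets. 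The remaining and main work is the loop over the $(\a,\b)$-significant groups (Algorithm~\ref{algo:nextl} / Algorithm~\ref{algo:extract}).

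For the significant groups, I would first invoke Lemma~\ref{lmm:next-level-num}: there are at most three such groups, so the loop body runs $O(1)$ times and it suffices to bound the cost of one call to Algorithm~\ref{algo:nextl} for a fixed significant group $G_X(j)$. This call has two stages. Stage one is the recursive/lookup PSS query on $Y_j$ with parameters $(0, W_X(\a,\b))$, which by hypothesis costs $O(1 + \mu_{Y_j})$ expected time. Stage two is \textsc{ExtractItems} (Algorithm~\ref{algo:extract}), which iterates over the candidate buckets in $T_{Y_j}$; I need to show its expected cost is $O(1 + |T_{Y_j}|) + O(\text{expected number of items it outputs})$ and then relate both of these to $\mu_X(\a,\b)$. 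The key identities to establish are: (i) $\mathbb{E}[|T_{Y_j}|] = \mu_{Y_j}$, and each candidate bucket is processed with only $O(1)$ expected work \emph{beyond} the items it emits --- this uses that the $\text{B-Geo}$, $\text{Ber}$, and $\text{T-Geo}$ generations each take $O(1)$ expected time (Facts~\ref{lem:ber2},~\ref{lmm:bounded-geo}, Theorem~\ref{thm:ber}) and that the while-loop in Algorithm~\ref{algo:extract} performs one extra $\text{B-Geo}$ draw per item plus one terminating draw; (ii) $\mu_{Y_j} = \sum_{B_X(i)\in G_X(j)} \min\{1, \frac{w(y_i)}{W_X(\a,\b)}\} = \sum_{B_X(i)\in G_X(j)} \min\{1, \frac{2^{i+1} n_i}{W_X(\a,\b)}\}$, which I will compare to $\sum_{B_X(i)\in G_X(j)} n_i \cdot p_x(\a,\b)$, the contribution of $G_X(j)$'s items to $\mu_X(\a,\b)$; and (iii) the expected number of items \textsc{ExtractItems} emits from bucket $B_X(i)$ equals $n_i \cdot p$ in expectation (before the final accept/reject), which again is within a constant factor of the true $\sum_{x \in B_X(i)} p_x(\a,\b)$ since $p = \min\{1, 2^{i+1}/W_X(\a,\b)\}$ and weights in $B_X(i)$ lie in $[2^i, 2^{i+1})$.

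The main obstacle I anticipate is the bookkeeping in (i)--(iii): showing that the \emph{overhead} of processing a candidate bucket that turns out \emph{not} to be promising (i.e., rejected by the $\text{Ber}$ coin in Case 2, or whose $\text{B-Geo}$ immediately exceeds $n_i$ in Case 1) is only $O(1)$ and can be charged against the $O(1)$ already paid for that bucket appearing in $T_{Y_j}$; and, conversely, that each unit of work spent emitting or rejection-testing a potential item is charged to an expected $\Theta(1)$ fraction of a unit of $\mu_X(\a,\b)$. Concretely, I would argue: $\mathbb{E}[\#\text{potential items extracted from }B_X(i)] \le p \cdot (n_i+1) = O(1 + p\,n_i)$ using the geometric step size, that $p\,n_i \le 2\sum_{x\in B_X(i)} p_x(\a,\b)$ because $p \le 2^{i+1}/W_X(\a,\b) \le 2\,w(x)/W_X(\a,\b)$, and that the number of candidate buckets is $\mathbb{E}[|T_{Y_j}|] = \mu_{Y_j} \le 1 + \sum_{x \in G_X(j)} p_x(\a,\b) + O(1)$ by a similar weight comparison plus the fact that at most one bucket per group can have $\min\{1,w(y_i)/W_X(\a,\b)\}$ close to $1$ from the ``saturated'' side. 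Summing over the $O(1)$ significant groups, and adding the $O(1)$ from the insignificant instance and the $O(\mu_X)$ from the certain instance, yields the claimed $O(1 + \mu_X(\a,\b))$ bound. I would close by noting the $O(1)$-per-group additive terms sum to $O(1)$ precisely because Lemma~\ref{lmm:next-level-num} caps the number of groups.
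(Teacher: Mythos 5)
Your decomposition into insignificant/certain/significant parts and the $O(1)$-per-group additive overhead matches the paper, but the way you bound the cost of the significant groups is genuinely different, and it works. The paper's proof hinges on a separate Claim stating that each candidate bucket $B_X(i) \in T_{Y_j}$ is accepted as \emph{promising} with probability $\Omega(1)$; proving this requires a calculus argument (differentiating $f(a,b) = \frac{1-(1-b)^{a/b}}{a}$ to show $\frac{1-(1-p)^{n_i}}{p\,n_i} \ge 1 - 1/e$ when $p\,n_i < 1$), and then chains $\mu_{Y_j} = O(\mathbb{E}[\#\text{promising}]) \le \mathbb{E}[\#\text{potential items}] = O(\mu)$. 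You bypass that Claim entirely: you bound $\mu_{Y_j} = \sum_i \min\{1, 2^{i+1}n_i/W_X(\a,\b)\}$ and $\mathbb{E}[\#\text{while-loop iterations}] = \sum_i \min\{1, p\,n_i\}\cdot \mathbb{E}[\#\text{iters}\mid\text{in }T_{Y_j}]$ \emph{directly} against $\sum_{x\in G_X(j)} p_x(\a,\b)$, using only the fact that weights in $B_X(i)$ lie in $[2^i, 2^{i+1})$ so each term on the left is at most twice the corresponding bucket's contribution on the right. In fact the clean inequalities are $\mu_{Y_j}\le 2\sum_{x\in G_X(j)} p_x(\a,\b)$ and $\sum_i p\,n_i \le 2\sum_{x\in G_X(j)} p_x(\a,\b)$, with no extraneous ``$+1 + O(1)$'' additive terms and no need for your remark about ``at most one saturated bucket'' — that is a red herring; the factor-of-two weight comparison handles the saturated case just as cleanly. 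Your route buys a shorter, calculus-free argument; the paper's route isolates the promising-bucket probability as a reusable structural fact. One small slip: for the expected time of the T-Geo draw you should cite Theorem~\ref{thm:trun-geo} rather than Theorem~\ref{thm:ber} (the latter handles the Bernoulli variates of types (ii) and (iii)).
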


\begin{proof}
Observe that Algorithm~\ref{algo:one-level} invokes Algorithm~\ref{algo:insig} and Algorithm~\ref{algo:cert} each once, and Algorithm~\ref{algo:nextl} at most three times according to Lemma~\ref{lmm:next-level-num}.
Moreover, by Lemma~\ref{lmm:algo2},
the expected running time of Algorithm~\ref{algo:insig} for the insignificant instance is $O(1)$.  
And by Lemma~\ref{lmm:algo3}, 
Algorithm~\ref{algo:cert} takes $O(1 + \ell)$ time, where $\ell$ is the number of items in the certain instance.
Therefore, it suffices to show that the expected running time of Algorithm~\ref{algo:nextl} for each significant group $G_X(j)$ is bounded by $O(1 + \mu)$, where $\mu$ is the expected size of the PSS result
on the items in $G_X(j)$ with parameters $(\a, \b)$.

First, by the assumption in the theorem statement, the list $T_{Y_j}$ of candidate buckets can be obtained in $O(1 + \mu_{Y_j})$ expected time, where $\mu_{Y_j}$ is equivalent to the expected number of candidate buckets, as discussed earlier. 
Consider the following claim.
\begin{claim}\label{claim:1}
Each candidate bucket $B_X(i) \in T_{Y_j}$ has~$\Omega(1)$ probability to be accepted as a promising.
\end{claim}
If Claim~\ref{claim:1} holds, then $\mu_{Y_j}$ can be bounded by $O(\mu_{\text{prom}})$, where $\mu_{\text{prom}}$ is the expected number of promising buckets. 
Moreover, by definition, 
each promising bucket $B_X(i)$  contains at least one potential item sampled with probability $p = \min\{1, \frac{2^{i+1}}{W_X(\a, \b)}\}$.
Thus, $\mu_{\text{prom}}$ 
is bounded by the expected number of potential items.
Furthermore, each potential item $x \in B_X(i)$ is accepted to be in the sampling result with probability $\frac{p_X(\a, \b)}{p} \geq \frac{1}{2}$.
This is because, by the fact that $x \in B_X(i)$, we have $w(x) \in [2^i, 2^{i+1})$.  
Therefore, $\mu_{Y_j}$ is then bounded by 
the expected number of promising buckets,
then bounded by the expected number of potential items,
and thus, bounded by the expected size of the PSS result, $O(\mu)$.
And hence, all the processing costs on the candidate buckets, promising buckets and the potential items
can be charged to the output cost, in expectation. 

Now, we \textbf{prove} Claim~\ref{claim:1}.
Consider a candidate bucket $B_X(i) \in T_{Y_j}$; let $n_i = |B_X(i)|$ and $p = \min\{1, \frac{2^{i+1}}{W_X(\a, \b)}\}$.
Recall that Algorithm~\ref{algo:nextl} considers two cases:
\begin{description}[leftmargin = *]
\item[Case 1 ($p \cdot n_i \geq 1$):] 
In this case, $B_X(i)$ is rejected with probability $(1 - p)^{n_i} \leq (1 - \frac{1}{n_i})^{n_i} \leq \frac{1}{e}$.
Thus,~$B_X(i)$ is accepted as a promising bucket with probability at least $1 - \frac{1}{e} \in \Omega(1)$.

\item[Case 2 ($p \cdot n_i < 1$):] 
Recall that, in this case, $B_X(i)$ is accepted with probability $\frac{1 - (1 - p)^{n_i}}{p \cdot n_i}$.
We claim that this probability is at least $1 - \frac{1}{e}$. 
To see this, consider the function 
$f(a ,b)=\frac{1-(1-b)^{a/b}}{a}$ for $a = p\cdot n_i < 1$ and $b=p \in (0, 1)$.
It can be verified that $f(a, b)$ decreases monotonically with respect to $a$. 
To see this, the partial derivative of $f$ with respect to $a$ is computed as $\frac{\partial f}{\partial a}=\frac{-\frac{a}{b}\cdot (1-b)^{a/b}\cdot \log (1-b)+(1-b)^{a/b}-1}{a^2}$.
Thus, it suffices to consider the sign of the numerator of $\frac{\partial f}{\partial a}$.

Let $c=(1-b)^{1/b}$ and define $g(x)=-x \cdot c^x \cdot \log c +c^x-1=c^x(1-x\log c)-1$. 
Clearly, $g(0)=0$ and $g'(x)=-(\log c) \cdot c^x+c^x \cdot \log c \cdot  (1-x\log c)=-x\cdot c^x\cdot (\log c)^2<0 $ for $x>0$.
This implies that $g(x)<0$ for $x\in (0,1)$ and hence, $\frac{\partial f}{\partial a}<0$ for $a\in (0,1)$.
Thus, we have $f(a,b)> 1-(1-b)^{1/b} > 1- \frac{1}{e} \in \Omega(1)$.
As a result, $B_X(i)$ is accepted as a promising bucket with probability at least~$\Omega(1)$.
\end{description}
Combining the above two cases, Claim~\ref{claim:1} holds.
Moreover, our results on random variate generation in Section~\ref{sec:random} imply that all the random variates used in Algorithm~\ref{algo:nextl} can be generated in $O(1)$ expected time.  
Putting all the above together, Theorem~\ref{lmm:one-level-query} follows.
\end{proof}

Theorem~\ref{lmm:one-level-query} implies that, as long as we can answer the PSS queries optimally 
on the next-level instances, we can answer the PSS query on~$X$ optimally.
Moreover, observe that the next-level item set size is at most~$\log_2 N$, logarithmic in the size of~$X$. 
This motivates an idea of recursively applying the same bucket-grouping technique on next-level item set~$Y_j$ of each group~$G_X(j)$. 
When the next-level instance 
is ``small'' enough, we use a {\em lookup table} to answer the PSS queries on them optimally.
While this idea looks natural,
certain non-trivial technical challenges remain to be addressed.
First, by Lemma~\ref{lmm:next-level-num}, each recursion level increases the number of next-level instances by a factor of~$3$. 
Thus, the recursion cannot be so deep that it exceeds~$O(1)$ recursion levels.
Otherwise, a super-constant factor would be introduced to the query time complexity. 
Second, 
having only~$O(1)$ recursion levels does not reduce the problem size to~$O(1)$. 
Thus, it is not trivial to design a lookup table with space consumption bounded by~$O(N)$. 
Third, the next-level instances are {\em dynamic}, depending on the query parameters $(\a, \b)$. 
However, the lookup table is {\em static}. 
As a result, reducing the next-level instances to problems where the lookup table is applicable during the query time is a challenging task.

\subsection{The Three-Level Sampling Hierarchy}
Given a set $S$ of $n$ items, the three-level sampling hierarchy of $S$ is constructed as follows:
\begin{itemize}[leftmargin = *]
\item \textbf{Level-1}: 
Construct BG-Str($S$);

\item \textbf{Level-2}:
For each group $G_S(j)$ in BG-Str($S$):
\begin{itemize}
\item construct BG-Str($Y_j$) on the next-level item set $Y_j$ of $G_S(j)$;
\item denote each bucket with bucket index $i$ by $B_{Y_j}(i)$, and 
each group with group index $k$ by~$G_{Y_j}(k)$;
\item denote the next-level item corresponding to bucket~$B_{Y_j}(i)$ 
by~$z_i$ and the next-level item set of group~$G_{Y_j}(k)$ by~$Z_k$.

\item \textbf{Level-3}: 
For each group $G_{Y_j}(k)$ in BG-Str($Y_j$),
\begin{itemize}
\item construct BG-Str($Z_k$) on the next level item set $Z_k$ of $G_{Y_j}(k)$;
\item denote each bucket with bucket index $i$ by $B_{Z_k}(i)$ and its corresponding next-level item by~$v_i$;
\item denote the set of all the next-level items of the buckets by~$V_k$; the PSS instance on~$Z_k$ is called a {\em final-level instance}.
\end{itemize}
\end{itemize}
\end{itemize}

\begin{figure}[t]
    \hspace*{-4mm}\includegraphics[width=1\textwidth]{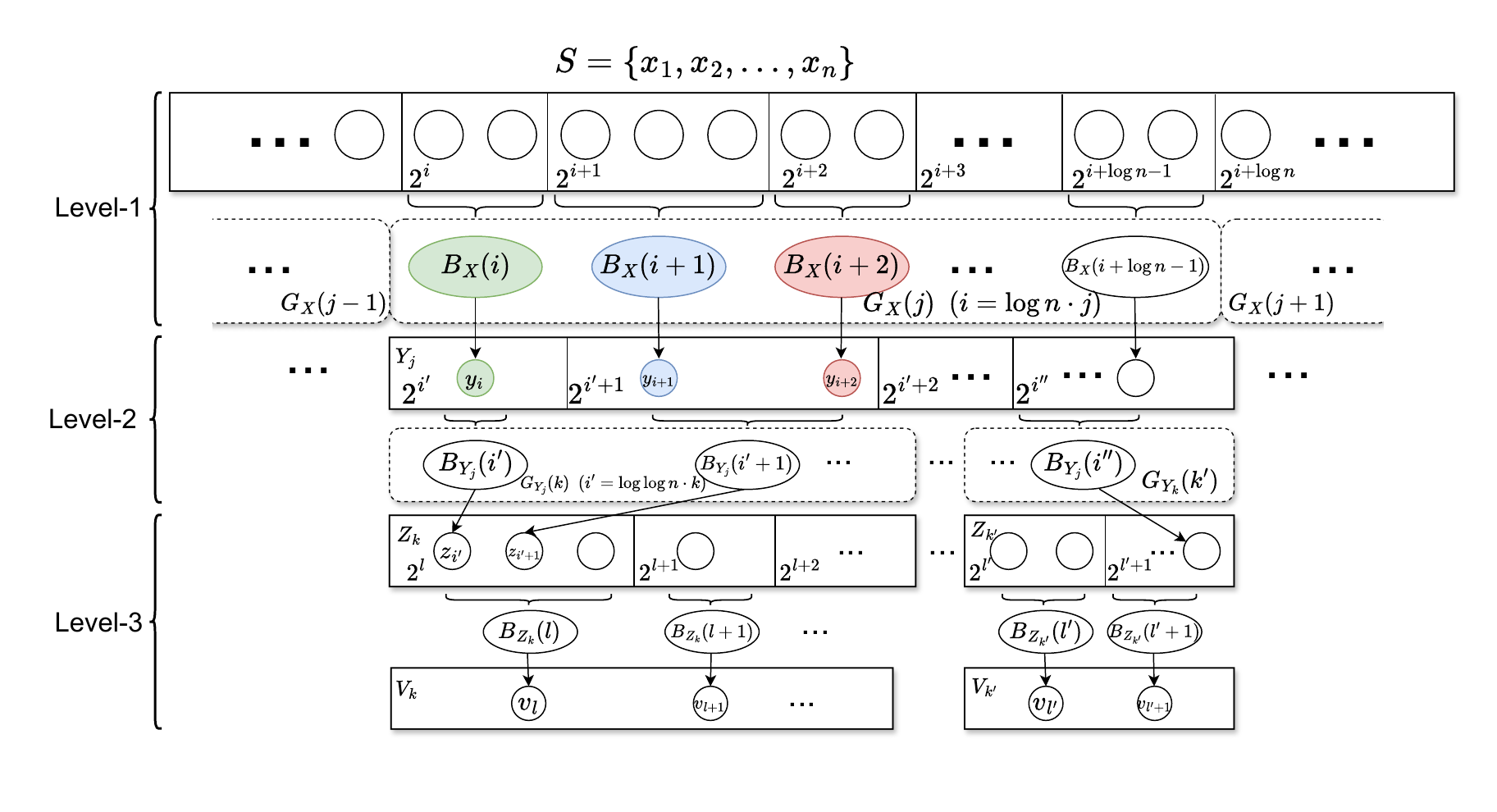}
\vspace{-6mm}
    \caption{Visualization of the three-level sampling hierarchy. Here, circles represent items and ovals represent buckets. 
The diagram illustrates the structure of a level-1 group $G_X(j)$, where each bucket in $B_X(i)$ corresponds to an item $y_i\in Y_j$. 
The BG-Str($Y_j$) contains multiple buckets in multiple groups. 
Each group, e.g. $G_{Y_j}(k)$ corresponds to a next-level item set, e.g. $Z_k$. 
In level-3, each item set $Z_k$ generates one final-level instance $V_k$, in which each item corresponds to a bucket in level-3.}
    \label{fig:hierarchy}
\end{figure}

\begin{lemma}
Given a set $S$ of $n$ items, the three-level sampling hierarchy can be constructed in $O(n)$ time with $O(n)$ space consumption.
\end{lemma}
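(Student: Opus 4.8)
The plan is to bound, separately, the total construction time and the total space of all the BG-Str's across the three levels, and then argue that the combined cost telescopes down geometrically.

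First I would recall from Lemma~\ref{lmm:one-level-pre} that a single BG-Str on an item set of size $m$ takes $O(m)$ time and $O(m)$ space. So the total cost is governed by the sum of the sizes of all the item sets appearing in the hierarchy. At level~1 there is one item set, $S$, of size $n$. At level~2, for each group $G_S(j)$ we build BG-Str on $Y_j$, and $|Y_j|$ is at most $\log_2 n$ (since each group spans $\log_2 n$ consecutive bucket indices and there is at most one next-level item per non-empty bucket). Crucially, the $Y_j$'s are built on disjoint groups, and each non-empty bucket of BG-Str($S$) contributes exactly one next-level item, so $\sum_j |Y_j|$ equals the number of non-empty buckets in BG-Str($S$), which is at most $n$ (it is also at most $\lfloor \log_2 w_{\max}\rfloor + 1 \le d$, but the bound $n$ is what we need). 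Hence the total level-2 construction cost and space is $O(n)$. The same argument applies at level~3: for each $G_{Y_j}(k)$ we build BG-Str($Z_k$) with $|Z_k| \le \log_2 |Y_j| \le \log_2\log_2 n$, and $\sum_k |Z_k|$ is at most the number of non-empty buckets in BG-Str($Y_j$), which is at most $|Y_j|$; summing over all $j$ gives $\sum_{j,k} |Z_k| \le \sum_j |Y_j| \le n$, so level~3 also costs $O(n)$ total.

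Therefore the whole three-level hierarchy is the disjoint union of BG-Str's whose sizes sum to $O(n)$, and summing the per-structure bounds of Lemma~\ref{lmm:one-level-pre} over a constant number (three) of levels yields $O(n)$ total construction time and $O(n)$ total space. I would also remark that the conceptual zero-weight padding used to make each $N$ a power of $16$ only inflates each size by a constant factor, so it does not affect these bounds; and that we do not actually materialize padding items, so the space stays linear in the real cardinality.

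The main obstacle — really the only subtle point — is making precise the claim that $\sum_j |Y_j| \le n$ (and its level-3 analogue), i.e.\ that the next-level item sets at a given level are supported on a collection of sets whose total size does not blow up. This hinges on two facts already established: the next-level items are created only from \emph{non-empty} buckets, and distinct groups partition the buckets, so there is no double counting; combined with the trivial observation that a set of $m$ items has at most $m$ non-empty buckets, the telescoping $n \ge (\text{non-empty buckets at level }1) \ge \sum_j |Y_j| \ge \sum_j(\text{non-empty buckets in }Y_j) \ge \sum_{j,k}|Z_k|$ goes through. Once that chain of inequalities is stated, the rest is a routine application of Lemma~\ref{lmm:one-level-pre} and the fact that there are only $O(1)$ levels.
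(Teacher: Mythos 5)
Your proof is correct and follows essentially the same approach as the paper's: invoke Lemma~\ref{lmm:one-level-pre} for each BG-Str, observe that the total number of (next-level) items at each level is at most $n$ because each next-level item corresponds to a distinct non-empty bucket, and note that there are only three levels. You have simply spelled out the paper's one-line justification in more detail, which is fine.
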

\begin{proof}
This lemma follows from Lemma~\ref{lmm:one-level-pre} and the facts that, (i) at each level, the total number of the (next-level) items is at most $n$, and (ii) there are only three levels.
\end{proof}

\noindent
{\bf The Overall Query Algorithm.}
To perform a PSS query with parameters $(\a, \b)$ on $S$, the overall query algorithm 
is as simple as invoking Algorithm~\ref{algo:one-level}, Query(BG-Str($S$), $(\a, \b)$, $\ell_{\text{cur}} = 1$).

\begin{lemma}[Correctness]
Every PSS query with parameters $(\a, \b)$ on $S$ can be correctly answered with the three-level sampling hierarchy.
\end{lemma}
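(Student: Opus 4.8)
The plan is to derive this as a threefold application of the one-level correctness guarantee, Theorem~\ref{lmm:one-level-correctness}, one application per level of the hierarchy. First I would note that, by construction, answering a PSS query with parameters $(\a, \b)$ on $S$ amounts to running Query(BG-Str($S$), $(\a, \b)$, $1$), i.e.\ Algorithm~\ref{algo:one-level} with $X = S$. Theorem~\ref{lmm:one-level-correctness} already asserts that this returns each $x \in S$ independently with probability $p_x(\a, \b)$; crucially, its proof (via Lemmas~\ref{lmm:algo2}--\ref{lmm:algo4}) is conditional, using only that every PSS query with parameters $(0, W_X(\a, \b))$ on the next-level item set $Y_j$ of an $(\a, \b)$-significant group is answered correctly, and being otherwise agnostic about how that sub-result is produced. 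Hence the whole task reduces to discharging this hypothesis at each level.

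Second, I would peel the levels one by one. By Algorithm~\ref{algo:nextl} with $\ell_{\text{cur}} = 1$, the PSS query on each such $Y_j$ is answered by the recursive call Query(BG-Str($Y_j$), $(0, W_S(\a, \b))$, $2$), again an instance of Algorithm~\ref{algo:one-level}, now on the level-2 structure BG-Str($Y_j$). Since $(0, W_S(\a, \b))$ is still a pair of non-negative rationals with $O(1)$-word numerators and denominators, Theorem~\ref{lmm:one-level-correctness} applies verbatim, reducing level-2 correctness to the correctness of the PSS queries on the level-3 item sets $Z_k$ with parameters $(0, W_{Y_j}(0, W_S(\a, \b)))$; observe $W_{Y_j}(0, W_S(\a, \b)) = W_S(\a, \b)$, so the parameter pair stabilizes from level 2 onward. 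At level 3, Algorithm~\ref{algo:nextl} is entered with $\ell_{\text{cur}} = 2 \neq 1$, so the PSS query on $Z_k$ is delegated to QueryFinalLevel(BG-Str($Z_k$), $(0, W_S(\a, \b))$), whose correctness is established later, in Section~\ref{sec:adapter}, via the dynamic adapter and the static lookup table. Chaining these three applications closes the argument.

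Third, I would tidy two minor points. The conceptual padding that makes every item-set size a power of $16$ only adds zero-weight items; each has selection probability $0$ and is therefore never returned, so it cannot affect correctness. And each next-level item set has size at most the base-$2$ logarithm of the current one, because a group spans at most $\log_2 N$ consecutive bucket indices; thus after three levels the instances handed to the lookup table are small enough for it to apply, which is precisely why a three-level hierarchy suffices and the recursion need not descend further.

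The hard part is honestly not in this lemma: it is a corollary of Theorem~\ref{lmm:one-level-correctness} invoked three times, and the only genuine dependency is the forward reference to the correctness of QueryFinalLevel in Section~\ref{sec:adapter}; the rest is bookkeeping — tracking which parameter pair is passed at each level and checking it still meets the hypotheses of Theorem~\ref{lmm:one-level-correctness}.
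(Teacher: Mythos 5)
Your proof is correct and takes essentially the same route as the paper, which simply states that the lemma ``follows immediately by applying Theorem~\ref{lmm:one-level-correctness} inductively.'' You unfold that induction explicitly---tracking the parameter pair through the levels, noting it stabilizes at $(0, W_S(\a,\b))$ after level 1, and flagging the forward dependency on the correctness of QueryFinalLevel---which is exactly the content the paper is compressing.
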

\begin{proof}
Follows immediately by applying Theorem~\ref{lmm:one-level-correctness} inductively.
\end{proof}

\begin{lemma}[Query Complexity]\label{lmm:hierarchy-query-time}
If there exists an algorithm that can solve every final-level instance on $Z_k$ optimally in $O(1 + \mu_{Z_k})$ expected time, then the PSS query with parameters $(\a, \b)$ on $S$ can be answered optimally in $O(1 + \mu_S(\a, \b))$ expected time.
\end{lemma}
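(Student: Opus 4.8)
<br>

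The plan is to prove Lemma~\ref{lmm:hierarchy-query-time} by a straightforward induction on the recursion level, bottom-up, using Theorem~\ref{lmm:one-level-query} as the inductive step. The base case is the final level: by hypothesis, every final-level instance on $Z_k$ is solved in $O(1 + \mu_{Z_k})$ expected time, which is exactly the ``optimal'' bound needed to feed into Theorem~\ref{lmm:one-level-query} one level up. First I would argue the level-2 step: fix a level-2 structure BG-Str($Y_j$); its next-level instances are precisely the final-level instances on the sets $Z_k$, one per $(\a,\b)$-significant group $G_{Y_j}(k)$ (here the relevant parameters are $(0, W_S(\a,\b))$ rather than $(\a,\b)$, but that is immaterial to the complexity statement). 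Since each $Z_k$-query runs in $O(1+\mu_{Z_k})$ expected time by assumption, Theorem~\ref{lmm:one-level-query} applies and gives that the PSS query on $Y_j$ runs in $O(1 + \mu_{Y_j})$ expected time. Then I would repeat the argument one more level up: the level-1 structure BG-Str($S$) has next-level instances on the sets $Y_j$, each now known to be answerable in $O(1+\mu_{Y_j})$ expected time, so a second application of Theorem~\ref{lmm:one-level-query} yields that the PSS query on $S$ runs in $O(1 + \mu_S(\a,\b))$ expected time, which is the claim.

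The reason this does not accumulate a super-constant blow-up is exactly the point flagged in the discussion after Theorem~\ref{lmm:one-level-query}: there are only three levels, so Theorem~\ref{lmm:one-level-query} is invoked a constant number of times along any root-to-leaf path, and by Lemma~\ref{lmm:next-level-num} each level spawns at most three significant groups, so the total number of next-level instances touched by a single query is at most a constant ($O(3^3)$). I would make this explicit: the recursion tree has $O(1)$ nodes, each node's local work outside of recursive calls is $O(1)$ plus a charge against the items it outputs (by Lemmas~\ref{lmm:algo2}, \ref{lmm:algo3}, and the charging argument inside the proof of Theorem~\ref{lmm:one-level-query}), and the output charges telescope to $O(\mu_S(\a,\b))$ overall since every sampled item of $S$ is accounted for exactly once at the leaf where it is extracted.

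One subtlety I would be careful to address: the next-level instances at levels $2$ and $3$ are queried with parameters of the form $(0, W)$ for some parameterized total weight $W$, not with the original $(\a,\b)$; I should note that Theorem~\ref{lmm:one-level-query} is stated precisely for PSS queries with parameters $(0, W_X(\a,\b))$ on next-level item sets, so its hypothesis and conclusion match what the hierarchy actually invokes (Algorithm~\ref{algo:nextl} calls \texttt{Query} on BG-Str($Y_j$) with $(0, W_S(\a,\b))$, and similarly at level~3 via \texttt{QueryFinalLevel}). The only genuine obstacle here is not a mathematical one but a bookkeeping one: being precise that the ``expected sample size'' quantities $\mu_{Y_j}$ and $\mu_{Z_k}$ appearing in the hypotheses of the nested applications of Theorem~\ref{lmm:one-level-query} are consistent with how that theorem was proved (namely, $\mu_{Y_j}$ is simultaneously the expected number of sampled next-level items and, up to constants, the expected number of candidate/promising buckets and hence chargeable to $\mu$ one level down). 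Given Theorem~\ref{lmm:one-level-query} already packages that charging argument, the proof of this lemma is essentially a one-paragraph induction, and I would write it as such.
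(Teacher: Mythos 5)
Your proof is correct and follows essentially the same approach as the paper: a bottom-up induction over the three levels using Theorem~\ref{lmm:one-level-query} as the inductive step, with Lemma~\ref{lmm:next-level-num} bounding the branching factor at each level to a constant (the paper counts at most $9$ final-level instances; your $O(3^3)$ bound is a slight over-count but still a constant). Your version is somewhat more explicit than the paper's terse proof, particularly regarding the parameter change to $(0, W_S(\a,\b))$ at inner levels and the telescoping of output charges, but the underlying argument is identical.
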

\begin{proof}
First, by Lemma~\ref{lmm:next-level-num}, there are at most $3$ next-level instances on $Y_j$'s at level-1.
By the same lemma, each of these instances on $Y_j$ can produce at most $3$ next-level instances on $Z_k$'s  at level-2.
And each instance on $Z_k$ corresponds to a final-level instance $V_k$.
Therefore, there are in total at most~$9$ final-level instances.

Second, by mathematical induction from the final level, with Theorem~\ref{lmm:one-level-query}, the query complexity of the one-level algorithm, and by the fact that there are at most~$9$ final-level instances,
the overall expected query time complexity follows.
\end{proof}

Thus, the problem reduces to solving the final-level PSS instances optimally.
Our approach is to use a lookup table.
However, the basic idea of a lookup table is 
to {\em hard code} all the possible sampling results for all possible {\em input configurations} for a special family of {\em static subset sampling} problems.  
Therefore, three main challenges must be addressed.
The first challenge is to {\em efficiently} construct a suitable static subset sampling instance from a final-level PSS instance, where the latter is dynamic and parameterized on the fly by $(\a, \b)$. 
The second challenge is how to ``encode'' the configuration of static subset sampling instance such that it can be evaluated and located in the lookup table in $O(1)$ time.
Third, the overall space consumption of both the lookup table and the ``encoder'' must be bounded by $O(n)$. 
Next, we introduce the remaining two components in our HALT structure, i.e., the {\em adapters} and the {\em lookup table}, to address these technical challenges. 

\subsection{The Lookup Table in HALT}
\label{sec:LT}

In order to set up the context of the goal which the adapters aim to achieve, 
we first introduce the lookup table in our HALT structure.

Define $n_0$ be the size of the item set $S$ when the HALT structure is constructed.
Our lookup table aims to solve a {\em special static subset sampling} (4S) problem in the following form:
\begin{itemize}[leftmargin = *]
\item the input item set $V$ contains {\em exactly} $K$ items $\{v_1, v_2, \ldots, v_K\}$, where $K  = 2 \cdot \log_2 \log_2 \log_2 n_0 $;
\item the sampling probability of the $i^{\text{th}}$ item~$v_i$ in~$V$ is in the form of $p_i = \min \{1, \frac{2^{i+1} \cdot c_i}{m^2} \}$, where~$c_i$ is an integer in the range of~$[0, m]$ and~$m$ is an integer bounded by $O(\log \log n_0)$; as a result, the probability of each item must be an integer multiple of~$\frac{1}{m^2}$. 
\end{itemize}
Therefore, 
every possible input to the above 4S problem can be uniquely specified by a $K$-dimensional vector $\vec{c} = (c_1, c_2, \ldots, c_K)$, where each $c_i$ is an integer in $[0, m]$.  
Each $\vec{c}$ is called an {\em input configuration} of the 4S problem.  
The total number of possible input configurations is $(m + 1)^K$. 

Consider a fixed input configuration $\vec{c}$.
Every possible subset sampling result can be encoded by a $K$-bit string $r$, where the $i^{\text{th}}$ bit $r[i] = 1$ indicates the item $v_i$ is in the sampling result and $r[i] = 0$ means $v_i$ is not.
Thus, the probability of obtaining a subset sampling result $r$ can be calculated as:
$$\text{Pr}(r) = \prod_{i = 1}^K \left(r[i]\cdot p_i + (1 - r[i]) \cdot (1 - p_i)\right)\,.$$
Since $p_i$ is an integer multiple of $\frac{1}{m^2}$, $\text{Pr}(r)$ is an integer multiple of $\frac{1}{(m^2)^K}$.
Furthermore, observe that there are exactly~$2^K$ possible subset sampling results,~$r$. 
Summing $\text{Pr}(r)$ over all $r$'s,  we must have $\sum_{r \in \{0, 1\}^K} \text{Pr}(r) = 1$.

Therefore, all the possible subset sampling results for a fixed input configuration,~$\vec{c}$, can be encoded in an array $A(\vec{c})$ of $(m^2)^K$ cells and each cell is an $K$-bit string.
Specifically, each subset sampling result $r$ is stored in exactly $\text{Pr}(r) \cdot (m^2)^K$ cells in the array $A(\vec{c})$.  
As a result, 
given an input configuration $\vec{c}$ to the 4S problem, 
a subset sampling result for $\vec{c}$ can be obtained 
by picking a cell in $A(\vec{c})$ uniformly at random and returning 
the $K$-bit string $r$ as the subset sampling result. 
Clearly, each subset sampling result $r$ will be returned with probability exactly $\text{Pr}(r)$ independently.

Figure~\ref{fig:table} shows the detailed structure of the lookup table.
\begin{figure}[tbp]
    \centering
    \includegraphics[width=\textwidth]{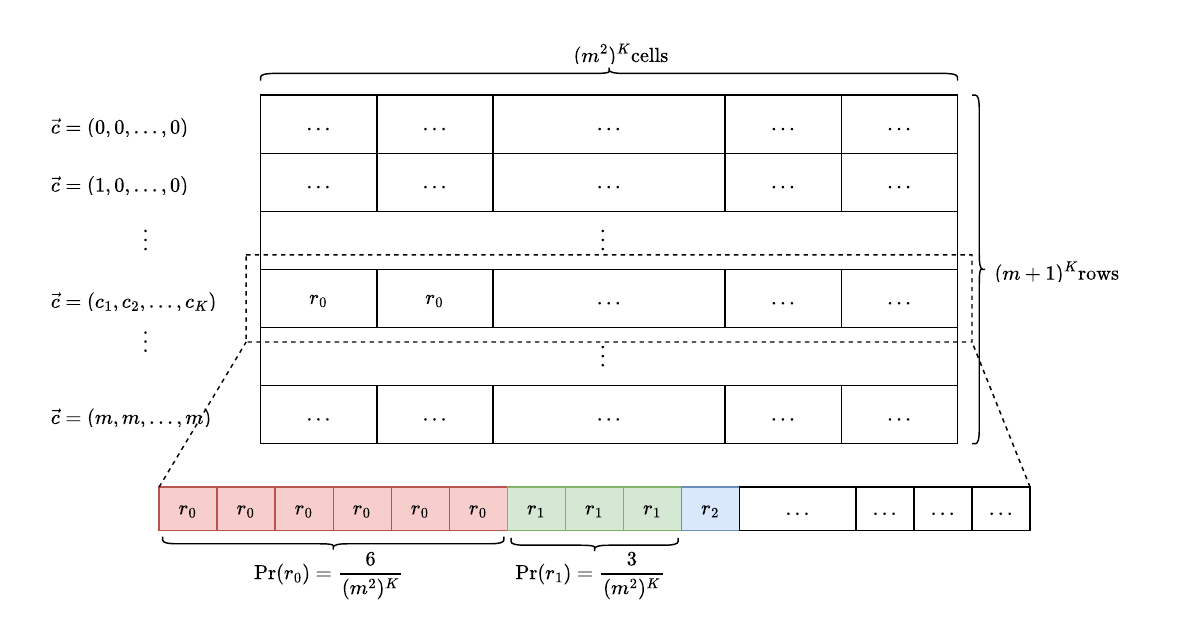}
\vspace{-4mm}
    \caption{Visualization of a lookup table. 
It contains $(m+1)^K$ rows, each row corresponds to a configuration, and each row contains $(m^2)^K$ cells. 
Each cell holds a $K$-bit string, e.g. $r_0,r_1,\ldots$. 
As shown in the figure, for a given configuration $\vec{c}=\{c_1,c_2,\ldots,c_K\}$, if $\text{Pr}(r_0)=\frac{6}{(m^2)^K}$, then there are 6 cells storing $r_0$ in the corresponding row.\vspace{-5mm}}
    \label{fig:table}
\end{figure}

\vspace{2mm}
\noindent{\bf The Lookup Table and Query Algorithm.}
Our lookup table $\mathcal{T}$ contains $(m + 1)^K$ rows, where each row is an array $A(\vec{c})$ corresponding to an input configuration $\vec{c}$.
Given an input configuration $\vec{c}$,
the query algorithm works as follows:
\begin{itemize}[leftmargin = *]
\item locate the corresponding row $A(\vec{c})$ in $\mathcal{T}$ of $\vec{c}$;
\item uniformly at random select an index in $A(\vec{c})$ and return the $K$-bit string in the cell at the index as the subset sampling result.  
\end{itemize}

\vspace{-1mm}
\begin{lemma}
Each input configuration~$\vec{c}$ can be represented in $O(1)$ words.
\end{lemma}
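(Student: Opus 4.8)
The plan is to count the number of bits needed to write down a configuration vector $\vec{c} = (c_1, \ldots, c_K)$ and show this total is $O(d)$, i.e., $O(1)$ words. First I would recall the two parameters governing the size: $K = 2\log_2\log_2\log_2 n_0$ and $m \in O(\log\log n_0)$. Each coordinate $c_i$ is an integer in $[0, m]$, so it takes $\lceil \log_2(m+1)\rceil$ bits to store, which is $O(\log\log\log n_0)$ bits. There are $K$ such coordinates, so a naive bit-packing of the whole vector uses $K \cdot O(\log m) = O(\log\log\log n_0) \cdot O(\log\log\log n_0) = O((\log\log\log n_0)^2)$ bits.

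The second step is to compare this quantity against the word length $d$. Since the excerpt establishes $d \in \Omega(\log(n_{\max} \cdot w_{\max}))$ and $n_0 \le n_{\max}$, we have $d \in \Omega(\log n_0)$. Thus $(\log\log\log n_0)^2 = o(\log n_0)$, so the entire configuration vector fits comfortably within a single word (and certainly within $O(1)$ words). I would make this precise by noting that for all sufficiently large $n_0$, $(\log_2\log_2\log_2 n_0)^2 \le \log_2 n_0 \le d$, and for the finitely many small values the bound $O(1)$ words is trivial.

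Concretely, I would encode $\vec{c}$ by treating it as a number in base $m+1$: the value $\sum_{i=1}^{K} c_i (m+1)^{i-1}$ lies in $[0, (m+1)^K - 1]$, which requires at most $\lceil K \log_2(m+1) \rceil$ bits. This integer can be computed from $\vec{c}$ (and $\vec{c}$ recovered from it) in $O(K) = O(1)$ time in the Word RAM model via repeated multiply/add and divide-with-remainder operations, so the representation is not merely compact but also efficiently manipulable — which will matter when the adapters need to produce configurations and when we index into the lookup table.

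The main obstacle here is essentially bookkeeping rather than a genuine difficulty: one must be careful that the asymptotics are stated against the right quantity ($n_0$, the size of $S$ at construction time, not the current $|S|$), and that the chain of iterated logarithms is handled correctly so that the squared term is genuinely sublogarithmic. There is also a minor subtlety that the lemma as stated only claims $O(1)$ \emph{words} for a single configuration — this is weaker than what the lookup table's space analysis will ultimately need (namely that the table's $(m+1)^K$ rows each of $(m^2)^K$ cells fit in $O(n_0)$ space), but that larger claim is the subject of a separate lemma; here I only need the per-configuration bound, which follows immediately from the bit count above.
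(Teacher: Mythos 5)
Your proposal is correct and follows essentially the same approach as the paper: count the bits needed (each $c_i$ takes $\lceil\log_2(m+1)\rceil$ bits, there are $K$ of them), and observe that the total is well under $d \in \Omega(\log n_0)$. Your derived bound $O((\log\log\log n_0)^2)$ is in fact tighter than the paper's stated $O(\log n_0)$, and your explicit base-$(m+1)$ encoding is a helpful concrete realization, but the substance of the argument is the same.
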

\begin{proof}
Since each~$c_i$ is an integer in $[0, m]$, it can be encoded in $\lceil \log_2 (m+1)\rceil $ bits. Thus, $\vec{c}$ can be encoded in $K \cdot \lceil \log_2 (m+1)\rceil \in O( \log n_0) \subseteq O(d)$ bits, equivalently, $O(1)$ words, where $K \in O(\log \log \log n_0)$ and $m \in O(\log \log n_0)$.  
\end{proof}

\vspace{-1mm}
\begin{lemma}
The lookup table~$\mathcal{T}$  answers each subset sampling query for each configuration~$\vec{c}$ for the 4S problem in $O(1 + \mu(\vec{c}))$ expected time, where~$\mu(\vec{c})$ is the expected size of the subset sample.
\end{lemma}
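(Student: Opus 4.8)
The plan is to establish two things: that locating the correct row $A(\vec{c})$ takes $O(1)$ time, and that the uniform-random-cell step, together with reading off the $K$-bit string, takes $O(1 + \mu(\vec{c}))$ expected time. For the first part, I would use the preceding lemma that each configuration $\vec{c}$ fits in $O(1)$ words: treat the encoded bitstring of $\vec{c}$ as an integer index into an array of $(m+1)^K$ rows. Since $(m+1)^K \in 2^{O(\log n_0)} = n_0^{O(1)}$, which is polynomial in $n_0$, the table has a representable number of rows, and computing the row offset from $\vec{c}$ is a single arithmetic operation on $O(1)$-word integers, hence $O(1)$ time in the Word RAM model. (The $O(n)$ space bound for the table itself is handled elsewhere; here we only need the time bound.)

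For the second part, I would argue that each row $A(\vec{c})$ has exactly $(m^2)^K$ cells, a fixed size computable in $O(1)$ time, so we can draw a uniformly random index into it in $O(1)$ expected time by generating $O(1)$ uniformly random words and using rejection if the range is not a power of two — this is standard and only costs $O(1)$ expected time since $(m^2)^K$ fits in $O(1)$ words. Reading the cell gives a $K$-bit string $r$; by the construction described above, $r$ is returned with probability exactly $\mathrm{Pr}(r) = \prod_{i=1}^K (r[i]\cdot p_i + (1-r[i])(1-p_i))$, which is exactly the probability that a subset-sampling query on $V$ returns the subset encoded by $r$. So correctness is immediate from the construction.

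The remaining point is the $\mu(\vec{c})$ term in the running time. Here the subtlety is that the $K$-bit string is returned packed into $O(1)$ words, but to deliver a genuine subset-sampling result we must enumerate the items of the sample, i.e.\ list the indices $i$ with $r[i]=1$. I would use the standard Word RAM trick of repeatedly extracting the lowest set bit of $r$ via $r \mathbin{\&} \neg(r-1)$ and then clearing it, which lists all $s$ set bits in $O(1 + s)$ time, where $s = |\{i : r[i]=1\}|$ is the size of the returned sample. Taking expectations, $\mathbb{E}[s] = \sum_{i=1}^K p_i = \mu(\vec{c})$, so the total expected time is $O(1) + O(1) + O(1 + \mathbb{E}[s]) = O(1 + \mu(\vec{c}))$.

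I expect the main obstacle — though a mild one — to be making precise that all the per-query arithmetic (computing the row offset, computing $(m^2)^K$, sampling a uniform index in that range) is genuinely $O(1)$ in the Word RAM model; this rests entirely on the earlier lemma that $\vec{c}$, and hence these derived quantities, fit in $O(1)$ words because $K \in O(\log\log\log n_0)$ and $m \in O(\log\log n_0)$. Everything else is routine: correctness follows directly from how the rows were populated, and the output-sensitive running time follows from bit-scanning the $K$-bit result and linearity of expectation.
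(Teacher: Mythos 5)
Your proposal is correct and follows essentially the same route as the paper's proof: $O(1)$ time to locate the row because $\vec{c}$ fits in $O(1)$ words, $O(1)$ expected time to draw a uniform random cell index, and $O(1+|r|)$ time to unpack the $K$-bit result, with $\mathbb{E}[|r|]=\mu(\vec{c})$ by linearity of expectation. Your write-up is somewhat more explicit than the paper's (spelling out the rejection-sampling step for the uniform index and the lowest-set-bit extraction loop), but there is no substantive difference in the argument.
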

\begin{proof}
Since each~$\vec{c}$ can be encoded in $O(1)$ words, locating the row~$A(\vec{c})$ in the lookup table~$\mathcal{T}$ can be performed in~$O(1)$ time.
By our assumption at the beginning of this section, generating an index of~$A(\vec{c})$ uniformly at random
can be achieved in~$O(1)$ expected time.
Retrieving the subset sampling result from the $K$-bit string~$r$ can be performed in $O(1 + |r|)$ time in the Word RAM model, where $|r|$ is the number of items in the sampling result. 
Therefore, the overall query time is bounded by~$O(1 + \mu(\vec{c}))$ in expectation.
\end{proof}

\vspace{-1mm}
\begin{lemma}
The lookup table~$\mathcal{T}$ can be constructed in~$O(n_0)$ time and the space consumption of~$\mathcal{T}$ is bounded by~$O(n_0)$. 
\end{lemma}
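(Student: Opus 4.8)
The goal is to show that the lookup table $\mathcal{T}$ can be constructed in $O(n_0)$ time and consumes $O(n_0)$ space. The plan is to bound the total number of cells in $\mathcal{T}$, show this is $O(n_0)$, and then argue that each cell can be filled in amortized $O(1)$ time. First I would count the rows: there are $(m+1)^K$ configurations, where $K = 2\log_2\log_2\log_2 n_0$ and $m \in O(\log\log n_0)$. Thus the number of rows is $(m+1)^K = 2^{K\log_2(m+1)} \in 2^{O(\log\log\log n_0 \cdot \log\log\log n_0)}$; I should verify that the exponent $K\log_2(m+1) \in O((\log\log\log n_0)^2)$ is $o(\log n_0)$, so the number of rows is $n_0^{o(1)}$. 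Similarly, each row $A(\vec{c})$ has $(m^2)^K$ cells, and $\log_2\big((m^2)^K\big) = 2K\log_2 m \in O((\log\log\log n_0)^2)$ as well, so each row also has $n_0^{o(1)}$ cells, each an $O(\log n_0)$-bit (i.e., $O(1)$-word) string. Multiplying, the total number of cells is $(m+1)^K \cdot (m^2)^K = n_0^{o(1)} \cdot n_0^{o(1)} = n_0^{o(1)} \subseteq O(n_0)$, which establishes the space bound; I'd want to be slightly careful and state the constants chosen so the $o(1)$ exponent is genuinely below $1$ for all sufficiently large $n_0$ (small cases handled trivially).

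For the construction time, the plan is: iterate over all $(m+1)^K$ configurations $\vec{c}$; for each, iterate over all $2^K$ possible result strings $r \in \{0,1\}^K$; compute $\mathrm{Pr}(r) = \prod_{i=1}^K \big(r[i]p_i + (1-r[i])(1-p_i)\big)$ as an exact rational with denominator $(m^2)^K$ — each $p_i = \min\{1, 2^{i+1}c_i/m^2\}$ is computable in $O(1)$ time from $\vec{c}$, and the product of $K$ such $O(1)$-word rationals can be formed in $O(K^2)$ or so word operations, which is $O(\mathrm{polylog}\log n_0) \subseteq O(1)$ in the Word RAM model since the result fits in $O(1)$ words. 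Then write $r$ into the next $\mathrm{Pr}(r)\cdot(m^2)^K$ cells of $A(\vec{c})$. The number of cells written per configuration is exactly $\sum_r \mathrm{Pr}(r)\cdot(m^2)^K = (m^2)^K$, so the total write cost across all configurations is exactly the total cell count, which is $O(n_0)$; the per-$r$ overhead (computing $\mathrm{Pr}(r)$) contributes $2^K = n_0^{o(1)}$ per configuration, hence $n_0^{o(1)}$ total, absorbed into $O(n_0)$. Allocating and zero-initializing the table is also $O(n_0)$.

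The main obstacle — really the only subtlety — is the arithmetic bound: making sure that $K\log_2(m+1)$ and $2K\log_2 m$ are both $o(\log n_0)$ so that both the row count and the per-row cell count are $n_0^{o(1)}$, and then that their product stays $O(n_0)$ (ideally even $n_0^{o(1)}$, giving room to spare). Since $K \in O(\log\log\log n_0)$ and $\log(m+1) \in O(\log\log\log n_0)$, both exponents are $O\big((\log\log\log n_0)^2\big) = o(\log n_0)$, so each factor is $n_0^{o(1)}$ and the product is $n_0^{o(1)} \subseteq O(n_0)$. A secondary point to check is that the word length $d \in \Omega(\log n_0)$ is enough to index into the table and to hold each $\mathrm{Pr}(r)\cdot(m^2)^K$ counter, which again follows since $(m^2)^K$ fits in $o(\log n_0)$ bits. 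Everything else is bookkeeping: the construction is a direct tabulation, and the cost telescopes exactly to the table size.
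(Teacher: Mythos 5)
Your proof is correct and follows essentially the same route as the paper: bound the row count $(m+1)^K$ and per-row cell count $(m^2)^K$, observe that both exponents are $O((\log\log\log n_0)^2) = o(\log n_0)$ so the product is $n_0^{o(1)} \subseteq O(n_0)$, and note that construction time is dominated by filling the cells plus a lower-order per-configuration overhead for computing the $2^K$ probabilities. The only cosmetic difference is that you estimate each cell at $O(\log n_0)$ bits whereas the paper uses the tighter (but not needed) bound of $K$ bits per cell; both yield $O(1)$ words and the same conclusion.
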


\begin{proof}
For an input configuration~$\vec{c}$, 
there are~$2^K$ possible subset sampling results~$r$. The probability of each~$r$ can be computed in~$O(K)$ time. 
Furthermore, $A(\vec{c})$ contains $m^{2K}$ cells and each cell stores $K$ bits. The total running time for constructing $A(\vec{c})$ is thus bounded by $O(K \cdot 2^K + K \cdot m^{2K}) = O(K \cdot m^{2K})$ time.
Summing over all $(m + 1)^K$ rows, the overall construction time of $\mathcal{T}$ is bounded by 
$O( (m + 1)^K \cdot K \cdot m^{2K})$ time.
As $K \in O(\log \log \log n_0)$ and $m \in O(\log \log n_0)$, this construction time is thus bounded by $O(n_0)$.
In terms of space consumption, $\mathcal{T}$ contains in total $(m + 1)^K \cdot K \cdot m^{2K}$ bits, which is bounded by $O(n_0)$ words. 
\end{proof}

\vspace{-2mm}
\subsection{Bridging the Hierarchy and the Lookup Table via Adapters}
\label{sec:adapter}

It remains to show how to convert a final-level PSS query instance into an input configuration for the 4S problem in $O(1)$ time on the fly during query time. 
Once this is accomplished, then Lemma~\ref{lmm:hierarchy-query-time} implies that the PSS query $(\a, \b)$ on $S$ can be answered optimally.

For this purpose, we introduce an {\em adapter} for each final-level PSS query instance.
And the space consumption of each such adapter is bounded by~$O(1)$ words.
As there are at most~$O(n_0)$ final-level PSS query instances, the total space consumption of the adapters is bounded by~$O(n_0)$.
Recall that~$n_0$ is the size of the item set~$S$ when the HALT structure is constructed, and~$n$ is the {\em current} size of~$S$. 
Without loss of generality, we assume that $n_0/2 \leq n \leq 2n_0$. 
If this is not the case, we can rebuild the HALT structure on the current~$S$ to make this assumption hold.
More details about global rebuilding and de-amortization are discussed in Section~\ref{sec:updates}. 

Recall that in the three-level hierarchy, at Level-1, the next-level item set $Y_j$ of each group $G_S(j)$ in BG-Str($S$) contains at most $\log_2 n_0$ items; and at Level-2, the next-level item set $Z_k$ of each group $G_{Y_j}(k)$ in BG-Str($Y_j$) has size at most $\log_2 \log_2 n_0$.
Furthermore, at Level-3, each item $v_i$ in the item set $V_k$ in the final-level instance corresponds to a non-empty bucket,~$B_{Z_k}(i)$, of items in~$Z_k$, where each item $z \in Z_k$ corresponds to a bucket in group~$G_{Y_j}(k)$.
Thus, we have $|V_k| \leq |Z_k| \leq m$, where $m  = \log_2 \log_2 n_0$.
And the weight of each item $v_i \in V_k$ is $w(v_i) = 2^{i+1} \cdot |B_{Z_k}(i)| \leq 2^{i+1} \cdot m$.

\vspace{2mm}
\noindent
{\bf A Simple but Space-Inefficient Adapter Implementation.}
Define $m = \log_2 \log_2 n_0$.
Consider the final-level instance on $Z_k$ as aforementioned. 
A simple implementation of an adapter is to maintain an array $\mathcal{A}$ of length $L$, where $L$ is number of possible bucket indices of the items in $Z_k$, and for $i \in [0, L-1]$, $\mathcal{A}[i]$ stores the size of (possibly empty) bucket $B_{Z_k}(i)$.
Moreover, as $|Z_k| \leq m$, each bucket size must be an integer in $[0, m]$.

\vspace{2mm}
\noindent
\underline{\em The Query Algorithm for a Final-Level Instance.} 
Given a PSS query with parameters $(0, W_S(\a, \b))$ on $V_k$, the following query algorithm aims to return a PSS result $T \subseteq Z_k$ with parameters $(0, W_S(\a, \b))$.

\vspace{2mm}
\noindent
{\bf QueryFinalLevel(} BG-Str($Z_k$), $(0, W_S(\a, \b))$ {\bf )}.
\begin{itemize}[leftmargin = *]
\item compute the {\em largest} possible index $i_1$ of bucket $B_{Z_k}(i_1)$ such that $\frac{2^{i_1 + 1}}{W_S(\a, \b)} \leq \frac{2}{m^2}$; 
\item compute the {\em smallest} possible index $i_2$ of bucket $B_{Z_k}(i_2)$ such that $\frac{2^{i_2}}{W_S(\a, \b)} \geq 1$;

\item $T_1 \leftarrow$ QueryInsignificant(BG-Str($Z_k$), $(0, W_S(\a,\b)
), i_1$) \text{//} Algorithm~\ref{algo:insig};

\item $T_2 \leftarrow$ QueryCertain(BG-Str($Z_k$), $(0, W_S(\a, \b))$, $i_2$) \text{//} Algorithm~\ref{algo:cert};

\item 
process all the remaining buckets $B_{Z_k}(i)$ with index $i \in [i_1 + 1, i_2 -1]$:
\begin{itemize}
\item collect all the bucket sizes in $\mathcal{A}[i_1 + 1, i_2 -1]$ as an input configuration $\vec{c}$ for the 4S problem;
specifically, recall that the $j^\text{th}$ item in the 4S problem is selected with probability $p_j = \min\left\{1, \frac{2^{j+1} \cdot \vec{c}[j]}{m^2}\right\}$, where $\vec{c}[j] = |B_{Z_k}(j + i_1)|$;

\item use the lookup table $\mathcal{T}$ to obtain a subset sampling result $r_3$ for the 4S problem with input configuration $\vec{c}$;
note that $r_3$ is a bit string of length $(i_2 - i_1 -1)$, where the $j^\text{th}$ bit in $r_3$ being $1$ indicates the bucket with index $(j + i_1)$ is selected;
\item initialize an empty set $T_3'$;
\item for each index $j \in \{1, 2, \ldots, i_2 - i_1 -1\}$ such that $r_3[j] = 1$, add the item $v \in V_k$ corresponding bucket $B_{Z_k}(j + i_1)$ to $T_3'$ with probability $\frac{\min\left\{1, \frac{w(v)}{W_S(\a,\b)}\right\}}{p_j}$;
\item $T_3 \leftarrow$ ExtractItems(BG-Str($Z_k$), $T_3'$, $(0, W_S(\a, \b)$) \text{//} Algorithm~\ref{algo:extract}
\end{itemize}
\item return $T = T_1 \cup T_2 \cup T_3$; 
\end{itemize}
\vspace{-1mm}
\begin{lemma}
The configuration $\vec{c}$ is a legal 4S problem input defined for the lookup table $\mathcal{T}$.
\end{lemma}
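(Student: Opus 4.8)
The plan is to verify $\vec{c}$ against the two defining properties of a legal $4$S input configuration for $\mathcal{T}$: that it is a $K$-dimensional vector with $K = 2\log_2\log_2\log_2 n_0$, and that each coordinate is an integer in $[0,m]$ with $m = \log_2\log_2 n_0 \in O(\log\log n_0)$. Recall that $\vec{c}$ is read off from the adapter array as $\vec{c}[j] = |B_{Z_k}(j+i_1)|$ for $j = 1,\dots,i_2-i_1-1$, where $i_1$ (resp.\ $i_2$) is the largest (resp.\ smallest) integer with $2^{\,i_1+1}/W_S(\a,\b) \le 2/m^2$ (resp.\ $2^{\,i_2}/W_S(\a,\b) \ge 1$), and any remaining coordinates up to index $K$ are padded with $0$.

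The range condition is immediate from the hierarchy: as established just before \textbf{QueryFinalLevel}, $|Z_k| \le m$, and since every bucket $B_{Z_k}(i)$ is a sub-collection of $Z_k$ we get $0 \le |B_{Z_k}(i)| \le |Z_k| \le m$; thus every genuine coordinate of $\vec{c}$, and every padding coordinate, lies in $[0,m]$. (The power-of-$16$ convention makes $m$, hence $K$, a positive integer.) For the dimension, the key identity is $m^2 = 2^{2\log_2 m} = 2^{K}$ --- this is exactly why the insignificant cutoff in \textbf{QueryFinalLevel} is set at $2/m^2 = 2^{1-K}$. Unfolding the two threshold inequalities gives $i_1 = \lfloor \log_2 W_S(\a,\b)\rfloor - K$ and $i_2 = \lceil \log_2 W_S(\a,\b)\rceil$, so
\[
i_2 - i_1 - 1 \;=\; \bigl\lceil \log_2 W_S(\a,\b)\bigr\rceil - \bigl\lfloor \log_2 W_S(\a,\b)\bigr\rfloor + K - 1 \;\le\; K,
\]
using $\lceil x\rceil - \lfloor x\rfloor \le 1$. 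Hence $\vec{c}$ has at most $K$ genuine coordinates, and after zero-padding it is a valid length-$K$ row index into $\mathcal{T}$.

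I do not expect a genuine obstacle here; the only delicate part is the floor/ceiling arithmetic, and in particular the identity $m^2 = 2^K$ that aligns the insignificant cutoff $2/m^2$ with the index shift $K$ --- this is precisely why $K$ is chosen as $2\log_2 m$ in the definition of the lookup table, so that the band of ``significant'' bucket indices always fits in $K$ slots, independently of the magnitude of $W_S(\a,\b)$. A small point worth stating explicitly is that $\lfloor\log_2 W_S(\a,\b)\rfloor$ and $\lceil\log_2 W_S(\a,\b)\rceil$ are well defined and computable in $O(1)$ time (Claim~\ref{claim:logcalculation}) since $W_S(\a,\b)$ is a positive rational with $O(1)$-word numerator and denominator, so $i_1$ and $i_2$ --- hence $\vec{c}$ --- are actually obtainable on the fly.
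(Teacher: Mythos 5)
Your proof is correct and follows essentially the same approach as the paper's: check that the number of "significant" bucket indices $i_1+1,\dots,i_2-1$ is at most $K$ and that each bucket size lies in $[0,m]$. The paper argues the dimension bound qualitatively (buckets in that range correspond to power-of-two probabilities spanning $[1/m^2,1]$, hence at most $\lceil 2\log_2 m\rceil$ of them), whereas you solve the threshold inequalities in closed form to get $i_1 = \lfloor \log_2 W_S(\a,\b)\rfloor - K$ and $i_2 = \lceil \log_2 W_S(\a,\b)\rceil$ and read off $i_2 - i_1 - 1 \le K$ directly; you are also slightly more careful than the paper in noting that the count may be $K-1$ and must be padded. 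These are cosmetic differences, not a different route.
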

\begin{proof} 

From the above query algorithm, $\vec{c}$ contains 
exactly $i_2 - 1 - (i_1 + 1) + 1 = i_2 - i_1 - 1$ integers. 
Observe that all the buckets with indices in the range $[i_1 + 1, i_2 -1]$
must correspond to a power-of-two probability 
in the range $[\frac{1}{m^2}, 1]$. 
Therefore, there are at most $\lceil 2 \log_2 m\rceil \in O(\log m) = O(\log \log \log n_0)$ possible buckets.
And hence, $\vec{c}$ contains exactly $K \in O(\log \log \log n_0)$ integers.
Moreover, each integer is indeed a bucket size which is in the range $[0, m]$.
Therefore, $\vec{c}$ is a legal input configuration to the 4S problem.
\end{proof}
\vspace{-1mm}
\begin{lemma}[Correctness]
$T$ is a correct PSS result on $Z_k$ with parameters $(0, W_S(\a, \b))$. 
\end{lemma}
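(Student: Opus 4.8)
The plan is to reuse the correctness argument of Theorem~\ref{lmm:one-level-correctness} almost verbatim, with the lookup table $\mathcal{T}$ playing exactly the role that the recursive invocation of Algorithm~\ref{algo:one-level} plays inside Algorithm~\ref{algo:nextl}. Write $W = W_S(\a,\b)$. The returned set is the disjoint union $T = T_1 \cup T_2 \cup T_3$, computed from independent random bits, and the three pieces correspond to the three pairwise-disjoint blocks of items of $Z_k$ that lie in buckets of index $\le i_1$, of index $\ge i_2$, and of index in $[i_1+1,i_2-1]$, respectively. Since (one checks) $i_1 < i_2$ and these three index ranges exhaust all bucket indices, it suffices to prove that each $T_t$ is a correct PSS result with parameters $(0,W)$ on its own block; independence across the blocks then yields the product form over all of $Z_k$.

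First I would dispatch $T_1$ and $T_2$ by appealing to the already-proven sub-algorithms. For $T_1$: by the defining inequality for $i_1$, every item $z$ in a bucket of index $\le i_1$ satisfies $p_z(0,W) = \min\{1, w(z)/W\} \le w(z)/W < 2^{i_1+1}/W$, which is at most the cutoff used inside the QueryInsignificant call, so Lemma~\ref{lmm:algo2} applies unchanged. For $T_2$: by the defining inequality $2^{i_2}/W \ge 1$, every item $z$ in a bucket of index $\ge i_2$ has $w(z) \ge 2^{i_2} \ge W$, hence $p_z(0,W) = 1$, and QueryCertain (Algorithm~\ref{algo:cert}) returns exactly these items, so Lemma~\ref{lmm:algo3} applies.

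The substance is $T_3$, on the significant range $[i_1+1,i_2-1]$. Here I would first cite the preceding lemma that $\vec c$ is a legal 4S configuration, so the lookup table applies and, by its correctness, each 4S item $v_j$ --- which is precisely the bucket $B_{Z_k}(j+i_1)$, equivalently the next-level item $v\in V_k$ of that bucket, of weight $w(v) = 2^{j+i_1+1}\cdot|B_{Z_k}(j+i_1)| = 2^{j+i_1+1}c_j$ --- is placed in $r_3$ independently with probability $p_j = \min\{1, 2^{j+1}c_j/m^2\}$. The one genuinely new calculation is the rejection-validity check: $w(v)/W \le 2^{j+1}c_j/m^2$ (hence $\min\{1,w(v)/W\}\le p_j$) is equivalent to $2^{i_1}/W \le 1/m^2$, which is exactly the inequality $2^{i_1+1}/W \le 2/m^2$ defining $i_1$. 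Thus the loop's acceptance probability $\min\{1,w(v)/W\}/p_j$ is a bona fide probability, and after this independent per-bucket filtering $T_3'$ contains each next-level item $v$ of a significant-range bucket independently with probability $\min\{1, w(v)/W\}$ --- i.e.\ $T_3'$ has exactly the candidate-bucket distribution, with $w(v) = 2^{i+1}|B_{Z_k}(i)|$, that ExtractItems is designed for. Finally I would quote the correctness of ExtractItems (Algorithm~\ref{algo:extract}) established in the analysis of Algorithm~\ref{algo:nextl} and Lemma~\ref{lmm:algo4}: processing each bucket $B_{Z_k}(i)\in T_3'$ independently --- Case~1 ($p\,n_i\ge1$) via $\text{B-Geo}(p,n_i{+}1)$, Case~2 ($p\,n_i<1$) via the promising-bucket test $\Ber\big(\frac{1-(1-p)^{n_i}}{p\,n_i}\big)$ followed by $\text{T-Geo}(p,n_i)$, with $p=\min\{1,2^{i+1}/W\}$ --- and then accepting each extracted item $z$ with probability $p_z(0,W)/p$, makes every underlying item $z\in Z_k$ in the significant range appear in $T_3$ independently with probability $p_z(0,W)$. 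Combining the three blocks yields the lemma.

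The hard part will not be the probabilistic content --- modulo replacing the recursion by $\mathcal{T}$, this is verbatim the argument for Theorem~\ref{lmm:one-level-correctness} --- but the bookkeeping at the two seams: checking that the single choice of $i_1$ simultaneously certifies (a) that the QueryInsignificant rejection probabilities are $\le 1$ on the index-$\le i_1$ block and (b) that the lookup-table acceptance probabilities are $\le 1$ on the index-$[i_1+1,i_2-1]$ block, and tracking the index shift $j = i - i_1$ together with the weight identity $w(v_i) = 2^{i+1}|B_{Z_k}(i)|$ so that the 4S probabilities $\min\{1, 2^{j+1}c_j/m^2\}$ genuinely dominate the targets $\min\{1, w(v_i)/W\}$.
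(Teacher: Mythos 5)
Your proposal is correct and follows essentially the same route as the paper's proof: decompose $T = T_1 \cup T_2 \cup T_3$, appeal to Lemmas~\ref{lmm:algo2} and~\ref{lmm:algo3} for the insignificant and certain blocks, appeal to the legality of the configuration $\vec c$ and the lookup-table correctness for the significant block, and finish with the correctness of Algorithm~\ref{algo:extract}. The only difference is that you spell out the rejection-validity check (that $\min\{1, w(v)/W\}/p_j \le 1$ reduces exactly to the defining inequality $2^{i_1+1}/W \le 2/m^2$ for $i_1$) and the index-shift and weight bookkeeping, which the paper's much terser proof leaves implicit under the phrase ``by the correctness of the lookup table and the rejection sampling approach''; this extra detail strengthens, and does not deviate from, the paper's argument.
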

\begin{proof}
The correctness of the sampling result $T_1 \cup T_2$ follows immediately from the correctness of our algorithms for the insignificant instance and the certain instance.
By the correctness of the lookup table and the rejection sampling approach,
each item $v\in V_k$ corresponding to a non-empty bucket $B_{Z_k}(i)$ with index in $\mathcal{A}[i_1 + 1, i_2 -1]$ is sampled with probability of $\min\{1, \frac{w(v)}{W_S(\a, \b)}\}$ independently.
Thus, by the correctness of Algorithm~\ref{algo:extract} which extracts the items from a given candidate bucket list, the sampling result $T_3$ is correct, and hence, $T$ is correct.
\end{proof}
\vspace{-1mm}
\begin{lemma}[Query Complexity]\label{lmm:adapter-query}
The overall running time of the above final-level query algorithm is bounded by $O(1 + \mu_{Z_k})$ in expectation, where $\mu_{Z_k}$ is the expected size of PSS query result on $Z_k$ with parameters $(0, W_S(\a, \b))$.
\end{lemma}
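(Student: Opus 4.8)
The plan is to split the running time of \textbf{QueryFinalLevel} into a \emph{fixed overhead}, which I will show is $O(1)$ in expectation, and an \emph{output-charged} part, which I will show is $O(1+\mu_{Z_k})$ in expectation. First I would dispatch the easy pieces: computing $i_1$ and $i_2$ is $O(1)$ by Claim~\ref{claim:logcalculation}, applied to the rational $W_S(\a,\b)$ whose numerator and denominator are $O(1)$ words; the call to \textbf{QueryInsignificant} (Algorithm~\ref{algo:insig}, with the role of $N$ played by $|Z_k|$) is $O(1)$ in expectation by Lemma~\ref{lmm:algo2}; the call to \textbf{QueryCertain} (Algorithm~\ref{algo:cert}) is $O(1+|T_2|)$ by Lemma~\ref{lmm:algo3}, where $E[|T_2|]\le\mu_{Z_k}$. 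Assembling the configuration $\vec c$ from the adapter and locating its row $A(\vec c)$ in $\mathcal T$ is $O(1)$, since $\vec c$ fits in $O(1)$ words; by the query guarantee established for the lookup table, drawing a uniform index into $A(\vec c)$, reading off the $K$-bit result string $r_3$, and scanning its set bits by bit operations costs $O(1+|r_3|)$, where $E[|r_3|]=\mu(\vec c)$. Each set bit of $r_3$ triggers one $O(1)$ rejection-sampling Bernoulli trial that forms the candidate-bucket list $T_3'$, after which \textbf{ExtractItems} (Algorithm~\ref{algo:extract}) is run on $T_3'$.

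Next I would account for \textbf{ExtractItems}. By Fact~\ref{lem:ber2}, Fact~\ref{lmm:bounded-geo}, Theorem~\ref{thm:ber}, and Theorem~\ref{thm:trun-geo}, every variate it draws --- $\text{B-Geo}(p,n_i+1)$, $\Ber\!\big(\frac{1-(1-p)^{n_i}}{p\,n_i}\big)$, $\text{T-Geo}(p,n_i)$, and the rational-parameter Bernoulli of Line~\ref{alg:extract:Bernoulli} --- is realized in $O(1)$ expected time with $O(n)$ worst-case space (the space fact was already noted at the start of Section~\ref{sec:main-algo}). Hence ExtractItems spends $O(1)$ expected time per candidate bucket on the promising-bucket test, plus $O(1+(\text{\# potential items in }B_{Z_k}(i)))$ expected time on the while loop whenever $B_{Z_k}(i)$ is promising, because each while iteration issues one bounded-geometric draw and one Bernoulli trial and either outputs an item or halts. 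Including the $O(1)$ for the empty-$T_3'$ case, ExtractItems therefore costs $O(1+E[|T_3'|]+E[\text{\# potential items}])$ in expectation. Summing everything, the running time is $O(1+E[|T_2|]+\mu(\vec c)+E[|T_3'|]+E[\text{\# potential items}])$ in expectation; since $E[|T_3'|]\le E[|r_3|]=\mu(\vec c)$ and $E[|T_2|]\le\mu_{Z_k}$, it remains to bound $\mu(\vec c)$ and $E[\text{\# potential items}]$ by $O(\mu_{Z_k})$.

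The hard part is exactly this last charging, and I would carry it out by reusing the argument behind Claim~\ref{claim:1}. For a candidate bucket $B_{Z_k}(i)$ with $n_i=|B_{Z_k}(i)|$ and $p=\min\{1,\,2^{i+1}/W_S(\a,\b)\}$, the same case split on whether $p\,n_i\ge1$ shows that $B_{Z_k}(i)$ becomes promising with probability at least $1-\frac1e$ when $p\,n_i\ge1$, and at least $1-(1-p)^{1/p}>1-\frac1e$ when $p\,n_i<1$ (via monotonicity of $f(a,b)=\frac{1-(1-b)^{a/b}}{a}$ in $a$); this test uses fresh randomness independent of $T_3'$, so $E[\text{\# promising}]=\Omega(E[|T_3'|])$. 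Every promising bucket contains at least one potential item, and a potential item $x\in B_{Z_k}(i)$ has $w(x)\in[2^i,2^{i+1})$, hence is accepted into $T_3$ with probability $p_x(0,W_S(\a,\b))/p\ge\frac12$; therefore $\mu_{Z_k}\ge E[|T_3|]\ge\frac12 E[\text{\# potential items}]\ge\frac12 E[\text{\# promising}]=\Omega(E[|T_3'|])$, and combined with $E[|T_3'|]=\Theta(\mu(\vec c))$ (each set bit of $r_3$ survives into $T_3'$ with probability in $(\frac12,1]$, which follows from the definition of $i_1$) this yields $\mu(\vec c),\,E[\text{\# potential items}]\in O(\mu_{Z_k})$. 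Plugging back gives the claimed $O(1+\mu_{Z_k})$ expected running time. The only bookkeeping point to watch is that $\mu_{Z_k}$ already aggregates the certain, insignificant, and significant parts of the sample, so each of the sub-bounds above is legitimately $O(\mu_{Z_k})$, with the additive $O(1)$ terms needed only for the empty-sample case.
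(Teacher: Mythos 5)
Your proposal is correct and follows essentially the same route as the paper's own proof: the scan of $r_3$ is charged to $T_3'$ via the $i_1$-based $\Omega(1)$ survival probability for each set bit, and the cost of ExtractItems on $T_3'$ is charged to $\mu_{Z_k}$ via the Claim~\ref{claim:1}-style argument (candidate $\to$ promising $\to$ potential $\to$ output, each step losing only a constant factor in expectation). You spell out in full the chain of expected-size comparisons that the paper's proof compresses into a reference to ``the previous running time analysis on Algorithm~\ref{algo:extract},'' but the underlying ideas and intermediate bounds are the same.
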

\begin{proof}
First, the expected query running time for obtaining items in $T_1 \cup T_2$ follows from the complexity of the corresponding algorithms.

Next, we analyse the running time cost of obtaining~$T_3$.
By the definition of~$i_1$ and~$i_2$, the length of the subarray $\mathcal{A}[i_1 + 1, i_2 -1]$ is bounded $i_2 - i_1 -1 \in O(\log m)$.
Each bucket size is an integer at most~$m$; thus, each bucket size can be encoded in~$O(\log m)$ bits. 
Therefore, $\mathcal{A}[i_1 + 1, i_2 -1]$ takes $O(\log^2 m) = O((\log \log \log n_0)^2) \subset O(d)$ bits, equivalently,~$O(1)$ words.  
As a result, the input configuration~$\vec{c}$ can be constructed in~$O(1)$ time.
Moreover, the lookup table returns a sampling result~$r_3$ 
for the 4S problem on~$\vec{c}$ in~$O(1)$ time. 
In the Word RAM model, the next non-zero bit~$j$ in~$r_3$ can be computed in~$O(1)$ time.
Thus, the overall expected running time of rejection sampling 
is bounded by~$O(J)$, where~$J$ is the number of non-zero bits in~$r_3$.
To complete the proof of this lemma, 
it remains to show that  each potential item $v$ suggested by a non-zero bit in $r_3$
has at least~$\Omega(1)$ probability to be accepted in~$T_3$. 
If this is the case, then~$O(J)$ as well as the overall expected running time of our query algorithm, is bounded by $O(1 + \mu_{V_k})$. 

Consider a non-zero bit~$r_3[j] = 1$ at index~$j$;
let $v \in V_k$ be the potential item corresponding to the bucket $B_{Z_k}(j+ i_1)$ which is sampled by the lookup table with probability $p_j = \min\{1, \frac{2^{j+1}\cdot |B_{Z_k}(j+i_1)|}{m^2}\}$.
Denote 
$p'_j = \min\{1, \frac{w(v)}{W_S(\a, \b)}\}$, where $w(v) = 2^{j+i_1 + 1} \cdot |B_{Z_k}(j + i_1)|$. 
Then $v$ is accepted with probability $\frac{p'_j}{p_j} \geq \frac{2^{i_1}\cdot m^2}{W_S(\a, \b)} \geq \frac{1}{2}$, because, by the definition of $i_1$, we have $\frac{2^{i_1 + 1} m^2}{W_S(\a, \b)} \geq 1$.
Moreover, according to the previous running time analysis on Algorithm~\ref{algo:extract}, it takes $O(1 + \mu_{Z_k})$ expected time to extract the items from the bucket candidate list $T_3'$. 
Therefore, Lemma~\ref{lmm:adapter-query} follows.
\end{proof}

\vspace{-1mm}
The above lemmas show that the simple implementation with an array $\mathcal{A}$ for the adapter is correct and has the desired query complexity. 
However, its space consumption is too large. 
This is because the length,~$L$, of~$\mathcal{A}$ is the number of all possible bucket indices on item set,~$Z_k$. 
The weight of item $z \in Z_k$ can range from $[1, n_{\max}\cdot w_{\max}]$, and hence, $L = \log_2 (n_{\max}\cdot w_{\max}) = d$.
Moreover, the bucket size stored in each cell of~$\mathcal{A}$ is an integer in~$[0, m]$, which in turn takes~$\lceil \log_2 (m+1)\rceil\in O(\log m)$ bits to encode.
Therefore, the total space consumption of~$\mathcal{A}$ becomes $O(d \log m) $ bits, equivalently, $O(\log m) \subseteq O(\log \log \log n_0)$ words.
As aforementioned, there can be up to~$O(n_0)$ final-level instances, each which needs an adapter. Thus, the overall space consumption of all the adapters would be $O(n_0 \log \log \log n_0)$ words, exceeding our promised space bound~$O(n_0)$.

\vspace{2mm}
\noindent
{\bf A Compact Representation of~$\mathcal{A}$.}
While the total number,~$L$, of possible bucket indices 
can be as large as~$d$, 
our crucial observation is that, indeed, only a ``small'' consecutive index sub-range $[l_1, l_2] \subseteq [0, L-1]$ 
can possibly have non-empty buckets of the items in~$Z_k$.
As a result, it suffices to store the sizes of the buckets with indices in $[l_1, l_2]$, that is, just to keep a {\em sub-array} $\mathcal{A}[l_1, l_2]$ of the conceptual array $\mathcal{A}$, along with the value of $l_1$ (to record the starting index of the sub-range).
Hence, the space consumption can be reduced to $O((l_2 - l_1 + 1) \cdot \log m + d)$ bits, where the $d$ bits are to store the value of $l_1$.
This sub-array $\mathcal{A}[l_1, l_2]$ along with the value of $l_1$ together is called a {\em compact representation} of $\mathcal{A}$.

When answering a PSS query with the previous query algorithm with a compact representation of $\mathcal{A}$,
one can simply treat every bucket met during the query with index outside the range $[l_1, l_2]$ as empty, i.e., of size $0$.
The correctness and query complexity of the query algorithm still holds.  
Next, we show the length of $\mathcal{A}[l_1, l_2]$ is bounded by $O(\log \log n_0)$.

\begin{lemma}
For every final-level PSS instance on $Z_k$, the space consumption of the compact representation of the adapter $\mathcal{A}$ is bounded by $O(1)$ words.  
\end{lemma}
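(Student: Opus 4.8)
The plan is to read off the space of the compact representation directly from the estimate already derived, namely $O\bigl((l_2 - l_1 + 1)\cdot \log m + d\bigr)$ bits, and to reduce the whole claim to a single bound: that the retained sub-array $\mathcal{A}[l_1, l_2]$ has length $l_2 - l_1 + 1 = O(\log\log n_0)$. Indeed, the $d$ bits storing the offset $l_1$ already amount to $O(1)$ words; and if the sub-array has $O(\log\log n_0) = O(m)$ cells, each holding a bucket size in $[0,m]$ and hence taking $\lceil\log_2(m+1)\rceil = O(\log m)$ bits, then the sub-array occupies $O(m\log m) = O\bigl(\log\log n_0 \cdot \log\log\log n_0\bigr) = O\bigl((\log\log n_0)^2\bigr)$ bits. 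Since $n_0 \le n_{\max}$ and the model guarantees $d \in \Omega(\log(n_{\max}\cdot w_{\max}))$, we have $(\log\log n_0)^2 = o(\log n_0) \subseteq O(d)$, so the total is $O(d)$ bits, i.e.\ $O(1)$ words.

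It therefore remains to bound $l_2 - l_1 + 1$, which I would do by tracing one level up in the hierarchy. Every item $z_i \in Z_k$ corresponds to a non-empty bucket $B_{Y_j}(i)$ lying in the single group $G_{Y_j}(k)$ of BG-Str($Y_j$); by the grouping rule that group spans exactly $\log_2|Y_j|$ consecutive bucket indices, say $\{a, a+1, \dots, a+\log_2|Y_j|-1\}$. Recalling that $|Y_j| \le \log_2 n_0$, hence $\log_2|Y_j| \le m$ and $|Y_j| \le 2^m$, the weight of $z_i$ satisfies $w(z_i) = 2^{i+1}\cdot |B_{Y_j}(i)|$ with $1 \le |B_{Y_j}(i)| \le |Y_j| \le 2^m$ and $a \le i \le a+m-1$, so $2^{a+1} \le w(z_i) \le 2^{a+m}\cdot 2^m = 2^{a+2m}$. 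Consequently the bucket index $\lfloor\log_2 w(z_i)\rfloor$ of $z_i$ inside BG-Str($Z_k$) lies in the integer interval $[a+1,\, a+2m]$. Thus every possibly-non-empty bucket of BG-Str($Z_k$) has index in a window of at most $2m$ values, giving $l_1 \ge a+1$, $l_2 \le a+2m$, and $l_2 - l_1 + 1 \le 2m = O(\log\log n_0)$, as needed.

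The delicate point — worth stating carefully rather than waving at — is exactly this weight-range estimate: one has to absorb both the width $\le m$ of a level-2 group's bucket-index window \emph{and} the multiplicative blow-up caused by replacing $w(z_i)$ by $2^{i+1}\cdot|B_{Y_j}(i)|$, where the bucket size $|B_{Y_j}(i)|$ can be as large as $2^m$; only because this extra factor contributes merely $m$ more to $\log_2 w(z_i)$ do the level-3 bucket indices stay within an $O(m)$-wide window instead of fanning out over the full $\Theta(d)$ range of possible indices. After that estimate, the rest is the routine substitution of $m = \log_2\log_2 n_0$ together with the standing assumption $d \in \Omega(\log(n_{\max}\cdot w_{\max}))$.
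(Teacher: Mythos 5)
Your proof is correct and follows essentially the same strategy as the paper: reduce to bounding $l_2 - l_1 + 1$, then trace back to level~2 and observe that the items of $Z_k$ come from buckets in a single group $G_{Y_j}(k)$ spanning $\log_2|Y_j| \le m$ consecutive bucket indices, so that the combined effect of the $\le m$-wide index window and the bucket-size factor $|B_{Y_j}(i)| \le |Y_j| \le 2^m$ bounds the weight ratio by $2^{O(m)}$ (the paper phrases this as the ratio being $|Y_j|^2 \le (\log_2 n_0)^2$, which is the same bound). The final substitution $(l_2-l_1+1)\cdot O(\log m) + d = O(\log\log n_0 \cdot \log\log\log n_0 + d) = O(d)$ bits is likewise identical.
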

 
\begin{proof}
Recall that each item $v_i\in V_k$ corresponds to a bucket $B_{Z_k}(i)$ of items in $Z_k$ at bucket index $i$. 
Our goal is to show that those non-empty buckets $B_{Z_k}(i)$ must have bucket index $i$ in range $[l_1, l_2]$, for some~$l_1$ and~$l_2$. 

To prove this, 
we show that the {\em minimum possible} weight and the {\em maximum possible} weight of the items in~$Z_k$ would not differ by too much. 
To see this, observe that the item set~$Z_k$ is constructed from the buckets in the group $G_{Y_j}(k)$, where $G_{Y_j}(k)$ contains buckets $B_{Y_j}(i)$ with bucket indices from $k_1 = k \cdot \log_2 |Y_j|$ to $ k_2 = (k + 1) \cdot \log_2 |Y_j|$.
Therefore, the minimum possible weight of a bucket in $G_{Y_j}(k)$ is $2^{k_1+1}$, while the maximum possible weight of a bucket in $G_{Y_j}(k)$ is $2^{k_2 + 1} \cdot |Y_j|$. 
And each of the buckets corresponds to an item $z_i \in Z_k$ with the same weight. 
As a result, the {\em possible weight ratio} of the items in $Z_k$, i.e.~the ratio between the maximum and minimum possible weights,  can be computed as 
$$ \frac{2^{k_2 + 1} \cdot |Y_j|}{2^{k_1 + 1}} = 2^{k_2 - k_1} \cdot |Y_j| = 2^{\log_2 |Y_j|} \cdot |Y_j| = |Y_j|^2 \leq (\log_2 n_0)^2\,,$$
where $|Y_j| \leq \log_2 n_0$.
Thus, there can be at most $\log_2 \left( (\log_2 n_0)^2\right)  = 2 \log_2 \log_2 n_0 \in O(\log \log n_0)$ possible buckets on the items in $Z_k$.
Moreover, 
since the smallest possible bucket weight of $B_{Z_k}(i)$ is $2^{k_1 + 1}$,
the smallest possible bucket index for $B_{Z_k}(i)$ is $l_1 = k_1 + 1 = k\cdot \log_2 |Y_j| + 1$.
And hence, $l_2 \leq l_1 + 2 \log_2 \log_2 n_0$. 

By the previous discussion, a compact representation of~$\mathcal{A}$ consists of a sub-array $\mathcal{A}[l_1, l_2]$, where each cell stores an integer in range~$[0, m]$, and an integer value~$l_1$.
The total space consumption is thus bounded by $(l_2 - l_1 + 1) \cdot \lceil \log_2 (m+1)\rceil + d \in O(\log \log n_0 \cdot \log \log \log n_0 + d)$ bits which is bounded by $O(1)$ words.
\end{proof}

\subsection{Handling Updates}



Next, we discuss how to update our HALT structure on the item set~$S$. 
Specifically, each update on~$S$ is either an insertion of a new item~$x$ with weight~$w(x)$ or a deletion of an existing item~$x$ from~$S$. 
For simplicity, we assume that $\frac{1}{2} n_0 \leq n \leq 2n_0$, where~$n$ is the current size of~$|S|$ after the update.

The HALT structure on~$S$ can be maintained as follows.
First, observe that the lookup table in the HALT structure on~$S$ does not need to be updated.
Second, the three-level sampling hierarchy is basically a hierarchy of sorted lists of buckets and groups with length at most~$O(d)$. Each update operation (either insertion or deletion) on these sorted lists can be easily done in~$O(1)$ time in the Word RAM Model. 
The only issue to be careful about is that
each update operation at the previous level in the hierarchy could trigger a {\em weight update} of a next-level item, which would require two update operations to achieve: delete the item first and then insert the item again with the new weight. 
As the hierarchy has only three levels, therefore, there can be in total at most $O(1)$ update operations to the hierarchy. 
Moreover, there is also only $O(1)$ final-level instances would be affected by an update to $S$. 
Their adapters can be updated simply by updating the corresponding bucket sizes in the sub-array $\mathcal{A}[l_1, l_2]$.
Therefore, the overall update time is bounded by $O(1)$ worst-case time.

To handle the case when $n < \frac{1}{2} n_0$ or $n > 2 n_0$,
we adopt the standard global rebuilding trick to re-construct the entire HALT structure on the current item set~$S$ and update $n_0 \leftarrow n$. 
Recall that the HALT structure on~$S$ can be constructed in~$O(n_0)$ time with~$O(n_0)$ space consumption.
As there must be at least~$\Omega(n_0)$ updates to trigger the reconstruction, 
the rebuilding cost can be charged to those updates making each update cost become~$O(1)$ amortized.
Moreover, the space consumption is bounded by~$O(n)$ all the times.

Interestingly, the~$O(1)$ amortized update time can be easily ``de-amortized'' by applying the same technique for the de-amortization for dynamic arrays, just increasing the space consumption by a constant factor. 
Thus, we have the following theorem.
\begin{theorem}
The HALT structure on item set~$S$ can be maintained in~$O(1)$ worst-case time for each update to~$S$. Space consumption is bounded by~$O(n)$ at all times, where~$n$ is the current size of~$S$. 
\end{theorem}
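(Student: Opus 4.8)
The plan is to verify, component by component, that a single update to~$S$ disturbs only an $O(1)$-sized portion of the HALT structure, and then to deal separately with the situation where the current size~$n$ drifts outside the window $[n_0/2, 2n_0]$ by means of incremental global rebuilding. \textbf{First}, I would note that the lookup table~$\mathcal{T}$ is entirely static: its shape and contents depend only on $n_0$ (through $K$ and $m$), never on the actual elements of~$S$, so no update ever modifies it. \textbf{Second}, I would recall that the three-level sampling hierarchy is, structurally, nothing more than three levels of BG-Str structures, each of which is a collection of sorted linked lists of non-empty buckets and non-empty groups, indexed by integers drawn from a universe of size at most $\lfloor\log_2 w_{\max}\rfloor + 1 \le d$. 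By Fact~\ref{fact:sorted}, each such list supports insertion, deletion, and predecessor/successor queries in $O(1)$ worst-case time; in addition the partial sum $\sum w(x)$ at each level can be maintained in $O(1)$ time per change.

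\textbf{Third}, and this is the key technical point, I would trace how an update to~$S$ propagates up the hierarchy. Inserting or deleting an item~$x$ changes the membership of exactly one Level-1 bucket $B_S(i)$, hence changes $|B_S(i)|$ and therefore the weight $w(y_i) = 2^{i+1}\cdot|B_S(i)|$ of the unique next-level item~$y_i$ representing that bucket. Since a BG-Str cannot update an item's weight in place, this is realized by deleting~$y_i$ from~$Y_j$ and re-inserting it with its new weight --- at most two Level-2 updates. Each of those two Level-2 updates analogously touches exactly one Level-2 bucket and triggers at most two Level-3 updates; each Level-3 update touches exactly one final-level instance and its adapter. The adapter update is just a rewrite of one entry of the compact sub-array $\mathcal{A}[l_1,l_2]$ (with a possible shift of the recorded $l_1$), which occupies $O(1)$ words and so costs $O(1)$. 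The essential observation is that the branching factor of this cascade is at most~$2$ and the depth is exactly~$3$, so the total number of elementary list and adapter operations is at most $O(2^3) = O(1)$, each executable in $O(1)$ worst-case time.

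\textbf{Fourth}, I would restore the invariant $n_0/2 \le n \le 2n_0$. When an update would violate it, invoke global rebuilding: reconstruct the entire HALT structure on the current~$S$ --- which, by the construction bound established above, takes $O(n)$ time and $O(n)$ space --- and reset $n_0 \leftarrow n$. Because $\Omega(n_0)$ updates must have occurred since the previous rebuild, charging the $O(n_0)$ rebuild cost to those updates yields $O(1)$ amortized update time, and the space bound $O(n)$ holds throughout. To upgrade ``amortized'' to ``worst-case'', de-amortize exactly as for incremental global rebuilding of dynamic arrays: begin the rebuild early, spread its $O(n_0)$ work over the next $\Theta(n_0)$ updates while the old structure keeps answering queries and updates, applying each incoming update to both copies, and switch over once the new copy is complete. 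This costs only a constant factor in both time and space. \textbf{The main obstacle} I anticipate is precisely this de-amortization bookkeeping: one must argue that the rebuild finishes before~$n$ has drifted far enough to require another rebuild, and that updates arriving mid-rebuild are correctly replayed into the partially built structure (with cascades as in the third step) so that, at the handover moment, the new structure is a correct HALT structure on the then-current~$S$. The cascade-counting in the third step is conceptually simple but also deserves care, since a careless count across levels could suggest a blow-up --- the point being that bounded branching times bounded depth is bounded.
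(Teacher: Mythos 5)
Your proposal is correct and follows essentially the same route as the paper's proof: you note the lookup table is static, observe that a single update cascades through the three levels of sorted linked lists with bounded branching (delete-and-reinsert for next-level weight changes) so that only $O(1)$ Fact-level operations and $O(1)$ adapter rewrites occur, and then restore the invariant $n_0/2 \le n \le 2n_0$ by global rebuilding de-amortized as for dynamic arrays. Your explicit bounded-branching-times-bounded-depth count ($\le 2^3$) is just a more spelled-out version of the paper's ``at most $O(1)$ update operations,'' and your flagged concern about mid-rebuild replay is handled by exactly the standard incremental-rebuilding technique the paper invokes.
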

\section{A Hardness Result on the DPSS Problem with Float Item Weights}

We now prove Theorem~\ref{thm:hardness}. Let {\em DPSS-ALG} be a deletion-only DPSS  algorithm for float item weights with pre-processing time $\Pt(N)$, query time $\Qt(N)$, and deletion time $\Dt(N)$. As a black box, we apply the algorithm to execute Integer Sorting.

Consider a set of $N$ integers $I = \{a_1, \ldots, a_N\}$, each of which is represented by one word of $d$ bits. We can assume w.l.o.g.~that the integers are distinct\footnote{To achieve distinctness, to each integer, we can append a unique ID with a word in its binary representation.}.
The integers in $I$ can be sorted in descending order by the following algorithm:
\begin{itemize}[leftmargin = *]
\item for each integer $a_i \in I$, create an item $x_i$ with weight $w(x_i) = 2^{a_i}$, represented by a float number;
\item initialize $S$ to be the set of all these $N$ items; 
\item initialize an empty linked list,~$R$, of the integers in~$I$, which is maintained to be sorted, in descending order, by the Insertion Sort~\cite{gill2019comparative} algorithm; 
\item initialize {\em DPSS-ALG} on $S$;
\item while $S$ is not empty, perform the following:
\begin{itemize}
\item {\em repeatedly} invoke {\em DPSS-ALG} on $S$ to perform a PSS query with parameters $(1, 0)$ until the sampling result $T \not\eq \emptyset$;
\item let $x^*$ be the item in $T$ with the {\em largest} weight and let $w(x^*)=2^{a^*}$;
\item invoke {\em DPSS-ALG} to delete $x^*$ from $S$;
\item invoke Insertion Sort to insert the weight exponent, $a^*$, of $x^*$ to $R$ (in descending order); 
\end{itemize}
\item return $R$ as the sorted list of all the integers in $I$ (in descending order);
\end{itemize}

\vspace{2mm}
\noindent
{\bf Correctness.}
When the above algorithm terminates, $R$ contains all the $N$ integers in $I$.
As $R$ is maintained by the Insertion Sort algorithm, $R$ is a correct sorted list of $I$. 

\vspace{2mm}
\noindent
{\bf Running Time Analysis.}
Next we analyse the overall expected running time 
of the above integer sorting algorithm. 
First, the construction cost of the item set $S$ is bounded by $O(N)$ and the pre-processing cost of {\em DPSS-ALG} is $t_{p}(N)$. Next, we focus on the running time in each iteration in the while-loop.
Denote the item set at the start of the $i^{\text{th}}$ iteration by $S_i$, where $i \in \{1, \ldots, N\}$ and $S_1 = S$.

\begin{lemma}\label{lmm:trial-num}
In the $i^\text{th}$ iteration, in expectation, at most two PSS queries with parameters $(1,0)$ on $S_i$ are invoked to obtain a non-empty subset sampling result,~$T$. 
\end{lemma}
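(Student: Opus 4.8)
The plan is to show that a single PSS query with parameters $(1,0)$ on $S_i$ already returns a non-empty sample with probability strictly more than $1/2$; the lemma then follows because the number of independent trials until the first non-empty result is a geometric random variable whose expectation is $1/\Pr[T \ne \emptyset] < 2$.

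First I would unfold the sampling probabilities for the parameters $(1,0)$. By definition each item $x \in S_i$ is placed in $T$ independently with probability $p_x(1,0) = \min\{ w(x)/W_{S_i}(1,0),\, 1\}$, where $W_{S_i}(1,0) = \sum_{y \in S_i} w(y)$. Since $w(x) \le \sum_{y \in S_i} w(y)$ for every $x \in S_i$, the truncation at $1$ is never active, so $p_x(1,0) = w(x)/\sum_{y \in S_i} w(y)$.

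Next I would exploit the structure of the weights chosen by the reduction. Recall that $w(x_j) = 2^{a_j}$ and that (after appending a unique ID to each input integer) the exponents $a_j$ are pairwise distinct. Let $x^* \in S_i$ be the item of largest weight, say $w(x^*) = 2^{a^*}$. Then $\sum_{y \in S_i} w(y)$ is a sum of \emph{distinct} powers of two, each at most $2^{a^*}$, so $\sum_{y \in S_i} w(y) \le \sum_{t=0}^{a^*} 2^{t} = 2^{a^*+1} - 1 < 2 \cdot 2^{a^*} = 2\,w(x^*)$. Consequently $p_{x^*}(1,0) = w(x^*)/\sum_{y \in S_i} w(y) > 1/2$.

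Finally, $\Pr[T \ne \emptyset] \ge \Pr[x^* \in T] = p_{x^*}(1,0) > 1/2$, and this bound holds identically and independently for each query invocation in the $i$-th iteration. Hence the number of PSS queries needed to obtain a non-empty $T$ is stochastically dominated by a geometric random variable with success probability greater than $1/2$, whose expectation is strictly less than $2$; in particular it is at most two in expectation, proving the lemma. I do not expect a real obstacle here: the only point requiring care is the observation that distinctness of the input integers forces $\sum_{y \in S_i} w(y) < 2\,w(x^*)$, which is precisely what guarantees that one query succeeds with constant probability.
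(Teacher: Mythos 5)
Your proposal is correct and follows essentially the same route as the paper: with parameters $(1,0)$ the truncation never bites, and since the weights are distinct powers of two the largest item carries more than half the total mass, so it (and hence a non-empty $T$) appears with probability at least $\tfrac{1}{2}$, giving at most two expected trials. The paper states the same argument slightly more tersely (lower-bounding $w(x^*)$ by $\tfrac{1}{2}W_{S_i}(1,0)$ rather than spelling out the geometric-sum bound), but the content is identical.
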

\begin{proof}
To prove this lemma, it suffices to show that the {\em largest item} (i.e., with the largest item weight) in $S_i$ will be selected in the sampling result $T$ with probability at least $\frac{1}{2}$. 
As a result, with at most $2$ trials, in expectation, the largest item would be sampled in $T$, and hence, $T$ is non-empty. 

To see this, first, observe that with the query parameters $(1,0)$, we have $W_{S_i}(1,0) = \sum_{x\in S_i} w(x)$. 
Second, by the construction of $S$, the weights of all the items are power-of-two numbers with {\em distinct exponents}. 
Therefore, it can be verified that the largest item in $S_i$ must have weight $\geq \frac{1}{2} W_{S_i}(1,0)$, and hence, the corresponding item has probability at least~$\frac{1}{2}$ to be sampled.
\end{proof}

\begin{lemma}\label{lmm:expected-size}
For a PSS query with parameters $(1, 0)$ on $S_i$, the expected result size $\mu_{S_i}(1,0) = 1$. 
\end{lemma}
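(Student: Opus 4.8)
The plan is to show that, with parameters $(\alpha,\beta)=(1,0)$, the truncation in the per-item sampling probability is never active, so the expected result size collapses to a telescoping ratio equal to $1$.

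First I would unfold the definitions. By construction of the reduction, $W_{S_i}(1,0) = 1\cdot\sum_{x\in S_i} w(x) + 0 = \sum_{x\in S_i} w(x)$, and $\mu_{S_i}(1,0) = \sum_{x\in S_i} p_x(1,0)$ with $p_x(1,0) = \min\bigl\{\,w(x)/W_{S_i}(1,0),\; 1\,\bigr\}$. Every item has weight $w(x_i) = 2^{a_i}$, a strictly positive float, so $W_{S_i}(1,0) > 0$ and the probabilities are well defined. Moreover, for every $x\in S_i$ we have $w(x) \le \sum_{y\in S_i} w(y) = W_{S_i}(1,0)$ (with equality only when $|S_i| = 1$), hence $w(x)/W_{S_i}(1,0) \le 1$; thus the minimum is always attained by its first argument and $p_x(1,0) = w(x)/W_{S_i}(1,0)$.

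Summing over $S_i$ then yields $\mu_{S_i}(1,0) = \sum_{x\in S_i} w(x)/W_{S_i}(1,0) = \bigl(\sum_{x\in S_i} w(x)\bigr)/W_{S_i}(1,0) = 1$, as claimed. There is essentially no obstacle here: the only degenerate case to rule out is $W_{S_i}(1,0) = 0$, which cannot occur because the weights are strictly positive; and, unlike Lemma~\ref{lmm:trial-num}, this statement uses nothing about the distinctness of the exponents $a_i$ — it holds for any item set with positive weights once $\alpha=1$ and $\beta=0$.
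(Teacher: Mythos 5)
Your proof is correct and takes essentially the same approach as the paper: unfold the definitions, observe that $W_{S_i}(1,0)=\sum_{x\in S_i}w(x)$, and see that the sum of per-item probabilities telescopes to $1$. The paper's proof is terser and implicitly assumes the $\min$ is never truncating; you make that step explicit (each $w(x)\le W_{S_i}(1,0)$ since all weights are positive), which is a small but worthwhile addition.
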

\begin{proof}
It follows from the fact that $W_{S_i}(1,0) = \sum_{x\in S_i} w(x)$ and $\mu_{S_i}(1,0) = \sum_{x\in S_i} \frac{w(x)}{W_{S_i}(1,0)}$. 
\end{proof}

By Lemmas~\ref{lmm:trial-num} and~\ref{lmm:expected-size}, 
in the $i^{\text{th}}$ iteration, the running time for the steps 
of sampling a non-empty subset~$T$ and finding the largest item~$x^*$ in~$T$ is bounded by~$O(t_q(N))$ in expectation.
Moreover, {\em DPSS-ALG}  deletes $x^*$ from $S_i$ in $t_{del}(N)$ worst-case time.  
As a result, summing over all $N$ iterations, the overall expected cost for these steps is bounded by $O(N\cdot (t_q(N) + t_{del}(N)))$. 

Next, we analyse the running time for the Insertion Sort.

\begin{lemma}
    \label{lmm:insertion-sort}
    The expected running time for the Insertion Sort is $O(N)$.
\end{lemma}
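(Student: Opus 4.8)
The plan is to charge the total Insertion~Sort cost to the number of ``badly ordered'' pairs of deleted items, and then to bound the expectation of that count using the fact that all weights are distinct powers of two.

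\emph{Step 1 (accounting).} I would maintain $R$ as a sorted (descending) doubly linked list and, when inserting the exponent $a^\ast$ at iteration $i$, scan from the end of $R$ holding the smallest values (a standard variant of Insertion~Sort, and the natural one here since deletions occur in roughly descending weight order). This insertion costs $O(1 + \ell_i)$, where $\ell_i$ is the number of exponents already in $R$ that are smaller than $a^\ast$ --- equivalently, the number of items deleted before iteration $i$ whose weight is smaller than $w(x^\ast)$. Summing, the Insertion~Sort cost is $O(N + \sum_{i=1}^{N}\ell_i)$, and $\sum_i \ell_i$ is precisely the number of ordered pairs of items $(y,y')$ with $w(y) < w(y')$ such that $y$ is deleted before $y'$. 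Grouping this sum by the lighter item, we get $\sum_i \ell_i = \sum_{k=1}^{N} L_k$, where $L_k$ denotes the number of items heavier than the $k$-th heaviest item that are still alive at the moment that the $k$-th heaviest item gets deleted. Thus it suffices to prove $\mathbb{E}[L_k] = O(1)$ for every $k$.

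\emph{Step 2 (the key estimate).} Fix the item $z$ of rank $k$, with weight $2^{b_k}$. Call an iteration a \emph{relevant step} if the item it deletes is $z$ itself or some item heavier than $z$; between relevant steps $z$ stays alive and the number of heavier alive items is unchanged, and $z$ is eventually deleted at a relevant step. I claim that, conditioned on the current item set $S_i$ with $z \in S_i$ and exactly $\ell \ge 1$ items of $S_i$ heavier than $z$, the probability that $z$ is the item deleted at this relevant step is at most $2^{1-\ell}$. Indeed, the $\ell$ heavier items have distinct power-of-two weights all exceeding $2^{b_k}$, so their total weight is at least $2^{b_k}(2 + 4 + \dots + 2^{\ell}) = 2^{b_k}(2^{\ell+1}-2)$; hence $W_{S_i}(1,0) > 2^{b_k}\cdot 2^{\ell}$, and in a single $(1,0)$-query the probability that $z$ is sampled is $2^{b_k}/W_{S_i}(1,0) < 2^{-\ell}$. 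By the same distinct-powers fact (cf.\ the argument in Lemma~\ref{lmm:trial-num}) the heaviest alive item has weight more than $\frac12 W_{S_i}(1,0)$, so in a single query the probability that some item heavier than $z$ is sampled exceeds $\frac12$. Conditioning on the query being nonempty and on the step being relevant rescales these two disjoint event-probabilities by the same factor, so the conditional probability that $z$ is the deleted item is at most $(2^{-\ell})/(1/2) = 2^{1-\ell}$.

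\emph{Step 3 (assembling).} Following only relevant steps, the number of heavier alive items starts at $k-1$ and drops by exactly $1$ at each relevant step that does not delete $z$. Hence $L_k = \ell$ forces $z$ to survive down to the unique relevant step with exactly $\ell$ heavier items remaining and then be deleted there, which by Step~2 has probability at most $2^{1-\ell}$ (for $\ell \ge 1$; the case $L_k = 0$ contributes nothing). Therefore $\mathbb{E}[L_k] \le \sum_{\ell \ge 1}\ell\,2^{1-\ell} = 4$, so $\mathbb{E}[\sum_i \ell_i] = \sum_{k=1}^{N}\mathbb{E}[L_k] \le 4N$, and the expected Insertion~Sort running time is $O(N)$. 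I expect Step~2 to be the crux: a merely constant bound on the per-step deletion probability would only yield an $O(N^2)$ bound, so the \emph{geometric} decay in $\ell$ is essential, and it is exactly here that the distinct-power-of-two weights (rather than arbitrary weights) do the work; one also has to be slightly careful that conditioning on ``nonempty query / relevant step'' does not disturb the ratio estimate.
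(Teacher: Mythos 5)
Your proof is correct, and its core probability estimate is identical to the paper's: conditioned on the current set, an item with $\ell$ heavier surviving items is the one deleted with probability at most $2^{1-\ell}$. The only difference is dual bookkeeping of the same inversion count $\sum_i \#\text{Swap}_i$: you fix an \emph{item} $z$ of rank $k$ and bound $\mathbb{E}[L_k]\le 4$ by tracking $z$ through ``relevant steps,'' whereas the paper fixes an \emph{iteration} $i$ and bounds $\mathbb{E}[\text{rank}_{S_i}(x^*)\mid S_i] = O(1)$ directly, using $\Pr[x^*=y_j]\le 2\cdot 2^{-(j-1)}$ where $y_j$ is the $j$-th heaviest item of $S_i$. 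Since $\sum_i \text{rank}_{S_i}(x^*) = \sum_k L_k$, these are the same sum taken in two orders; the paper's per-iteration version is slightly leaner because it avoids the ``relevant step'' machinery and the associated conditioning, while your per-item version makes the geometric decay in $\ell$ (and why distinct power-of-two weights matter) somewhat more explicit.
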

\begin{proof}
The order of the integers in~$I$ is identical to the order of their corresponding weights.
To ease the discussion, we treat the sorted linked list~$R$ of the integers in~$I$ as the sorted linked list of their corresponding items.
It is known that the overall cost for Insertion Sort is bounded by
$O(N + \sum_{i=1}^N \#\text{Swap}_i)$, 
where $\#\text{Swap}_i$ is the number of {\em swaps} for the item $x^*$ which is inserted to the sorted linked list $R$ from the back in the $i^\text{th}$ iteration.
Equivalently, $\#\text{Swap}_i$ is the 
number of items that have weights {\em smaller than} $w(x^*)$ and were inserted to $R$ {\em before} the $i^\text{th}$ iteration. 

Define $\#\text{FutureSwap}_i$ as 
the number of  items that have weights {\em greater than} $w(x^*)$ and will be inserted to $R$ {\em after} the $i^\text{th}$ iteration.
Observe that $\sum_{i=1}^N \#\text{Swap}_i = \sum_{i=1}^N \#\text{FutureSwap}_i$.
Furthermore, define the {\em rank} of an item $x \in S_i$, denoted by $\text{rank}_{S_i}(x)$, as the number of items that have weight greater than~$w(x)$ in~$S_i$.
Then, we have 
$\#\text{FutureSwap}_i = \text{rank}_{S_i}(x^*)$, where $x^*$ is the item inserted to~$R$ in the $i^\text{th}$ iteration. 
This follows from the definition of $\#\text{FutureSwap}_i$ and the fact that $S_i \setminus \{x^*\}$ is exactly the set of items inserted to $R$ in future iterations. 
As a result, it suffices to bound the expectation of $\text{rank}_{S_i}(x^*)$.
\begin{claim}
In the $i^\text{th}$ iteration, 
consider the largest item $x^*$ in the sample set $T \subseteq S_i$. We have that 
 $\mathbb{E}[\text{rank}_{S_i}(x^*)] \in O(1)$.
\end{claim}
\begin{nestedproof}
As discussed earlier, the item weights are power-of-two numbers with distinct exponents.
We have the following observations.
First, the largest item $y_1 \in S_i$ would be sampled in the PSS query on $S_i$ with parameter $(1,0)$ with probability at least $\frac{1}{2}$ and at most $1$. 
Thus, 
the probability of the $j^\text{th}$ largest item $y_j$ in $S_i$ to be sampled is at most $\frac{1}{2^{j-1}}$, for all $j \in \{1, \ldots, |S_i|\}$.
Second, the $j^\text{th}$ largest item $y_j$ in $S_i$ has  $\text{rank}_{S_i}(y_j) = j - 1$. 

Moreover, the probability that the $j^\text{th}$ largest item $y_j$ in $S_i$ is the largest item in the sample set $T$ is at most the probability of $y_j$ being sampled conditioned on $T$ being non-empty.
Therefore, the expected rank of $x^*$ satisfies: 
\begin{align*}
\mathbb{E}[\text{rank}_{S_i}(x^*)] 
&\leq \sum_{j=1}^{|S_i|} \text{rank}_{S_i}(y_j) \cdot \text{Pr[ $y_j$ being sampled $\mid$ $T$ is non-empty]} \\
&= \sum_{j=1}^{|S_i|} \text{rank}_{S_i}(y_j) \cdot \frac{\text{Pr[ $y_j$ being sampled $\land$ $T$ is non-empty]}}{\text{Pr[ $T$ is non-empty ]}} \\
&\leq \sum_{j=1}^{|S_i|} (j-1) \cdot \frac{2}{2^{j-1}} \in O(1)\,,
\end{align*}
where the last inequality is by the fact that $T$ is non-empty with probability at least $\frac{1}{2}$.
\end{nestedproof}

Therefore, the running time of the Insertion Sort over all the $N$ iterations is bounded by $O(N)$ in expectation.
This completes the proof of Lemma~\ref{lmm:insertion-sort}.
\end{proof}
Putting together all of the above analyses, the total expected running time of the algorithm is $O(N)+t_p(N)+O(N \cdot (t_q(N)+t_{del}(N)))+O(N)$, which is $t_p(N) + O(N \cdot (t_q(N) + t_{del}(N)))$.

Theorem~\ref{thm:hardness} thus follows.

\section{Conclusion}

In this paper, we study the Dynamic Parameterized Subset Sampling (DPSS) problem in the Word RAM model. 
Our first main result is an optimal DPSS algorithm. 
We propose our novel data structure HALT that achieves the optimal bounds.
Our second main result showing that if the item weights in the DPSS problem are allowed to be float numbers, then 
an deletion-only optimal DPSS algorithm can be used to solve Integer Sorting over the whole range of bit-length in $O(N)$ expected time, where $N$ is the number of integers to sort.
Finally, 
our third main result is a new efficient method for sampling variates from the Truncated Geometric distribution in $O(1)$ expected time in the Word RAM model.
We expect considerable application of this technique for efficient random variate realization.

\section*{Acknowledgements}

In this work, Junhao Gan is supported in part by Australian Research Council (ARC) Discovery Early Career Research Award (DECRA) DE190101118.
Seeun William Umboh is partially supported by 
ARC Training Centre in Optimisation Technologies, Integrated Methodologies, and Applications (OPTIMA).
Hanzhi Wang is supported by the VILLUM Foundation grant 54451.

\newpage

\bibliographystyle{ACM-Reference-Format}
\bibliography{main}
\end{document}